\newtheorem{theorem}{Theorem}
\newtheorem{lemma}{Lemma}
\theoremstyle{definition}
\newtheorem{proposition}{Proposition}
\newtheorem{definition}{Definition}
\newtheorem{remark}{Remark}
\newtheorem{assumption}{Assumption}
\newtheorem{example}{Example}
\newtheorem{problem}{Problem}
\newtheorem{property}{Property}
\title{\LARGE \bf
Decentralized Abstractions and Timed Constrained Planning of a General Class of Coupled Multi-Agent Systems}
\author{Alexandros Nikou, Shahab Heshmati-alamdari, Christos Verginis and Dimos V. Dimarogonas% <-this % stops a space
	\thanks{Alexandros Nikou, Christos Verginis and Dimos V. Dimarogonas are with the ACCESS Linnaeus Center, School of Electrical Engineering, KTH Royal Institute of Technology, SE-100 44, Stockholm,
		Sweden and with the KTH Center for Autonomous Systems. Email: {\tt\small \{anikou, cverginis, dimos\}@kth.se}. Shahab  Heshmati-alamdari is with the Control Systems Lab, Department of Mechanical Engineering, National Technical University of Athens, 9 Heroon Polytechniou Street, Zografou 15780, Athens, Greece. Email: {\tt\small \{shahab\}@mail.ntua.gr}. This work was supported by the H2020 ERC Starting Grant BUCOPHSYS, the Swedish Research Council (VR), the Swedish Foundation for Strategic Research, the Knut och Alice Wallenberg Foundation, the European Union's Horizon 2020 Research and Innovation Programme under the Grant Agreement No. 644128 (AEROWORKS) and the EU H2020 Research and Innovation Programme under GA No. 731869 (Co4Robots).}
}
\begin{document}

\maketitle
\thispagestyle{empty}
\pagestyle{empty}

%%%%%%%%%%%%%%%%%%%%%%%%%%%%%%%%%%%%%%%%%%%%%%%%%%%%%%%%%%%%%%%%%%%%%%%%%%%%%%%%
\begin{abstract}
This paper presents a fully automated procedure for controller synthesis for a general class of multi-agent systems under coupling constraints. Each agent is modeled with dynamics consisting of two terms: the first one models the coupling constraints and the other one is an additional bounded control input. We aim to design these inputs so that each agent meets an individual high-level specification given as a Metric Interval Temporal Logic (MITL). Furthermore, the connectivity of the initially connected agents, is required to be maintained. First, assuming a polyhedral partition of the workspace, a novel decentralized abstraction that provides controllers for each agent that guarantee the transition between different regions is designed. The controllers are the solution of a Robust Optimal Control Problem (ROCP) for each agent. Second, by utilizing techniques from formal verification, an algorithm that computes the individual runs which provably satisfy the high-level tasks is provided. Finally, simulation results conducted in MATLAB environment verify the performance of the proposed framework. \\
\end{abstract}

\begin{keywords}
	multi-agent systems, cooperative control, hybrid systems.
\end{keywords}

\section{Introduction}

Cooperative control of multi-agent systems has traditionally focused on designing distributed control laws in order to achieve global tasks such as consensus and formation control, and at the same time fulfill properties such as network connectivity and collision avoidance. Over the last few years, the field of control of multi-agent systems under temporal logic specifications has been gaining attention. In this work, we aim to additionally introduce specific time bounds into these tasks, in order to include specifications such as: ``Robot 1 and robot 2 should visit region $A$ and $B$ within 4 time units respectively or ``Both robots 1 and 2 should periodically survey regions $A_1$, $A_2$, $A_3$, avoid region $X$ and always keep the longest time between two consecutive visits to $A_1$ below 8 time units".

The qualitative specification language that has primarily been used to express the high-level tasks is Linear Temporal Logic (LTL) (see, e.g., \cite{loizou_2004, muray_2010_receding}). There is a rich body of literature containing algorithms for verification and synthesis of multi-agent systems under temporal logic specifications (\cite{guo_2015_reconfiguration, frazzoli_vehicle_routing, zavlanos_2016_multi-agent_LTL}). A three-step hierarchical procedure to address the problem of multi-agent systems under LTL specifications is described as follows (\cite{fainekos_girard_2009_temporal, belta_2010_product_system, belta_cdc_reduced_communication}): first the dynamics of each agent is abstracted into a Transition System (TS). Second, by invoking ideas from formal verification, a discrete plan that meets the high-level tasks is synthesized for each agent. Third, the discrete plan is translated into a sequence of continuous time controllers for the original continuous dynamical system of each agent.

Controller synthesis under timed specifications has been considered in \cite{liu_MTL, murray_2015_stl, topcu_2015, fainekos_mtl_2015_robot, baras_MTL_2016_new}. However, all these works are restricted to single agent planning and are not extendable to multi-agent systems in a straightforward way. The multi-agent case has been considered in \cite{frazzoli_MTL}, where the vehicle routing problem was addressed, under Metric Temporal Logic (MTL) specifications. The corresponding approach does not rely on automata-based verification, as it is based on a construction of linear inequalities and the solution of a Mixed-Integer Linear Programming (MILP) problem. 

An automata-based solution was proposed in our previous work \cite{alex_2016_acc}, where MITL formulas were introduced in order to synthesize controllers such that every agent fulfills an individual specification and the team of agents fulfills a global specification. Specifically, the abstraction of each agent's dynamics was considered to be given and an upper bound of the time that each agent needs to perform a transition from one region to another was assumed. Furthermore, potential coupled constraints between the agents were not taken into consideration. Motivated by this, in this work, we aim to address the aforementioned issues. We assume that the dynamics of each agent consists of two parts: the first part is a nonlinear function representing the coupling between the agent and its neighbors, and the second one is an additional control input which will be exploited for high-level planning. Hereafter, we call it a free input. A decentralized abstraction procedure is provided, which leads to an individual Weighted Transition System (WTS) for each agent and provides a basis for high-level planning.

Abstractions for both single and multi-agent systems have been provided e.g. in \cite{alur_2000_discrete_abstractions, belta_2004_abstraction, helwa2014block, habets_2006_reachability, zamani_2012_symbolic, liu_2016_abstraction, abate_2014_finite_abstractions, pola_2016_symbolic, boskos_cdc_2015}. In this paper, we deal with the complete framework of both abstractions and controller synthesis of multi-agent systems. We start from the dynamics of each agent and we provide controllers that guarantee the transition between the regions of the workspace, while the initially connected agents remain connected for all times. The decentralized controllers are the solution of an ROCP. Then, each agent is assigned an individual task given as an MITL formulas. We aim to synthesize controllers, in discrete level, so that each agent performs the desired individual task within specific time bounds as imposed by the MITL formulas. In particular, we provide an automatic controller synthesis method of a general class of coupled multi-agent systems under high-level tasks with timed constraints. Compared to existing works on multi-agent planning under temporal logic specifications, the proposed approach considers dynamically coupled multi-agent systems under timed temporal specifications in a distributed way. 

In our previous work \cite{alex_acc_8_pages}, we treated a similar problem, but the under consideration dynamics were linear couplings and connectivity maintenance was not guaranteed by the proposed control scheme. Furthermore, the procedure was partially decentralized, due to the fact that a product Wighted Transition System (WTS) was required, which rendered the framework computationally intractable. To the best of the authors' knowledge, this is the first time that a fully automated framework for a general class of multi-agent systems consisting of both constructing purely decentralized abstractions and conducting timed temporal logic planning is considered.

This paper is organized as follows. In Section \ref{sec: preliminaries} a description of the necessary mathematical tools, the notations and the definitions are given. Section \ref{sec: prob_formulation} provides the dynamics of the system and the formal problem statement. Section \ref{sec: solution} discusses the technical details of the solution. Section \ref{sec: simulation_results} is devoted to a simulation example. Finally, conclusions and future work are discussed in Section \ref{sec: conclusions}.

\section{Notation and Preliminaries} \label{sec: preliminaries}

\subsection{Notation}

We denote by $\mathbb{R}, \mathbb{Q}_+, \mathbb{N}$ the set of real, nonnegative rational and natural numbers including 0, respectively. $\mathbb{R}^{n}_{\ge 0}$ and $\mathbb{R}^{n}_{> 0}$ are the sets of real $n$-vectors with all elements nonnegative and positive, respectively. Define also $\mathbb{T}_{\infty} = \mathbb{T} \cup \{\infty\}$ for a set $\mathbb{T} \subseteq \mathbb{R}$. Given a set $S$, denote by $|S|$ its cardinality, by $S^N = S \times \dots \times S$ its $N$-fold Cartesian product, and by $2^S$ the set of all its subsets. Given the sets $S_1, S_2$, their \emph{Minkowski addition} is defined by $S_1 \oplus S_2 = \{s_1 + s_2 : s_1 \in S_1, s_2 \in S_2\}$. $I_n \in \mathbb{R}^{n \times n}$ stands for the identity matrix. The notation $\|x\|$ is used for the Euclidean norm of a vector $x \in \mathbb{R}^n$. $\|A\| = \text{max} \{\|Ax\| : \|x\| = 1\}$ stands for the induced norm of a matrix $A \in \mathbb{R}^{n \times n}$. $\mathcal{B}(c, \underline{r}) = \{x \in \mathbb{R}^2: \|x-c\|\le \underline{r} \}$ is the disk of center $c \in \mathbb{R}^2$ and $\underline{r} \in \mathbb{R}_{> 0}$. The absolute value of the maximum singular value and the absolute value of the minimum eigenvalue of a matrix $A \in \mathbb{R}^{n \times n}$ are denoted by $\sigma_{\max}(A), \lambda_{\min}(A)$, respectively. The indexes $i$ and $j$ stand for agent $i$ and its neighbors (see Sec. \ref{sec: prob_formulation} for the definition of neighbors), respectively; $\mu, z \in \mathbb{N}$ are indexes used for sequences and sampling times, respectively.

\begin{definition}
	Consider two sets $A, B \subseteq \mathbb{R}^n$. Then, the \emph{Pontryagin difference} is defined by:
	\begin{equation*}
	A \sim B = \{x \in \mathbb{R}^n: x+y \in A, \forall \ y \in B\}.
	\end{equation*}
\end{definition}

\begin{definition} \label{def:class_K}
	(\cite{khalil_nonlinear_systems}) A continuous function $\alpha: [0, a) \to \mathbb{R}_{\ge 0}$ is said to belong to \emph{class} $\mathcal{K}$, if it is strictly increasing and $\alpha(0) = 0$. It is said to belong to class $\mathcal{K}_{\infty}$ if $a = \infty$ and $\alpha(r) \to \infty$, as $r \to \infty$.
\end{definition}

\begin{definition} \label{def:class_KL}
	(\cite{khalil_nonlinear_systems}) A continuous function $\beta: [0, a) \times \mathbb{R}_{\ge 0} \to \mathbb{R}_{\ge 0}$ is said to belong to \emph{class} $\mathcal{KL}$, if:
	\begin{itemize}
		\item For each fixed $s$, $\beta(r, s) \in \mathcal{K}$ with respect to $r$.
		\item For each fixed $r$, $\beta(r, s)$ is decreasing with respect to $s$ and $\beta(r, s) \to 0$, at $s \to \infty$.
	\end{itemize}
\end{definition}

\begin{definition} \label{def:ISS}
	(\cite{sontag_2008_ISS}) A nonlinear system $\dot{x} = f(x,u)$ with initial condition $x(t_0)$ is said to be \emph{Input to State Stable (ISS)} if there exist functions $\beta \in \mathcal{KL}$ and $\sigma \in \mathcal{K}_{\infty}$ such that:
	\begin{equation*}
	\|x(t)\| \le \beta(\|x(t_0)\|, t)+\sigma(\|u\|).
	\end{equation*}
\end{definition}

\begin{definition} \label{def:ISS_lyapunov}
	(\cite{sontag_2008_ISS}) A Lyapunov function $V(x,u)$ for the nonlinear system $\dot{x} = f(x,u)$ with initial condition $x(t_0)$ is said to be \emph{ISS-Lyapunov function} if there exist functions $\alpha, \sigma \in \mathcal{K}_\infty$ such that:
	\begin{equation} \label{eq:lyapunov_iss}
	\dot{V}(x,u) \le - \alpha(\|x\|)+\sigma(\|u\|), \forall x, u.
	\end{equation}
\end{definition}

\begin{theorem}
	A nonlinear system $\dot{x} = f(x,u)$ with initial condition $x(t_0)$ is said to be ISS if and only if it admits a ISS-Lyapunov function.
\end{theorem}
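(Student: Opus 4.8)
The plan is to establish the two implications of the biconditional separately, treating the ISS-Lyapunov function as additionally satisfying the standard sandwich bounds $\alpha_1(\|x\|) \le V(x) \le \alpha_2(\|x\|)$ for some $\alpha_1, \alpha_2 \in \mathcal{K}_{\infty}$, which accompany the dissipation inequality \eqref{eq:lyapunov_iss}. The sufficiency direction (existence of an ISS-Lyapunov function implies ISS) is elementary and constructive, whereas the necessity direction (ISS implies existence of such a function) is the deep converse result of Sontag and Wang and carries essentially all the difficulty.

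For sufficiency, I would first recast the dissipation inequality into the equivalent \emph{gain} form: since $\dot{V}(x,u) \le -\alpha(\|x\|) + \sigma(\|u\|)$, choosing $\rho := \alpha^{-1} \circ (2\sigma) \in \mathcal{K}_{\infty}$ yields the implication $\|x\| \ge \rho(\|u\|) \Rightarrow \dot{V} \le -\tfrac{1}{2}\alpha(\|x\|)$. Composing with the upper bound $V \le \alpha_2(\|x\|)$ gives $\dot{V} \le -\alpha_3(V)$ whenever $V$ exceeds a level determined by $\|u\|$, where $\alpha_3 := \tfrac{1}{2}\alpha \circ \alpha_2^{-1}$ is positive definite. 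I would then apply the comparison lemma to the scalar differential inequality satisfied by $V(x(t))$: solving $\dot{y} = -\alpha_3(y)$ produces a class $\mathcal{KL}$ function, and combining the decaying and input-driven regimes with the bounds $\alpha_1, \alpha_2$ yields an estimate of the form $\|x(t)\| \le \beta(\|x(t_0)\|, t - t_0) + \sigma'(\|u\|)$ for suitable $\beta \in \mathcal{KL}$ and $\sigma' \in \mathcal{K}_{\infty}$, which is precisely the ISS property of Definition \ref{def:ISS}.

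For necessity, the strategy is the converse-Lyapunov construction of \cite{sontag_2008_ISS}. Starting from the ISS estimate, I would first reduce ISS to a robust asymptotic stability property by treating the input as a disturbance constrained to a state-dependent compact set, i.e. by associating to the system the differential inclusion $\dot{x} \in \{ f(x,u) : \|u\| \le \chi(\|x\|) \}$ for an appropriate $\chi \in \mathcal{K}_{\infty}$ whose graph lies strictly below the ISS gain. The ISS estimate guarantees that the origin of this inclusion is uniformly (robustly) globally asymptotically stable. I would then invoke a smooth converse Lyapunov theorem for such inclusions to obtain a proper, positive-definite $V$ with sandwich bounds satisfying $\langle \nabla V(x), f(x,u)\rangle \le -\alpha_3(\|x\|)$ for every $u$ admissible in the inclusion; a margin argument finally converts this into the additive dissipation form \eqref{eq:lyapunov_iss} valid for all $u$, establishing that $V$ is an ISS-Lyapunov function.

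The main obstacle is the necessity direction, and within it the converse-Lyapunov step. The delicate points are (i) selecting the gain $\chi$ so that the induced inclusion is simultaneously robustly stable and rich enough to recover the original dynamics, and (ii) producing a \emph{smooth} proper, positive-definite $V$ for a nonsmooth differential inclusion, which requires the machinery of converse theorems for robust stability rather than a direct trajectory-based integral construction. Once the smooth $V$ is available, the passage back to the form $-\alpha(\|x\|) + \sigma(\|u\|)$ via a class $\mathcal{K}_{\infty}$ margin is routine.
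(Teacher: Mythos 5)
The paper does not actually contain a proof of this theorem: its entire ``proof'' is the sentence ``The proof can be found in \cite{sontag_1995_ISS_proofs}'', i.e., it delegates to Sontag and Wang's characterization of ISS. Your sketch is a faithful reconstruction of precisely that delegated argument: sufficiency by converting \eqref{eq:lyapunov_iss} into the gain/implication form and applying the comparison lemma to produce the $\mathcal{KL}$ estimate of Definition \ref{def:ISS}, and necessity by reducing ISS to robust global asymptotic stability of the differential inclusion obtained by letting the input range over a ball of radius $\chi(\|x\|)$, then invoking the smooth converse Lyapunov theorem for such inclusions and recovering the additive dissipation form by a $\mathcal{K}_\infty$ margin. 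So there is no divergence of method to report; you reconstructed the outline of the proof the paper cites, correctly locating all the depth in the converse-Lyapunov step, which you (reasonably, for a sketch) treat as a black box. One of your choices deserves emphasis because it silently repairs the paper: you require the sandwich bounds $\alpha_1(\|x\|) \le V(x) \le \alpha_2(\|x\|)$ with $\alpha_1, \alpha_2 \in \mathcal{K}_\infty$, which Definition \ref{def:ISS_lyapunov} never states. This is not optional. The inequality \eqref{eq:lyapunov_iss} alone does not imply ISS: on $\mathbb{R}$, the input-independent system $\dot{x} = 1$ with $V(x) = -x^3$ satisfies $\dot{V}(x,u) = -3x^2 \le -\alpha(\|x\|) + \sigma(\|u\|)$ for $\alpha(r) = 3r^2/(1+r) \in \mathcal{K}_\infty$ and any $\sigma \in \mathcal{K}_\infty$, yet every trajectory with $u \equiv 0$ diverges, so the system is not ISS. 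Properness and positive definiteness of $V$, implicit in the paper's undefined phrase ``Lyapunov function'' and made explicit by you, are exactly what make the sufficiency direction true.
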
 
\begin{proof}
	The proof can be found in \cite{sontag_1995_ISS_proofs}.
\end{proof}

\subsection{Partitions}

In the subsequent analysis a discrete partition of the workspace will be considered which is formalized through the following definition. 

\begin{definition} \label{def:partition}
	Given a set $S$, we say that a family of sets $\{S_\ell\}_{\ell \in \mathbb{I}}$ forms a partition of $S$ if $S \neq \emptyset$, $\displaystyle \bigcup_{\ell \in \mathbb{I}} S_\ell = S$ and for every $S,S' \in S$ with $S \neq S'$ it holds $S \cap S' = \emptyset$. 
\end{definition}
Hereafter, every region $S_\ell$ of a partition $S$ will be called \textit{region}.

\subsection{Time Sequence, Timed Run and Weighted Transition System}

In this section we include some definitions that are required to analyze our framework.

An infinite sequence of elements of a set $X$ is called an \textit{infinite word} over this set and it is denoted by $\chi = \chi(0)\chi(1) \ldots $ The $z$-th element of a sequence is denoted by $\chi(z)$. For certain technical reasons that will be clarified in the sequel, we will assume hereafter that $\mathbb{T} = \mathbb{Q}_{+}$.

\begin{definition} (\cite{alur1994}) A \textit{time sequence} $\tau = \tau(0) \tau(1) \ldots$ is an infinite sequence of time values $\tau(\mu) \in \mathbb{T} = \mathbb{Q}_{+}$, satisfying the following properties: 1) Monotonicity: $\tau(\mu) < \tau(\mu+1)$ for all $j \geq 0$; 2) Progress: For every $t \in \mathbb{T}$, there exists $\mu \ge 1$, such that $\tau(\mu) > t$.
\end{definition}

An \textit{atomic proposition} $\sigma$ is a statement %over the problem variables and parameters 
that is either True $(\top)$ or False $(\bot)$. % at a given time instance.

\begin{definition} (\cite{alur1994})
	Let $\Sigma$ be a finite set of atomic propositions. A \textit{timed word} $w$ over the set $\Sigma$ is an infinite sequence $w^t = (w(0), \tau(0)) (w(1), \tau(1)) \ldots$ where $w(0) w(1) \ldots$ is an infinite word over the set $2^{\Sigma}$ and $\tau(0) \tau(1) \ldots$ is a time sequence with $\tau(\mu) \in \mathbb{T}, \mu \geq 0$.
\end{definition}

\begin{definition} \label{def: WTS}
A Weighted Transition System (\textit{WTS}) is a tuple $(S, S_0, Act, \longrightarrow, d, \Sigma, L)$ where $S$ is a finite set of states; $S_0 \subseteq S$ is a set of initial states; $Act$ is a set of actions; $\longrightarrow \subseteq S \times Act \times S$ is a transition relation; $d: \longrightarrow \rightarrow \mathbb{T}$ is a map that assigns a positive weight to each transition; $\Sigma$ is a finite set of atomic propositions; and $L: S \rightarrow 2^{\Sigma}$ is a labeling function.
\end{definition}

\begin{definition}\label{run_of_WTS}
A \textit{timed run} of a WTS is an infinite sequence $r^t = (r(0), \tau(0))(r(1), \tau(1)) \ldots$, such that $r(0) \in S_0$, and for all $\mu \geq 1$, it holds that $r(\mu) \in S$ and $(r(\mu), \alpha(\mu), r(\mu+1)) \in \longrightarrow$ for a sequence of actions $\alpha(1) \alpha(2) \ldots$ with $\alpha(\mu) \in Act, \forall \ \mu \geq 1$. The \textit{time stamps} $\tau(\mu), \mu \geq 0$ are inductively defined as: 1) $\tau(0) = 0$; 2) $\displaystyle \tau(\mu+1) =  \tau(\mu) + d(r(\mu), \alpha(\mu), r(\mu+1)), \forall \ \mu \geq 1$. Every timed run $r^t$ generates a \textit{timed word} $w(r^t) = (w(0), \tau(0)) \ (w(1), \tau(1)) \ldots$ over the set $2^{\Sigma}$ where $w(\mu) = L(r(\mu))$, $\forall \ \mu \geq 0$ is the subset of atomic propositions that are true at state $r(\mu)$. 
\end{definition}

\subsection{Metric Interval Temporal Logic (MITL)}

The syntax of \textit{Metric Interval Temporal Logic (MITL)} over a set of atomic propositions $\Sigma$ is defined by the grammar:
\begin{equation*}
\varphi := p \ | \ \neg \varphi \ | \ \varphi_1 \wedge \varphi_2 \ | \ \bigcirc_I \varphi  \ | \ \Diamond_I \varphi \mid \square_I \varphi \mid  \varphi_1 \ \mathcal{U}_I \ \varphi_2,
\end{equation*}
where $\sigma \in \Sigma$, and $\bigcirc$, $\Diamond$, $\square$ and $\mathcal U$ are the next, eventually, always and until temporal operators, respectively; $I = [a, b] \subseteq \mathbb{T}$ where $a, b \in [0, \infty]$ with $a < b$ is a non-empty timed interval. MITL can be interpreted either in continuous or point-wise semantics \cite{pavithra_expressiveness}. In this paper, the latter approach is utilized, since the consideration of point-wise (event-based) semantics is more suitable for the automata-based specifications considered in a discretized state-space. The MITL formulas are interpreted over timed words like the ones produced by a WTS it is given in Def. \ref{run_of_WTS}.

\begin{definition} \label{def:mitl_semantics} (\cite{pavithra_expressiveness}, \cite{quaknine_decidability})
	Given a timed word $w^t = (w(0),\tau(0))(w(1),\tau(1)) \dots$, an MITL formula $\varphi$ and a position $i$ in the timed word, the satisfaction relation $(w^t, i) \models \varphi$, for $\ i \geq 0$ (read $w^t$ satisfies $\varphi$ at position $\mu$) is inductively defined as follows:
	\begin{align*} \label{eq: for1}
	&(w^t, \mu) \models p \Leftrightarrow p \in w(\mu), \\
	&(w^t, \mu) \models \neg \varphi \Leftrightarrow (w^t, i) \not \models \varphi, \\
	&(w^t, \mu) \models \varphi_1 \wedge \varphi_2 \Leftrightarrow (w^t, \mu) \models \varphi_1 \ \text{and} \ (w^t, \mu) \models \varphi_2, \\
	&(w^t, \mu) \models \bigcirc_I \ \varphi \Leftrightarrow (w^t, \mu+1) \models \varphi \notag \\
	&\hspace{35mm} \text{and} \ \tau(\mu+1) - \tau(i) \in I, \\
	&(w^t, \mu) \models \Diamond_I \varphi \Leftrightarrow \exists \mu' \ge \mu, \ \text{such that} \notag \\
	&\hspace{35mm} (w^t, j) \models \varphi, \tau(\mu')-\tau(\mu) \in {I},
	\end{align*}
	\begin{align*}
	&(w^t, \mu) \models \square_I \varphi \Leftrightarrow \forall \mu' \ge \mu, \notag \\
	&\hspace{25mm} \tau(\mu')-\tau(\mu) \in {I} \Rightarrow (w^t, \mu') \models \varphi,  \\
	&(w^t, \mu) \models \varphi_1 \ \mathcal{U}_I \ \varphi_2 \Leftrightarrow \exists \mu' \ge \mu, \ \text{s.t. } (w^t, \mu') \models \varphi_2, \\ 
	&\qquad \tau(\mu')-\tau(\mu) \in I \ \text{and} \ (w^t, \mu'') \models \varphi_1, \forall \ \mu \leq \mu'' < \mu'.
	\end{align*}
	We say that a timed run $r^t = (r(0),\tau(0))(r(1),\tau(1)) \dots$ satisfies the MITL formula $\varphi$ (we write $r^t \models \varphi$) if and only if the corresponding timed word $w(r^t) = (w(0),\tau(0))(w(1),\tau(1)) \dots$ with $w(\mu) = L(r(\mu)), \forall \mu \ge 0$, satisfies the MITL formula ($w(r^t) \models \varphi$).
\end{definition}

It has been proved that MITL is decidable in infinite words and point-wise semantics, which is the case considered here (see \cite{alur_mitl, reynold} for details). The model checking and satisfiability problems are \textit{EXPSPACE}-complete. It should be noted that in the context of timed systems, EXSPACE complexity is fairly low \cite{bouyer_phd}. An example with a WTS and two runs $r_1^t, r_2^t$ that satisfy two MITL formulas can be found in \cite[Section II, page 4]{alex_acc_2017}.

\subsection{Timed B\"uchi Automata} \label{sec: timed_automata}
\textit{Timed B\"uchi Automata (TBA)} were introduced in \cite{alur1994} and in this work, we also partially adopt the notation from \cite{bouyer_phd, tripakis_tba}. %{is this right? did they have a buchi condition already?}
Let $C = \{c_1, \ldots, c_{|C|}\}$ be a finite set of \textit{clocks}. The set of \textit{clock constraints} $\Phi(C)$ is defined by the grammar
\begin{equation*}
	\phi :=  \top \mid \ \neg \phi \ | \ \phi_1 \wedge \phi_2 \ | \ c \bowtie \psi, 
\end{equation*}
where $c \in C$ is a clock, $\psi \in \mathbb{T}$ is a clock constant and $\bowtie \ \in  \{ <, >, \geq, \leq, = \}$. A clock \textit{valuation} is a function $\nu: C \rightarrow\mathbb{T}$ that assigns a value to each clock. A clock $c_i$ has valuation $\nu_i$ for $i \in \{1, \ldots, |C|\}$, and $\nu = (\nu_1, \ldots, \nu_{|C|})$. We denote by $\nu \models \phi$ the fact that the valuation $\nu$ satisfies the clock constraint $\phi$.

\begin{definition}
A \textit{Timed B\"uchi Automaton} is a tuple $\mathcal{A} = (Q, Q^{\text{init}}, C, Inv, E, F, \Sigma, \mathcal{L})$ where $Q$ is a finite set of locations; $Q^{\text{init}} \subseteq Q$ is the set of initial locations; $C$ is a finite set of clocks; $Inv: Q \rightarrow \Phi(C)$ is the invariant; $E \subseteq Q \times \Phi(C) \times 2^C \times Q$ gives the set of edges; $F \subseteq Q$ is a set of accepting locations; $\Sigma$ is a finite set of atomic propositions; and $\mathcal{L}: Q \rightarrow 2^{\Sigma}$ labels every state with a subset of atomic propositions.
\end{definition}

For the semantics of TBA we refer the reader to \cite[Section II, page 4]{alex_acc_2017}. The problem of deciding the emptiness of the language of a given TBA $\mathcal{A}$ is PSPACE-complete \cite{alur1994}. Any MITL formula $\varphi$ over $\Sigma$ can be algorithmically translated to a TBA with the alphabet $2^{\Sigma}$, such that the language of timed words that satisfy $\varphi$ is the language of timed words produced by the TBA (\cite{alur_mitl, maler_MITL_TA, nickovic_timed_aut}). An example of a TBA and accepting runs of it can be found in \cite[Section II, page 4]{alex_acc_2017}.

\begin{remark}
Traditionally, the clock constraints and the TBAs are defined with $\mathbb{T} = \mathbb{N}$. However, they can be extended to accommodate $\mathbb{T} = \mathbb{Q}_+$, by multiplying all the rational numbers that are appearing in the state invariants and the edge constraints with their least common multiple.
\end{remark}

\section{Problem Formulation} \label{sec: prob_formulation}

\subsection{System Model}
Consider a system of $N$ agents, with $\mathcal{V} = \{1,\dots, N\}, N \ge 2$, operating in a workspace $W \subseteq \mathbb{R}^2$. The workspace is assumed to be closed, bounded and connected. Let $x_i: \mathbb{R}_{\ge 0} \to D$ denotes the position of each agent in the workspace at time $t \in \mathbb{R}_{\ge 0}$. Each agent is equipped with a sensor device that can sense omnidirectionally. Let the disk $\mathcal{B}(x_i(t), \underline{r})$ model the sensing zone of agent $i$ at time $t \in \mathbb{R}_{\ge 0}$, where $\underline{r} \in \mathbb{R}_{\ge 0}$ is the sensing radius. The sensing radius is the same for all the agents. Let also $h > 0$ denote the constant sampling period of the system. We make the following assumption:

\begin{assumption} \label{ass:measurement_assumption}
(Measurements Assumption) It is assumed that each agent $i$, is able to measure its own position and all agents' positions that are located within agent's $i$ sensing zone without any delays.
\end{assumption}
According to Assumption \ref{ass:measurement_assumption}, the agent's $i$ neighboring set at time $t_0$ is defined by $\mathcal{N}_i  = \{j \in \mathcal{V}: x_j(t_0) \in \mathcal{B}(x_i(t_0), \underline{r})\}$. For the neighboring set $\mathcal{N}_i$ define also $N_i = |\mathcal{N}_i|$. Note that $i \in \mathcal{N}_j \Leftrightarrow j \in \mathcal{N}_i, \forall \ i,j \in \mathcal{V}, i \neq j$. The control design for every agent $i$ should guarantee that it remains connected with all its neighbors $j \in \mathcal{N}_i$, for all times.

Consider the neighboring set $\mathcal{N}_i$. The coupled dynamics of each agent are given in the form:
\begin{equation} \label{eq:system}
	\dot{x}_i = f(x_i, \bar{x}_i)+u_{i}, \ x_i \in W, \  i \in \mathcal{V},
\end{equation}
where $f:W$ $\times$ $W^{N_i}$ $\to W$, is a nonlinear function representing the coupling between agent $i$ and its neighbors $i_1, \dots, i_{N_i}$. The notation $\bar{x}_i = [ x_{i_1}^\top, \ldots, x_{i_{N_i}}^\top]^\top \in W^{N_i}$ is used for the vector of the neighbors of agent $i$, and $u_i: \mathbb{R}_{\ge 0} \to \mathbb{R}^2, \ i \in \mathcal{V}$ is the control input of each agent. For the dynamics of each agent the following assumption are taken.

\begin{assumption} \label{ass:dynamic_control_bounds}
	There exist constants $u_{\max}, \bar{M}$ with $0 < u_{\max} < \bar{M} < \infty$ such that the following holds $\forall i \in \mathcal{V}, (x_i, \bar{x}_i) \in W \times W^{N_i}$:
	\begin{subequations}
	\begin{align}
	&\|f_i(x_i, \bar{x}_i) \| \leq \bar{M}, \  \\
	& u_i \in \mathcal{U}_i \triangleq \{u_i \in \mathbb{R}^2: \|u_i\| \le u_{\max}\}. \label{eq:control_constraints}
	\end{align}
	\end{subequations}
\end{assumption}

\begin{assumption} \label{ass:lipsitch_f_x_bar_x}
The functions $f_i(x_i, \bar{x}_i), i \in \mathcal{V}$ are \emph{Lipschitz continuous} in $W \times W^{N_i}$. Thus, there exists constants $L_i, \bar{L}_i > 0$ such that the following inequalities hold:
\begin{subequations}
\begin{align}
\|f_i(x_i, \bar{x}_i) - f_i(y_i, \bar{x}_i) \| &\le L_i \|x_i - y_i\|, \label{eq:lip_1} \\
\|f_i(x_i, \bar{x}_i) - f_i(x_i, \bar{y}_i) \| &\le \bar{L}_i \|\bar{x}_i - \bar{y}_i\|, \label{eq:lip_2}
\end{align} 
\end{subequations}
for all $x_i, y_i \in W, \bar{x}_i, \bar{y}_i \in W^{N_i}, i \in \mathcal{V}$.
\end{assumption}

\begin{remark}
The coupling terms $f_i(x_i, \bar{x}_i), i \in \mathcal{V}$ are encountered in a large set of multi-agent protocols \cite{mesbahi_2010_graph_theory}, including consensus, connectivity maintenance, collision avoidance and formation control. In addition, \eqref{eq:system} may represent internal dynamics of the system as for instance in the case of smart buildings (see e.g., \cite{andreasson_2014_smart_building}) where the temperature $T_i, i \in \mathcal{V}$ of each room evolves according to the law $\dot{T}_i = \sum_{j \in \mathcal{N}_i} \alpha_{ij}(T_j-T_i)+u_i$, with $\alpha_{ij}$ representing the heat conductivity between rooms $i$ and $j$ and $u_i$ the heating/cooling capabilities of the room. 
\end{remark}

\subsection{Specification}
Our goal is to control the multi-agent system \eqref{eq:system} so that each agent obeys a given individual specification. In particular, it is required to drive each agent to a sequence of desired subsets of the \textit{workspace} $W$ within certain time limits and provide certain atomic tasks there. Atomic tasks are captured through a finite set of atomic propositions $\Sigma_i, i \in \mathcal{V}$, with $\Sigma_i \cap \Sigma_j = \emptyset$, for all $i,j \in \mathcal{V}, i \neq j,$ which means that the agents do not share any atomic propositions. Each position $x_i$ of each agent $i \in \mathcal{V}$ is labeled with atomic propositions that hold there. Initially, a labeling function 
\begin{equation} \label{eq:label_lambda}
\Lambda_i: W \to 2^{\Sigma_i},
\end{equation}
is introduced for each agent $i \in \mathcal{V}$ which maps each state $x_i \in \mathbb{R}^2$ with the atomic propositions $\Lambda_i(x_i)$ which hold true at $x_i$ i.e., the subset of atomic propositions that hold for agent $i$ in position $x_i$. Define also by $\displaystyle \Lambda(x) = \bigcup_{i \in \mathcal V} \Lambda_i(x)$ the union of all the labeling functions. Let us now introduce the following assumption which is important for defining the problem properly.
\begin{assumption}  \label{assumption: AP_cell_decomposition}
There exists a partition $D = \{D_\ell\}_{\ell \in \mathbb I}$ of the workspace $W$ which respects the labeling function $\Lambda$ i.e., for all $D_\ell \in D$ it holds that $\Lambda(x) = \Lambda(x'), \forall \ x, x' \in D_\ell$. This assumption, intuitively, and without loss of generality, means that the same atomic propositions hold at all the points that belong to the same region of the partition.
\end{assumption}

Although the regions $D_\ell, \ell \in \mathbb{I}$ of the partition $D$ may have different geometric shape, without loss of generality, we assume that they are hexagons with side length $R$. Define also for each agent $i$ a labeling function:
\begin{equation} \label{eq:label_mathcal_lambda}
{L}_i: D \to 2^{\Sigma_i},
\end{equation}
which maps every region of the partition $D$ to the subset of the atomic propositions which hold true there. Furthermore, we assume that a time step $T > h > 0$ is given. This time step models the required time in which each agent should transit from a region to a neighboring region and is the same for all the agents.

The trajectory of each agent $i$ is denoted by $x_i(t), t \geq 0, i \in \mathcal{V}$. The trajectory $x_i(t)$ is associated with a unique sequence:
\begin{equation*}
r_{x_i}^t = (r_i(0), \tau_i(0))(r_i(1), \tau_i(1))(r_i(2), \tau_i(2))\ldots,
\end{equation*}
of regions that the agent $i$ crosses, where for all $\mu \ge 0$ it holds that: $x_i(\tau_i(\mu) \in r_i(\mu)$ and $\Lambda_i(x_i(t)) = {L}_i(r_i(\mu)), \forall \ t \in [\tau_i(\mu), \tau_i(\mu+1))$ for some $r_i(\mu) \in D$ and $r_i(\mu) \ne r_i(\mu+1)$. The timed word:
\begin{align*}
w_{x_i}^t &= (L_i(r_i(0)), \tau_i(0))({L}_i(r_i(1)), \tau_i(1)) \notag \\ 
&\hspace{38mm}(L_i(r_i(2)), \tau_i(2))\ldots,
\end{align*}
where $w_i(\mu) = {L}_i(r_i(\mu)), \mu \ge 0, i \in \mathcal{V}$, is associated uniquely with the trajectory $x_i(t)$.

%Instead of computing the time stamps $\tau_i(0), \tau_i(1), \dots$ in which the trajectory $x_i(t)$ crosses the facets of the regions $r_i(0), r_i(1), \dots$, respectively, in this paper, the following relaxation is adopted. 
 
\begin{definition} 
For each agent $i \in \mathcal{V}$ we define the \textit{relaxed timed word} as:
\begin{align} \label{eq:relaxed_timed_word}
\widetilde{w}_{i}^t &= (w_i(0), \widetilde{\tau}_i(0))(w_i(1), \widetilde{\tau}_i(1)) (w_i(2), \widetilde{\tau}_i(2))\ldots,
\end{align}
where $w_i(\mu) = L_i(r_i(\mu)), \tilde{\tau}_i(\mu) \in [\tau_i(\mu), \tau_i(\mu+1)), \forall \ \mu \ge 0$. 
\end{definition}

The time stamp $\tau_i(0) = \widetilde{\tau}_i(0) = t_0, i \in \mathcal{V}$ models the initial starting time of the agents. The time stamps $\tau_i(\mu), \mu \ge 1$ models the exact time in which the agent $i$ crosses the boundary of the regions $r_i(\mu-1)$ and $r_i(\mu)$. The time stamps $\widetilde{\tau}_i(\mu)$ model a time instant in which the agent $i$ is in the region $r_i(\mu)$ of the workspace (see Example \ref{ex: example_01} below). The specification task $\varphi_i$ given as an MITL formula over the set of atomic propositions $\Sigma_i$, represents desired tasks that are imposed to each agent $i \in \mathcal{I}$. We say that a trajectory $x_i(t)$ \emph{satisfies} a formula $\varphi_i$  given in MITL over the set $\Sigma_i$, and we formally write:
\begin{equation*} \label{eq:traj_satisf_formula}
	x_i(t) \models \varphi_i, \forall t \ge 0,
\end{equation*}
if and only if there exists a \emph{relaxed timed word} $\widetilde{w}_{i}^t$ that complies with $x_i(t)$ and satisfies $\varphi_i$ according to the semantics of MITL in \ref{def:mitl_semantics}.

\begin{figure}[t!]
	\centering
	\begin{tikzpicture}[scale = 0.8]
	
	% plot the grid		
	\draw[step=2.5, line width=.04cm] (-2.5, -5.0) grid (0,0);
	\draw[line width=.04cm] (-7.5,0.0) -- (-2.5,0.0);
	\draw[line width=.04cm] (-7.5,-2.5) -- (-2.5,-2.5);
	\draw[line width=.04cm] (-7.5,-5.0) -- (-2.5,-5.0);
	\draw[step=2.5, line width=.04cm] (-10.0, -5.0) grid (-7.5,0);
	
	% plot the colours
	\filldraw[fill=black!10, line width=.04cm] (-10, -2.5) rectangle +(2.5, 2.5);
	\filldraw[fill=orange!40, line width=.04cm] (-10, -5.0) rectangle +(2.5, 2.5);
	\filldraw[fill=red!20, line width=.04cm] (-7.5, -2.5) rectangle +(5.0, 2.5);
	\filldraw[fill=yellow!40, line width=.04cm] (-2.5, -2.5) rectangle +(2.5, 2.5);
	\filldraw[fill=blue!20, line width=.04cm] (-2.5, -5.0) rectangle +(2.5, 2.5);
	\filldraw[fill=green!40, line width=.04cm] (-7.5, -5.0) rectangle +(5.0, 2.5);
	
	% draw the trajectory for agent 1
	\draw [color=red,thick,->,>=stealth'](-8.8, -3.7) .. controls (-7.50, -0.0) .. (-5.8, -3.2);
	
	% draw the trajectory for agent 2
	\draw [color=red,thick,->,>=stealth'](-5.1, -3.7) .. controls (-0.2, -2.5) .. (-3.0, -1.1);
	
	% draw the points of agent 1
	\draw (-8.8, -3.7) node[circle, inner sep=0.8pt, fill=black, label={below:{$x_1(0)$}}] (A1) {};
	\draw (-8.35, -2.5) node[circle, inner sep=0.8pt, fill=black] (B1) {};
	\draw (-7.50, -0.85) node[circle, inner sep=0.8pt, fill=black, label={left:{$x_1(t_2)$}}] (C1) {};
	\draw (-6.2, -2.5) node[circle, inner sep=0.8pt, fill=black] (D1) {};
	
	% draw the points of agent 1
	\draw (-1.15, -2.5) node[circle, inner sep=0.8pt, fill=black] (A2) {};
	\draw (-2.5, -3.05) node[circle, inner sep=0.8pt, fill=black] (B2) {};
	\draw (-5.1, -3.7) node[circle, inner sep=0.8pt, fill=black, label={below:{$x_2(0)$}}] (C2) {};
	\draw (-2.5, -1.35) node[circle, inner sep=0.8pt, fill=black] (D2) {};
	
	% cell decomposition
	\node at (-8.7, 0.5) {$D_1$};
	\node at (-5.0, 0.5) {$D_2$};
	\node at (-1.3, 0.5) {$D_3$};
	\node at (-1.3, -5.5) {$D_4$};
	\node at (-5.0, -5.5) {$D_5$};
	\node at (-8.7, -5.5) {$D_6$};
	\node at (-9,-2.2) {$x_1(t_1)$};
	\node at (-5.5,-2.2) {$x_1(t_3)$};
	\node at (-1.8,-3.3) {$x_2(t_1')$};
	\node at (-0.57,-2.1) {$x_2(t_2')$};
	\node at (-1.9,-1.1) {$x_2(t_3')$};
	\end{tikzpicture}
	
	\caption{An example of two agents performing in a partitioned workspace.}
	\label{fig: example_01}
\end{figure}
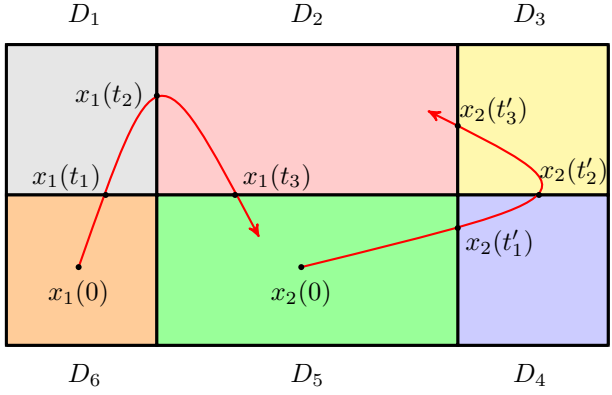

\begin{example} \label{ex: example_01}
	Consider $N = 2$ agents performing in the partitioned environment of Fig. \ref{fig: example_01}. Both agents have the ability to pick up, deliver and throw two different balls. Their sets of atomic propositions are $\Sigma_1 = \{\rm pickUp1, deliver1, throw1\}$ and $\Sigma_2 = \{\rm pickUp2, deliver2, throw2\}$, respectively, and satisfy $\Sigma_1 \cap \Sigma_2 = \emptyset$. Three points of the agents' trajectories that belong to different regions with different atomic propositions are captured. Assume that $t_1 < t_1' < t_2 < t_2 < t_2' < t_3 < t_3'.$ The trajectories $x_1(t), x_2(t), t \ge 0$ are depicted with the red lines. According to Assumption \ref{assumption: AP_cell_decomposition}, the partition $D = \{D_\ell\}_{\ell \in \mathbb I} = \{D_1, \ldots, D_6\}$ is given where $\mathbb I = \{1, \ldots, 6\}$ respects the labeling functions $\Lambda_i, {L}_i, i \in \{1,2\}$. In particular, it holds that: 
	\begin{align}
	&\Lambda_1(x_1(t)) = {L}_1(r_1(0)) = \{\rm pickUp1\}, t \in [0, t_1), \notag \\
	&\Lambda_1(x_1(t)) = {L}_1(r_1(1)) = \{\rm throw1\}, t \in [t_1, t_2), \notag \\ 
	&\Lambda_1(x_1(t)) = {L}_1(r_1(2)) = \{\rm deliver1\}, t \in [t_2, t_3), \notag \\
	&\Lambda_1(x_1(t)) = {L}_1(r_1(3)) = \emptyset, t \ge t_3. \notag \\
	&\Lambda_2(x_2(t)) = {L}_2(r_2(0)) = \{\rm pickUp2\}, t \in [0, t_1'), \notag \\
	&\Lambda_2(x_2(t)) = {L}_2(r_2(1))  = \{\rm deliver2\}, t \in [t_1', t_2'), \notag \\ 
	&\Lambda_2(x_2(t)) = {L}_2(r_2(2)) = \{\rm throw2\}, t \in [t_2', t_3'), \notag \\
	&\Lambda_2(x_2(t)) = {L}_2(r_2(3))  = \emptyset, t \ge t_3'. \notag
	\end{align}
	By the fact that $w_i(\mu) = {L}(r_i(\mu)), \forall \ i \in \{1,2\}, \mu \in \{1,2,3\}$, the corresponding individual timed words are given as:
	\begin{align}
	w^t_{x_1} &= (\{\rm pickUp1\}, 0)(\{\rm throw1\}, t_1)(\{\rm deliver1\}, t_2)(\emptyset, t_3), \notag \\
	w^t_{x_2} &= (\{\rm pickUp2\}, 0)(\{\rm deliver2\}, t_1')(\{\rm throw2\}, t_2')(\emptyset, t_3'). \notag
	\end{align}
	According to \eqref{eq:relaxed_timed_word}, two two relaxed timed words (depicted with red in Fig. \ref{fig: example_01}) are given as:
	\begin{align}
	&w^t_{1} = (\{\rm pickUp1\}, \widetilde{\tau}_1(0))(\{\rm throw1\}, \widetilde{\tau}_1(1)) \notag \\
	&\hspace{40mm} (\{\rm deliver1\}, \widetilde{\tau}_1(2)) (\emptyset, \widetilde{\tau}_1(3)), \notag \\
	&w^t_{2} = (\{\rm pickUp2\}, \widetilde{\tau}_2(0))(\{\rm deliver2\}, \widetilde{\tau}_2(1)) \notag \\ 
	&\hspace{40mm}(\{\rm throw2\}, \widetilde{\tau}_2(2))(\emptyset, \widetilde{\tau}_2(3)). \notag
	\end{align}
    The time stamps $\widetilde{\tau}_1(\mu), \widetilde{\tau}_2(\mu), \mu \in \{1, 2, 3\}$ should satisfy the following conditions:
	\begin{align*}
	\widetilde{\tau}_1(0) &\in [\tau_1(0), \tau_1(1)) = [0, t_1), \\
	\widetilde{\tau}_1(1) &\in [\tau_1(1), \tau_1(2)) = [t_1, t_2), \\
	\widetilde{\tau}_1(2) &\in [\tau_1(2), \tau_1(3)) = [t_2, t_3), \\
	\widetilde{\tau}_1(3) &\in [\tau_1(3), \cdot) = [t_3, \cdot), \\
	\widetilde{\tau}_2(0) &\in [\tau_2(0), \tau_2(1)) = [0, t_1), \\
	\widetilde{\tau}_2(1) &\in [\tau_2(1), \tau_2(2)) = [t_1, t_2), \\
	\widetilde{\tau}_2(2) &\in [\tau_2(2), \tau_2(3)) = [t_2, t_3), \\ 
	\widetilde{\tau}_2(3) &\in [\tau_2(3), \cdot) = [t_3, \cdot). \\
	\end{align*}
\end{example}
\subsection{Problem Statement}

We can now formulate the problem treated in this paper as follows:

\begin{problem} \label{problem: basic_prob}
Given $N$ agents operating in the bounded workspace $W \subseteq \mathbb{R}^2$, their initial positions $x_1(t_0), \dots, x_N(t_0)$, their dynamics as in \eqref{eq:system}, a time step $T > h > 0$, $N$ task specification formulas $\varphi_1, \ldots, \varphi_N$ expressed in MITL over the sets of services $\Sigma_1, \ldots, \Sigma_{{N}}$, respectively, a partition of the workspace $W$ into hexagonal regions $\{D_\ell\}_{\ell \in \mathbb I}$  with side length $R$ as in Assumption \ref{assumption: AP_cell_decomposition} and the labeling functions $\Lambda_1, \ldots, \Lambda_N, {L}_1, \ldots, {L}_N$, as in \eqref{eq:label_lambda}, \eqref{eq:label_mathcal_lambda}, assign control laws $u_1, \ldots, u_N$ to each agent $1, \dots, N$, respectively, such that the connectivity between the agents that belong to the neighboring sets $\mathcal{N}_1, \dots, \mathcal{N}_N$ is maintained, as well as each agent fulfills its individual MITL specification $\varphi_1, \dots, \varphi_N$, respectively, i.e., $x_1(t) \models \varphi_1, \dots, x_N(t) \models \varphi_N, \forall \ t \in \mathbb{R}_{\ge 0}$.
\end{problem}

\begin{remark}
The initial positions $x_1(t_0), \dots, x_N(t_0)$ should be such that the agents which are required to remain connected for all times need to satisfy the inequality $\|x_i(t_0)-x_{i'}(t_0)\| < 2\underline{r}, i, i' \in \mathcal{V}, i \neq i'$. %It will be guaranteed by the proposed controller scheme that the agents that satisfy the last inequality, will remain neighbors for all times t > t_0$.
\end{remark}

\begin{remark}
It should be noted that, in this work, the dependencies between the agents are induced through the coupled dynamics \eqref{eq:system}  and not in the discrete level, by allowing for couplings between the services (i.e., $\Sigma_i \cap \Sigma_j \ne \emptyset$, for some $i, j \in \mathcal{V}$). Hence, even though the agents do not share atomic propositions, the constraints on their motion due to the dynamic couplings and the connectivity maintenance specifications may restrict them to fulfill the desired high-level tasks. Treating additional couplings through individual atomic propositions in the discrete level is a topic of current work.
\end{remark}

\begin{remark}
In our previous work on the multi-agent controller synthesis framework under MITL specifications \cite{alex_2016_acc}, the multi-agent system was considered to have fully-actuated dynamics. The only constraints on the system were due to the presence of time constrained MITL formulas. In the current framework, we have two types of constraints: the constraints due to the coupling dynamics of the system \eqref{eq:system}, which constrain the motion of each agent, and, the timed constraints that are inherently imposed from the time bounds of the MITL formulas. Thus, there exist formulas that cannot be satisfied either due to the coupling constraints or the time constraints of the MITL formulas. These constraints, make the procedure of the controller synthesis in the discrete level substantially different and more elaborate than the corresponding multi-agent LTL frameworks in the literature (\cite{guo_2015_reconfiguration, frazzoli_vehicle_routing, belta_2010_product_system, belta_cdc_reduced_communication}).
\end{remark}

\section{Proposed Solution} \label{sec: solution}

In this section, a systematic solution to Problem~\ref{problem: basic_prob} is introduced. Our overall approach builds on abstracting the system in \eqref{eq:system} through a WTS for each agent and exploiting the fact that the timed runs in the $i$-th WTS project onto the trajectories of agent $i$ while preserving the satisfaction of the individual MITL formulas $\varphi_i, i \in \mathcal{V}$. In particular, the following analysis is performed:
\begin{enumerate}
	\item We propose a novel decentralized abstraction technique for the multi-agent system, i.e., discretization of the time into time steps $T$ for the given partition $D = \{D_\ell\}_{\ell \in \mathbb{I}}$, such that the motion of each agent is modeled by a WTS $\mathcal{T}_i, \ i \in \mathcal{I}$ (Section \ref{sec: abstration}). We adopt here the technique of designing Nonlinear Model Predictive Controllers (NMPC), for driving the agents between neighboring regions.
	\item A three-step automated procedure for controller synthesis which serves as a solution to Problem \ref{problem: basic_prob} is provided in Section \ref{sec: synthesis}.
	\item Finally, the computational complexity of the proposed approach is discussed in Section \ref{sec:complexity}.
\end{enumerate}
The next sections provide the proposed solution in detail.

\subsection{Discrete System Abstraction} \label{sec: abstration}

In this section we provide the abstraction technique that is designed in order to capture the dynamics of each agent into WTSs. Thereafter, we work completely at discrete level, which is necessary in order to solve Problem \ref{problem: basic_prob}.

\begin{figure}[t!]
	\centering
	\begin{tikzpicture}  
	% plot polygon 0	
	\node[regular polygon, regular polygon sides=6, minimum width=2cm,draw, fill = green!20] (reg1) at (0,0){};   
	\node at (0.0, 0.0) {$\bullet$};
	\node at (0.0, -0.32) {$x_i(t_k)$};
	\node at (0.0, +0.33) {$P(i, k)$};
	
	% plot polygon 2
	\node at (1.52, 0.87) {$\widetilde{P}(i,k,2)$};
	\node[regular polygon, regular polygon sides=6, minimum width=2cm,draw] (reg2) at (1.52,0.87){};
	
	% plot polygon 3
	\node at (1.52,-0.87) {$\widetilde{P}(i,k,3)$};
	\node[regular polygon, regular polygon sides=6, minimum width=2cm,draw] (reg3) at (1.52,-0.87){};
	
	% plot polygon 4
	\node at (0,-1.74) {$\widetilde{P}(i,k,4)$};
	\node[regular polygon, regular polygon sides=6, minimum width=2cm,draw] (reg4) at (0,-1.74){};
	
	% plot polygon 5
	\node at (-1.52,-0.87) {$\widetilde{P}(i,k,5)$};
	\node[regular polygon, regular polygon sides=6, minimum width=2cm,draw] (reg5) at (-1.52,-0.87){};
	
	% plot polygon 6
	\node at (-1.52,0.87) {$\widetilde{P}(i,k,6)$};
	\node[regular polygon, regular polygon sides=6, minimum width=2cm,draw] (reg5) at (-1.52,0.87){};
	
	% plot polygon 1
	\node at (0,1.74) {$\widetilde{P}(i,k,1)$};
	\node[regular polygon, regular polygon sides=6, minimum width=2cm,draw] (reg5) at (0,1.74){};
	\end{tikzpicture}
	\caption{Illustration of agent $i$ occupying region $P(i, k)$, depicted by green, at time $t_k = t_0+k T$ with $\bar{P}(i, k) = \bigcup_{\widetilde{\ell} \in \mathbb{L}} \widetilde{P}(i,k, \widetilde{\ell})$ being the set of regions that the agent can transit at exactly time $T$.}
	\label{fig:agent_i_widetilde_P}
\end{figure}
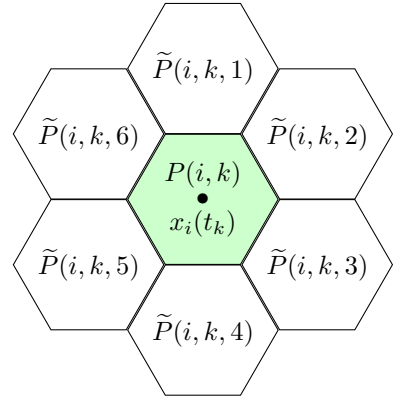

\subsubsection{Workspace Geometry} 

Consider an enumeration $\mathbb{I}$ of the regions of the workspace, the index variable $\ell \in \mathbb{I}$ and the given time step $T$. The time step $T$ models the time duration that each agent needs to transit between two neighboring regions of the workspace. Consider also a timed sequence:
\begin{equation} \label{eq:time_sequence}
\mathcal{S} = \{t_0, t_1 = t_0+T, \dots, t_k = t_0+k T, \dots \}, k \in \mathbb{N}.
\end{equation} 
$S$ models the time stamps in which the agents are required to occupy different neighboring regions. For example, if at time $t_k$ agent $i$ occupies region $D_{\ell}$, at the next time stamp $t_k +T$ is required to occupy a neighboring region of $D_{\ell}$. The agents are always forced to change region for every different time stamp. Let us define the mapping:
\begin{equation*}
P:\mathcal{V} \times \mathbb{N}\to D,
\end{equation*}
which denotes the fact that the agent $i \in \mathcal{V}$, at time instant $$t_k = t_0+k T, k \in \mathbb{N},$$ occupies the region $D_{\scriptscriptstyle \ell_i} \in D$ for an index $\ell_i \in \mathbb{I}$. Define the mapping:
\begin{equation*}
\widetilde{P}: \mathcal{V} \times \mathbb{N} \times \mathbb{L} \to D.
\end{equation*}
where $\mathbb{L} = \{1,\dots,6\}$. By $\widetilde{P}(i, k, \widetilde{\ell}), \widetilde{\ell} \in \mathbb{L}$ we denote one and only one out of the six neighboring regions of region $P(i, k)$ that agent $i$ occupies at time $t_k$. Define also by $\bar{P}(i,k)$ the union of all the six neighboring regions of region $P(i, k)$, i.e.,
\begin{equation*}
\bar{P}(i,k) = \bigcup_{\widetilde{\ell} \in \mathbb{L}} \widetilde{P}(i,k, \widetilde{\ell}),
\end{equation*} 
with $|\bar{P}(i,k)| = 6$. An example of agent $i$ being at the region $P(i, k)$ along with its neighboring regions is depicted in Fig. \ref{fig:agent_i_widetilde_P}.

We start by giving a graphical example for the abstraction technique that will be adopted in this work. Consider agent $i$ occupying the green region $P(i, k) = D_{\scriptscriptstyle \ell_i}$ at time $t_k = t_0+kT$ and let its neighbors $j_1, j_2$ occupying the red and blue regions $P(j_1, k) = D_{\scriptscriptstyle \ell_{j_1}}, P(j_2, k) = D_{\scriptscriptstyle \ell_{j_2}}$, respectively, as is depicted in Fig. \ref{fig:agent_i_j1_j2_example}. The neighboring regions $\bar{P}(i,k), \bar{P}(j_1,k)$ and $\widetilde{P}(j_2,k, \widetilde{\ell}), \widetilde{\ell} \in \{4,5,6\}$ for agent $i, j_1, j_2$, respectively, are also depicted. All the agents start their motion at time $t_k$ simultaneously. The goal is to design a decentralized feedback control law $u_i(x_i, x_{j_1}, x_{j_2})$, that drives agent $i$ in the neighboring region $D_{\scriptscriptstyle \ell_{\text{des}}}$ exactly at time $T$, \emph{regardless of the transitions of its neighbors to their neighboring regions}. If such controller exists, it is stored in the memory a new search for the next region is performed. This procedure is repeated for all possible neighboring regions i.e., six times, and for all the agents. For the example of Fig. \ref{fig:agent_i_j1_j2_example}, the procedure is performed $6^3$ times (six times for each agent). With this procedure, we are able to: 1) synchronize the agents so that each of them knows at every time step $T$ its position in the workspace as well as the region that occupies; 2) know which controller brings each agent in its desired region for any possible choice of controllers of its corresponding neighbors. We will hereafter present a formal approach of this procedure.
We will hereafter present a formal approach of this procedure.

\begin{figure}[t!]
	\centering
	\begin{tikzpicture}  
	
	% plot polygon 0 of agent i
	\node[regular polygon, regular polygon sides=6, minimum width=2cm,draw, fill = green!20] (reg01) at (0,0){};   
	\node at (0.0, -0.32) {$x_i(t_k)$};
	%\node at (0.0, +0.33) {$P(i, k)$};
	
	% plot polygon 1 of agent i
	%\node at (0,1.74) {$\widetilde{P}(i,k,1)$};
	\node[regular polygon, regular polygon sides=6, minimum width=2cm,draw] (reg11) at (0,1.74){};
	
	% plot polygon 2 of agent i
	%\node at (1.52, 0.87) {$\widetilde{P}(i,k,2)$};
	\node[regular polygon, regular polygon sides=6, minimum width=2cm,draw] (reg21) at (1.52,0.87){};
	
	% plot polygon 3 of agent i
	%\node at (1.52,-0.87) {$\widetilde{P}(i,k,3)$};
	\node[regular polygon, regular polygon sides=6, minimum width=2cm,draw] (reg31) at (1.52,-0.87){};
	
	% plot polygon 4 of agent i
	%\node at (0,-1.74) {$\widetilde{P}(i,k,4)$};
	\node[regular polygon, regular polygon sides=6, minimum width=2cm,draw] (reg41) at (0,-1.74){};
	
	% plot polygon 5 of agent i
	%\node at (-1.52,-0.87) {$\widetilde{P}(i,k,5)$};
	\node[regular polygon, regular polygon sides=6, minimum width=2cm,draw] (reg51) at (-1.52,-0.87){};
	
	% plot polygon 6 of agent i
	%\node at (-1.52,0.87) {$\widetilde{P}(i,k,6)$};
	\node[regular polygon, regular polygon sides=6, minimum width=2cm,draw] (reg61) at (-1.52,0.87){};
	
	% plot polygon 0 of agent j1
	%\node[regular polygon, regular polygon sides=6, minimum width=2cm,draw, fill = red!20] (reg02) at (3,0){};   
	%\node at (3.0, -0.32) {$x_{\scriptscriptstyle j_1}(t_k)$};
	%\node at (0.0, +0.33) {$P(i, k)$};
	
	% plot polygon 1 of agent j1
	%\node at (0,1.74) {$\widetilde{P}(i,k,1)$};
	\node[regular polygon, regular polygon sides=6, minimum width=2cm,draw] (reg12) at (3,1.74){};
	
	% plot polygon 2 of agent j1
	%\node at (0,1.74) {$\widetilde{P}(i,k,1)$};
	\node[regular polygon, regular polygon sides=6, minimum width=2cm,draw, fill = red!20] (reg12) at (4.52,0.87){};
	\node at (4.52,0.45) {$x_{\scriptscriptstyle j_1}(t_k)$};
	
	% plot polygon 3 of agent j1
	%\node at (1.52,-0.87) {$\widetilde{P}(i,k,3)$};
	\node[regular polygon, regular polygon sides=6, minimum width=2cm,draw] (reg31) at (4.52,-0.87){};
	
	% plot polygon 4 of agent j1
	%\node at (0,-1.74) {$\widetilde{P}(i,k,4)$};
	\node[regular polygon, regular polygon sides=6, minimum width=2cm,draw] (reg41) at (3,-1.74){};
	
	% plot polygon 0 of agent j2
	%\node at (1.52, 0.87) {$\widetilde{P}(i,k,2)$};
	\node[regular polygon, regular polygon sides=6, minimum width=2cm,draw, fill = blue!20] (reg03) at (1.52,2.60){};
	\node at (1.52, 2.98) {$x_{\scriptscriptstyle j_2}(t_k)$};
	
	% plot polygon 1 of agent j2
	%\node at (1.52, 0.87) {$\widetilde{P}(i,k,2)$};
	\node[regular polygon, regular polygon sides=6, minimum width=2cm,draw] (reg13) at (1.52,4.33){};
	
	% plot polygon 2 of agent j2
	%\node at (0,1.74) {$\widetilde{P}(i,k,1)$};
	\node[regular polygon, regular polygon sides=6, minimum width=2cm,draw] (reg12) at (3,3.48){};
	
	% plot polygon 5 of agent i
	%\node at (0,1.74) {$\widetilde{P}(i,k,1)$};
	\node[regular polygon, regular polygon sides=6, minimum width=2cm,draw] (reg11) at (0,3.48){};
	
	% draw the connections
	\draw[line width=.05cm, color = red, dashed] (0.0, 0.0) -- (1.52,2.60);
	\draw[line width=.05cm, color = red, dashed] (0.0, 0.0) -- (4.52,0.87);
	\draw[line width=.05cm, color = red, dashed] (1.52,2.60) -- (4.52,0.87);
	
	% draw the bullets
	\node at (0.0, 0.0) {$\bullet$};
	\node at (1.52,2.60) {$\bullet$};
	\node at (4.52,0.87) {$\bullet$};
	\node at (-1.52,0.87) {$\bullet$};
	\node at (-1.52,1.05) {$x_{i}(t_k+T)$};
	\node at (-1.52, 0.40) {$D_{\scriptscriptstyle \ell_{\text{des}}}$};
	\draw [line width = 0.04cm, color=blue,thick,->,>=stealth'] (0.0, 0.0) .. controls (-0.2, -0.2) .. (-1.52,0.87);
	\draw [line width = 0.04cm, color=orange,thick,->,>=stealth'] (0.0, -0.5) to (0.0, -1.5);
	\node at (0.0, -1.7) {$P(i,k)$};
	\node at (1.80,1.90) {$P(j_2,k)$};
	\node at (4.52, 1.40) {$P(j_1,k)$};
	\node at (-0.5, 0.4) {$u_i$};	
	\end{tikzpicture}
	\caption{Illustration of three connected agents $i, j_1, j_2$. The agents are occupying the regions $P(i,k) = D_{\scriptscriptstyle \ell_{i}}, P(j_1, k) = D_{\scriptscriptstyle \ell_{j_1}}$ and $P(j_2, k) = D_{\scriptscriptstyle \ell_{j_2}}$ at time $t_k = t_0+kT$, depicted by green, red and blue color, respectively. Their corresponding neighboring regions $\bar{P}(i,k), \bar{P}(j_1,k)$ and $\widetilde{P}(j_2,k, \widetilde{\ell}), \widetilde{\ell} \in \{4,5,6\}$, respectively, are also depicted. $\widetilde{P}(i,k,6) = D_{\scriptscriptstyle \ell_{\text{des}}}$ is the desired region in which agent $i$ needs to move at time $T$ by applying a decentralized control law $u_i(x_i, x_{j_1}, x_{j_2})$.}
	\label{fig:agent_i_j1_j2_example}
\end{figure}

\subsubsection{Decentralized Controller Specification} 

Consider a time interval $[t_k, t_k+T]$. We state here the specifications that a decentralized feedback controller $u_i(x_i, \bar{x}_i)$ needs to guarantee so as agent $i$ to have a \emph{well-defined transition} between two neighboring regions within the time interval $[t_k, t_k+T]$.

\textbf{(S1)} The controller needs to take into consideration the dynamics \eqref{eq:system} and the constraints that are imposed by the bounds of Assumption 1.

\textbf{(S2)} Agent $i$ should move from one region $P(i,k) \in D$ to a neighboring region $\widetilde{P}(i, k, \widetilde{\ell})$, without intersecting other regions, irrespectively of which region its neighbors are moving to. Thus, since the duration of the transition is $T$, it is required that $x_i(t_k) \in P(i, k)$, $x_i(t_k+T) \in \widetilde{P}(i, k, \widetilde{\ell})$ and $x_i(t) \in P(i,k) \cup \widetilde{P}(i, k, \widetilde{\ell}), t \in (t_k, t_k+T)$. The neighbors of agent $i$ will move also to exactly one of their corresponding neighboring regions.

\begin{remark}
The reason for imposing the aforementioned constraints is due to the need of imposing safety specifications to the agents. Thus, it is required to be guaranteed that the agents will not cross more than one neighboring region within the duration of a transition $T$.
\end{remark}

\subsubsection{Error Dynamics} Let us define by $x_{i,k,\widetilde{\ell},\text{des}} \in \widetilde{P}(i, k, \widetilde{\ell})$ a reference point of the desired region $\widetilde{P}(i, k, \ell)$ which agent $i$ needs to occupy at time $t_k+T$. Define also by:
\begin{equation}
e_i(t) = x_i(t)-x_{i, k, \widetilde{\ell}, \text{des}}, t \in [t_k, t_k+T],
\end{equation}
the error which the controller $u_i$ needs to guarantee to become zero in the time interval $t \in [t_k, t_k+T]$. Then, the \emph{nominal error dynamics} are given by: 
\begin{equation} \label{eq:error_dynamics} 
\dot{e}_i(t) = g_i(e_i(t), \bar{x}_i(t), u_i(t)), t \in [t_k, t_k+T],
\end{equation}
with initial condition $e_i(t_k) = x_i(t_k)-x_{i, k, \widetilde{\ell}, \text{des}}$, where: $$g_i(e_i(t), \bar{x}_i(t), u_i(t)) = f_i(e_i(t)+x_{i, k, \widetilde{\ell}, \text{des}}, \bar{x}_i(t))+u_i(t).$$

\begin{property} \label{property:error_bound}
	According to Assumption \eqref{ass:dynamic_control_bounds}, at every time $s \in [t_k, t_k+T]$, with $t_k = t_0+ k T$, the error $e_i(s)$ of the state of agent $i$ is upper bounded by:
	\begin{equation} \label{eq:error_bound}
	\|e_i(s)\| \le \|e_i(t_k)\|+ (s-t_k)(M+u_{\max}), i \in \mathcal{V}.
	\end{equation}
\end{property}
\begin{proof}
	The proof can be found in Appendix \ref{app:proof_of_property_1}.
\end{proof}

\subsubsection{State Constraints} Before defining the ROCP we state here the state constraints that are imposed to the state of each agent. Define the set:
\begin{align*}
& X_i = \{x_i \in W, \bar{x}_i \in W^{N_i}: \notag \\
&\hspace{3mm} \|f_i(x_i, \bar{x}_i)\| \le M, \|x_i-x_j\| < \underline{r}, \forall j \in \mathcal{N}_i(0), \notag \\
&\hspace{3mm} x_i \in P(i,k) \cup \widetilde{P}(i,k,\widetilde{\ell}),  \widetilde{\ell} \in \mathbb{L} \}, \notag
\end{align*}
as the set that captures the state constraints of agent $i$. The first constraint in the set $X$ stands for the bound of Assumption \ref{ass:dynamic_control_bounds}; the second one stands for the connectivity requirement of agent $i$ with all its neighbors; the last one stands for the requirement each agent to transit from one region to exactly one desired neighboring region. In order to translate the constraints that are dictated for the state $x_i(t)$ into constraints regarding the error state $e_i(t)$ from \eqref{eq:error_dynamics}, define the set $E_i = X_i \oplus (-x_{i,k,\widetilde{\ell},\text{des}}).$ Then, the following implication holds: $x_i \in X_i \Rightarrow e_i \in E_i.$

\subsubsection{Control Design}

This subsection concerns the control design regarding the transition of agent $i$ to one neighboring region $\widetilde{P}(i, k, \widetilde{\ell})$, for some $\widetilde{\ell}\in\mathbb{L}$. The abstraction design, however, concerns all the neighboring regions $\bar{P}(i,k)$, for which we will discuss in the next subsection. 

The timed sequence $\mathcal{S}$ consists of intervals of duration $T$. Within every time interval $[t_k, t_k+T]$, each agent needs to be at time $t_k$ in region $P(i,k)$ and at time $t_k+T$ in a neighboring region $\widetilde{P}(i,k,\widetilde{\ell}), \widetilde{\ell} \in \mathbb{L}$. We assume that $T$ is related to the sampling time $h$ according to: $T = m h, m \in \mathbb{N}$. Therefore, within the time interval $[t_k, t_k+T]$, there exists $m+1$ sampling times. By introducing the notation $t_{\scriptscriptstyle k_{z}} \triangleq t_k+z h, \forall z \in \mathbb{M} \triangleq \{0, \dots, m\}$, we denote by $\{t_{\scriptscriptstyle k_{z}}\}_{z \in \mathbb{M}}$ the sampling sequence within the interval $[t_k, t_k+T]$. Note that $t_{\scriptscriptstyle k_{0}} = t_k$ and $t_{\scriptscriptstyle k_{m}} = t_k+T$. The indexes $k, z$ stands for the interval and for the sampling times within this interval, respectively. As it will be presented hereafter, at every sampling time $t_{\scriptscriptstyle k_{z}}, z \in \mathbb{M}$, each agents solves a ROCP.

%denoted by $\{t_{\scriptscriptstyle k_z}\}, z \in \{1, \ldots ,m+1\}$, with $t_{\scriptscriptstyle k_1} = t_k$, $t_{\scriptscriptstyle k_{m+1}} = t_k + T$, and $t_{\scriptscriptstyle k_z} = t_k + (z-1) h$. For a fixed $z \in \{1,\dots, m+1\}$, it holds that $t_{\scriptscriptstyle k_{z+\lambda}} = t_{\scriptscriptstyle k_z} + \lambda h, \forall \lambda \in \{0, \dots, m\}$. 

Our control design approach is based on Nonlinear Model Predictive Control (NMPC). NMPC has been proven to be efficient for systems with nonlinearities and state/input constraints. For details about NMPC we refer the reader to \cite{morrari_npmpc, cannon_2001_nmpc,  frank_2003_nmpc_bible, frank_1998_quasi_infinite, frank_2003_towards_sampled-data-nmpc, grune_2011_nonlinear_mpc, camacho_2007_nmpc, borrelli_2013_nmpc, fontes_2001_nmpc_stability, camacho_2002_input_to_state}. We propose here a sampled-data NMPC with decreasing horizon in order to design a controller that respects the desired specifications and guarantees the transition between regions at time $T$. In the proposed sampled-data NMPC, an open-loop Robust Optimal Control Problem (ROCP) is solved at every discrete sampling time instant $t_{\scriptscriptstyle k_z}, z \in \mathbb{M}$ based on the current error state information $e_i(t_{\scriptscriptstyle k_z})$. The solution is an optimal control signal $\hat{u}_{i}(t)$, for $t \in [t_{\scriptscriptstyle k_z}, t_{\scriptscriptstyle k_z}+T_{\scriptscriptstyle z}]$, where $T_{\scriptscriptstyle z}$ is defined as follows.
\begin{definition}
	A \emph{decreasing horizon policy} is defined by:
	\begin{equation} \label{eq:decr_horizon}
	T_{z} = T - z  h, z \in \mathbb{M}.
	\end{equation}
\end{definition}
This means that at every time sample $t_{\scriptscriptstyle k_z}$ in which the ROCP is solved, the horizon is decreased by a sampling time $h$.  The specific policy is adopted in order to enforce the controllers $u_i$ to guarantee that agent $i$ will reach the desired neighboring region at time $T$. \eqref{eq:decr_horizon} implies also that $t_{\scriptscriptstyle k_{z}} +T_z = t_k+T, \forall z \in \mathbb{M}$ A graphical illustration of the presented time sequences is given in Fig. \ref{fig:time_sequence}. 

The \emph{open-loop input signal} is applied in between the sampling instants and is given by the solution of the following Robust Optimal Control Problem (ROCP): $\mathcal{O}(k, x_i(t), \bar{x}_i(t), P(i,k), \widetilde{\ell}, x_{i, k, \widetilde{\ell}, \text{des}}), t \in [t_{\scriptscriptstyle k_z}, t_{\scriptscriptstyle k_z}+T_{\scriptscriptstyle z}],$ which is defined as:
\begin{subequations}
	\begin{align}
	&\hspace{-1mm}\min\limits_{\hat{u}_i(\cdot)} J_i(e_i(t_{\scriptscriptstyle k_z})),\hat{u}_i(\cdot)) = \notag \\
	&\hspace{-1mm} \min\limits_{\hat{u}_i(\cdot)} \left\{  V_i(\hat{e}_i(t_{\scriptscriptstyle k_z}+T_{\scriptscriptstyle z})) + \int_{t_{\scriptscriptstyle k_z}}^{t_{\scriptscriptstyle k_z}+T_{z}} \Big[ F_i(\hat{e}_i(s), \hat{u}_i(s)) \Big] ds \right\}  \label{eq:mpc_minimazation} \\
	&\hspace{-1mm}\text{subject to:} \notag \\
	&\hspace{1mm} \dot{\hat{e}}_i(s) = g_i(\hat{e}_i(s), \hat{\bar{x}}_i(s), \hat{u}_i(s)), \hat{e}_i(t_{\scriptscriptstyle k_z}) = e_i(t_{\scriptscriptstyle k_z}), \label{eq:diff_mpc}
	\end{align}
	\begin{align}
	&\hspace{1mm} \hat{e}_i(s) \in E_{s-t_{\scriptscriptstyle k_z}}^{i}, \hat{u}_i(s) \in \mathcal{U}_i, s \in [t_{\scriptscriptstyle k_z}, t_{\scriptscriptstyle k_z}+T_{\scriptscriptstyle z}], \label{eq:mpc_constrained_set} \\
	&\hspace{1mm} \hat{e}_i(t_{\scriptscriptstyle k_z}+T_{\scriptscriptstyle z})\in\mathcal{E}_i. \label{eq:mpc_terminal_set}
	\end{align}
\end{subequations}
The ROCP has as inputs the terms $k, x_i(t)$, $\bar{x}_i(t), P(i,k)$, $\widetilde{\ell}$, $x_{i, k, \widetilde{\ell}, \text{des}}$, for time $t \in [t_{\scriptscriptstyle k_z}, t_{\scriptscriptstyle k_z}+T_{\scriptscriptstyle z}]$. We will explain hereafter all the terms appearing in the ROCP problem \eqref{eq:mpc_minimazation}-\eqref{eq:mpc_terminal_set}. By hat $\hat{(\cdot)}$ we denote the predicted variables (internal to the controller), corresponding to the system \eqref{eq:error_dynamics} i.e., $\hat{e}_i(\cdot)$ is the solution of \eqref{eq:diff_mpc} driven by the control input $\hat{u}_{i}(\cdot): [t_{\scriptscriptstyle k_z}, t_{\scriptscriptstyle k_z}+T_{\scriptscriptstyle z}] \to \mathcal{U}_i$ with initial condition $\hat{e}_i(t_{\scriptscriptstyle k_z}) = e_i(t_{\scriptscriptstyle k_z})$. The set $E_{s-t_{\scriptscriptstyle k_z}}^{i}$ is a subset of $E_i$ and will be explicitly defined later.

\begin{figure}[t!]
	\centering
	\begin{tikzpicture}
	\draw [blue, line width = 1.5] (-0.5, 0 ) -- (7.5, 0);
	
	\fill[green] (-0.5,0) circle (2.5 pt);
	\fill[red] (1.6,0) circle (2.5 pt);
	\fill[red] (5.5,0) circle (2.5 pt);
	\fill[green] (7.5,0) circle (2.5 pt);
	
	\node at (-0.3, -0.4) {$t_{\scriptscriptstyle k_z}$};      
	\node at (1.5, -0.4) {$t_{\scriptscriptstyle k_{z+1}}$};            
	\node at (5.5, -0.4) {$t_{\scriptscriptstyle k_{z}}+ T_{\scriptscriptstyle z+1}$};            
	\node at (7.3, -0.4) {$t_{\scriptscriptstyle k_{z}}+ T_{\scriptscriptstyle z}$};
	\draw [line width = 0.04cm, color=orange,thick,->,>=stealth'] (-0.5, 0.4) to (1.6, 0.4);
	\draw [line width = 0.04cm, color=orange,thick,->,>=stealth'] (1.6, 0.4) to (-0.5, 0.4);         
	\node at (0.6, 0.8) {$h$};                
	\draw [line width = 0.04cm, color=orange,thick,->,>=stealth'] (5.5, 0.4) to (7.5, 0.4);
	\draw [line width = 0.04cm, color=orange,thick,->,>=stealth'] (7.5, 0.4) to (5.5, 0.4);         
	\node at (6.5, 0.8) {$h$}; 
	\draw [line width = 0.04cm, color=orange,thick,->,>=stealth'] (-0.5, -1.0) to (7.5, -1.0);
	\draw [line width = 0.04cm, color=orange,thick,->,>=stealth'] (7.5, -1.0) to (-0.5, -1.0);         
	\node at (3.8, -1.3) {$T_{\scriptscriptstyle z} = T$}; 
	\end{tikzpicture}
	\caption{The prediction horizon of the ROCP along with the times $t_{\scriptscriptstyle k_z} <$ $t_{\scriptscriptstyle k_{z+1}}$ $< t_{\scriptscriptstyle k_{z}} + T_{z+1}$ $< t_{\scriptscriptstyle k_{z}}$ $+ T_{z}$, with $t_{\scriptscriptstyle k_{z}} = t_{\scriptscriptstyle k_z} + z h$ and $T_{\scriptscriptstyle k_{z}} = T - z h, z \in \mathbb{M}$.}
	\label{fig:time_sequence}
\end{figure}
\begin{remark}
In sampled-data NMPC bibliography an ROCP is defined over the time interval $s \in \{t_i, t_{i+1} = t_i + h, \dots , t_i+T\}$, where $T$ is the prediction horizon. Due to the fact that we have denoted by $i$ the agents, and the fact that the ROCP is solved for every time interval, we use the notation $s \in \{t_{\scriptscriptstyle k_z} = t_k, t_{\scriptscriptstyle k_{z+1}} = t_k + h, \dots, t_{\scriptscriptstyle k_z} + T_{\scriptscriptstyle z} = t_{\scriptscriptstyle k_z} +T\}$, instead. The indexes $k, z$ stands for the interval and for the sampling time, respectively. A graphical illustration of the presented time sequence is given in Fig. \ref{fig:time_sequence}.
\end{remark}

\begin{remark}
	Note that the predicted values are not the same with the actual closed-loop values due to the fact that agent $i$, can not know the estimation of the trajectories of its neighbors $\hat{\bar{x}}$, within a predicted horizon. Thus, the term $\hat{\bar{x}}$ is treated as a disturbance to the \emph{nominal system} \eqref{eq:error_dynamics}.
\end{remark}

The term $F_i: E_i \times \mathcal{U}_i \to \mathbb{R}_{\ge 0}$, stands for the \emph{running cost}, and is chosen as:
\begin{equation*}
F_i(e_i, u_i) = e_i^\top Q_i e_i + u_i^\top R_i u_i,
\end{equation*}
where $Q_i=\text{diag}\{q_{i_1},q_{i_2}\}, R_i=\text{diag}\{\xi_{i_1},\xi_{i_2}\}$, with $q_{i_\zeta}\in\mathbb{R}_{\ge 0}, \xi_{i_\zeta}\in\mathbb{R}_{>0}, \zeta\in\{1,2\} $. For the running cost, it holds that $F_i(0,0) = 0$, as well as:
\begin{align} \label{eq:F_lower_bound}
\underline{m}_i \|e_i\|^2 \leq F_i(e_i, u_i) \leq \bar{m}_i \|e_i\|^2,
\end{align}
where $\underline{m}_i, \bar{m}_i$ will be defined later. Note that $\underline{m}_i\|e_i\|^2$ is $\mathcal{K}$ function, according to Definition \ref{def:class_K}.

\begin{lemma}
	The running cost function $F_i(e_i, u_i)$ is Lipschitz continuous in $E_i \times \mathcal{U}_i$, with Lipschitz constant: $$L_{\scriptscriptstyle F_i} = 2\bar{\varepsilon}_i \sigma_{\max}(Q_i),$$ where:
	\begin{equation*}
	\bar{\varepsilon}_i = \sup_{e_i \in E_i} \{\|e_i\|\},
	\end{equation*}
	for all $e_i \in E_i, u_i \in \mathcal{U}_i$.
\end{lemma}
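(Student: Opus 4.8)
The plan is to establish the Lipschitz bound directly from the quadratic structure of $F_i$, observing first that the claimed constant $L_{F_i}=2\bar{\varepsilon}_i\,\sigma_{\max}(Q_i)$ involves only $Q_i$ and not $R_i$. This signals that the relevant Lipschitz property is with respect to the error state $e_i$, uniformly over the admissible inputs $u_i\in\mathcal{U}_i$; indeed, for any two pairs $(e_i,u_i),(e_i',u_i)\in E_i\times\mathcal{U}_i$ sharing the same input, the running cost difference reduces to $F_i(e_i,u_i)-F_i(e_i',u_i)=e_i^\top Q_i e_i-e_i'^\top Q_i e_i'$, since the $u_i^\top R_i u_i$ terms cancel identically. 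This reduction is the conceptual key and the only place where interpretation (rather than calculation) is needed.

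First I would exploit the symmetry of $Q_i$, which is immediate since $Q_i=\mathrm{diag}\{q_{i_1},q_{i_2}\}$ is diagonal with nonnegative entries. Using the algebraic identity
\begin{equation*}
e_i^\top Q_i e_i - e_i'^\top Q_i e_i' = (e_i+e_i')^\top Q_i (e_i-e_i'),
\end{equation*}
which follows from writing $e_i^\top Q_i e_i - e_i'^\top Q_i e_i' = e_i^\top Q_i(e_i-e_i') + (e_i-e_i')^\top Q_i e_i'$ and then applying $Q_i=Q_i^\top$, I reduce the problem to bounding a single bilinear form.

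Second I would apply the Cauchy--Schwarz inequality together with the induced matrix norm to obtain
\begin{equation*}
\bigl|(e_i+e_i')^\top Q_i (e_i-e_i')\bigr| \le \|e_i+e_i'\|\,\|Q_i\|\,\|e_i-e_i'\|,
\end{equation*}
and then invoke the fact that for a symmetric positive semidefinite matrix the induced Euclidean norm equals its largest singular value, so $\|Q_i\|=\sigma_{\max}(Q_i)$. The final step is to bound the remaining factor via the triangle inequality and the definition $\bar{\varepsilon}_i=\sup_{e_i\in E_i}\{\|e_i\|\}$: since both $e_i,e_i'\in E_i$, we have $\|e_i+e_i'\|\le\|e_i\|+\|e_i'\|\le 2\bar{\varepsilon}_i$. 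Combining these yields $|F_i(e_i,u_i)-F_i(e_i',u_i)|\le 2\bar{\varepsilon}_i\,\sigma_{\max}(Q_i)\,\|e_i-e_i'\|=L_{F_i}\|e_i-e_i'\|$, as claimed.

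I do not anticipate a genuine obstacle here; the computation is routine once the $u_i$-terms are seen to cancel. The only point requiring care is the correct reading of the statement: the Lipschitz property is in the state variable $e_i$ over the product domain $E_i\times\mathcal{U}_i$, and it is essential that $E_i$ be bounded so that $\bar{\varepsilon}_i<\infty$ — this is guaranteed because $E_i=X_i\oplus(-x_{i,k,\widetilde{\ell},\mathrm{des}})$ is a shifted subset of the compact workspace $W$. If instead one wanted joint Lipschitz continuity in $(e_i,u_i)$, the constant would additionally depend on $\sigma_{\max}(R_i)$ and $u_{\max}$, which is inconsistent with the stated $L_{F_i}$; hence the state-only interpretation is the intended and correct one.
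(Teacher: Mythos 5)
Your proof is correct and follows essentially the same route as the paper's: fix a common $u_i$ so the $u_i^\top R_i u_i$ terms cancel, reduce the quadratic difference to a bilinear form via symmetry of $Q_i$, bound it by $\sigma_{\max}(Q_i)$ times the norms, and use $\|e_i\|+\|e_i'\|\le 2\bar{\varepsilon}_i$. The only cosmetic difference is that you write the difference as the single factored form $(e_i+e_i')^\top Q_i(e_i-e_i')$ whereas the paper keeps two separate terms $e_1^\top Q_i(e_1-e_2)$ and $e_2^\top Q_i(e_1-e_2)$ and applies the triangle inequality (incidentally, your factorization is the sign-correct version of the paper's intermediate step, which contains a harmless sign typo absorbed by the triangle inequality).
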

\begin{proof}
	The proof can be found in Appendix \ref{app:lemma_1}.
\end{proof}
Note that, according to \eqref{eq:error_bound}, the terms $\|e_i\|$ are bounded, for all $i \in \mathcal{V}$. The terms $V_i: E_i \to \mathbb{R}_{ > 0}$ and $\mathcal{E}_i \subseteq E_i$ are the \emph{terminal penalty cost} and \emph{terminal set}, respectively, and are used to enforce the stability of the system. The terminal cost is given by:
\begin{equation*}
V_i(e_i(t)) = e_i(t)^\top P_i e_i(t).
\end{equation*}
where $P_i=\text{diag}\{p_{i_1},p_{i_2}\}$, with $p_{i_\zeta}\in\mathbb{R}_{> 0}, \zeta\in\{1,2\}$. We choose $\underline{m}_i = \{q_{i_1},q_{i_2}, \xi_{i_1}, \xi_{i_2}\}$ and $\bar{m}_i = \{q_{i_1},q_{i_2}, \xi_{i_1}, \xi_{i_2}\}$.

The solution of the nominal model \eqref{eq:error_dynamics} at time $s \in [t_{\scriptscriptstyle k_z}, t_{\scriptscriptstyle k_z}+T_{\scriptscriptstyle z}]$, starting at time $t_{\scriptscriptstyle k_z}$ from an initial condition $e_i(t_{\scriptscriptstyle k_z})$, applying a control input $u_{i}: [t_{\scriptscriptstyle k_z}, s] \to \mathcal{U}_i$ is denoted by: $$e_i(s; u_i(\cdot), e_i(t_{\scriptscriptstyle k_z})), s \in [t_{\scriptscriptstyle k_z}, t_{\scriptscriptstyle k_z}+T_{\scriptscriptstyle z}].$$ The predicted state of the system \eqref{eq:error_dynamics} at time $s \in [t_{\scriptscriptstyle k_z}, t_{\scriptscriptstyle k_z}+T_{\scriptscriptstyle z}]$ is denoted by: $$\hat{e}_i(s; u_i(\cdot), e_i(t_{\scriptscriptstyle k_z})), s \in [t_{\scriptscriptstyle k_z}, t_{\scriptscriptstyle k_z}+T_{\scriptscriptstyle z}],$$ and  it is based on the measurement of the state $e_i(t_{\scriptscriptstyle k_z})$ at time $t_{\scriptscriptstyle k_z}$, when a control input $u_{i}(\cdot; e_i(t_{\scriptscriptstyle k_z}))$ is applied to the system \eqref{eq:error_dynamics} for the time period $[t_{\scriptscriptstyle k_z}, s]$. Thus, it holds that:
\begin{equation} \label{eq:predicted_state_relation}
e_i(s) = \hat{e}_i(s; u_i(\cdot), e_i(s)), s \in [t_{\scriptscriptstyle k_z}, t_{\scriptscriptstyle k_z}+T_{\scriptscriptstyle z}].
\end{equation}

The state measurement enters the system via the initial condition of \eqref{eq:diff_mpc} at the sampling instant, i.e. the system model used to predict the future system behavior is initialized by the actual system state. The solution of the ROCP  \eqref{eq:mpc_minimazation}-\eqref{eq:mpc_terminal_set} at time $t_{\scriptscriptstyle k_z}$ provides an optimal control input denoted by $\hat{u}^\star_{i}(t; e(t_{\scriptscriptstyle k_z}))$, for $t \in [t_{\scriptscriptstyle k_z}, t_{\scriptscriptstyle k_z}+T_{\scriptscriptstyle z}]$. It defines the open-loop input that is applied to the system until the next sampling instant $t_{\scriptscriptstyle k_{z+1}}$:
\begin{equation} \label{eq:control_input_star}
u_{i}(t; e_i(t_i)) = \hat{u}^\star_{i}(t_{\scriptscriptstyle k_z}; e_i(t_{\scriptscriptstyle k_z})), t \in [t_{\scriptscriptstyle k_z}, t_{\scriptscriptstyle k_{z+1}}).
\end{equation} 
The corresponding \emph{optimal value function} is given by:
\begin{equation} \label{eq:J_star}
J_{i}^\star(e_i(t_{\scriptscriptstyle k_z})) \triangleq J_{i}(e_i(t_{\scriptscriptstyle k_z}), \hat{u}_{i}^\star(\cdot; e_i(t_{\scriptscriptstyle k_z}))).
\end{equation}
with $J_i(\cdot)$ as is given in \eqref{eq:mpc_minimazation}. The control input $u_i(t; e_i(t_{\scriptscriptstyle k_z}))$ is of the feedback form, since it is recalculated at each sampling instant using the new state information. Define an admissible control input as:

\begin{definition} \label{def:admissible_control_input}
	A control input $u_i: [t_{\scriptscriptstyle k_z}, t_{\scriptscriptstyle k_z}+T_{\scriptscriptstyle z}] \to \mathbb{R}^{2}$ for a state $e(t_{\scriptscriptstyle k_z})$ is called \emph{admissible}, if all the following hold:
	\begin{enumerate}
		\item $u_i(\cdot)$ is piecewise continuous;
		\item $u_i(s) \in \mathcal{U}_i, \forall \ s \in [t_{\scriptscriptstyle k_z}, t_{\scriptscriptstyle k_z}+T_{\scriptscriptstyle z}]$;
		\item $e_i(s; u_i(\cdot), e(t_{\scriptscriptstyle k_z})) \in E_i, \forall \ s \in [t_{\scriptscriptstyle k_z}, t_{\scriptscriptstyle k_z}+T_{\scriptscriptstyle z}]$;
		\item $e_i(T_{\scriptscriptstyle z}; u_i(\cdot), e(t_{\scriptscriptstyle k_z})) \in \mathcal{E}_i$;
	\end{enumerate}
\end{definition}
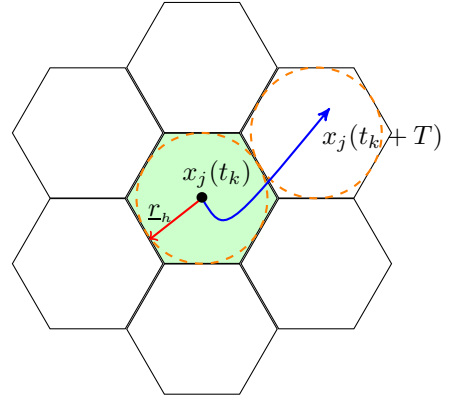
\begin{figure}[t!]
	\centering
	\begin{tikzpicture}  
	% plot polygon 0	
	\node[regular polygon, regular polygon sides=6, minimum width=2cm,draw, fill = green!20] (reg1) at (0,0){};   
	\node at (0.0, 0.0) {$\bullet$};
	%\node at (0.0, -0.32) {$x_i(t_k)$};
	%\node at (0.0, +0.33) {$P(i, k)$};
	
	% plot polygon 2
	%\node at (1.52, 0.87) {$\widetilde{P}(i,k,2)$};
	\node[regular polygon, regular polygon sides=6, minimum width=2cm,draw] (reg2) at (1.52,0.87){};
	
	% plot polygon 3
	%\node at (1.52,-0.87) {$\widetilde{P}(i,k,3)$};
	\node[regular polygon, regular polygon sides=6, minimum width=2cm,draw] (reg3) at (1.52,-0.87){};
	
	% plot polygon 4
	%\node at (0,-1.74) {$\widetilde{P}(i,k,4)$};
	\node[regular polygon, regular polygon sides=6, minimum width=2cm,draw] (reg4) at (0,-1.74){};
	
	% plot polygon 5
	%\node at (-1.52,-0.87) {$\widetilde{P}(i,k,5)$};
	\node[regular polygon, regular polygon sides=6, minimum width=2cm,draw] (reg5) at (-1.52,-0.87){};
	
	% plot polygon 6
	%\node at (-1.52,0.87) {$\widetilde{P}(i,k,6)$};
	\node[regular polygon, regular polygon sides=6, minimum width=2cm,draw] (reg5) at (-1.52,0.87){};
	
	% plot polygon 1
	%\node at (0,1.74) {$\widetilde{P}(i,k,1)$};
	\node[regular polygon, regular polygon sides=6, minimum width=2cm,draw] (reg5) at (0,1.74){};
	
	% draw inner circle 
	\draw[orange, thick, dashed] (0,0) circle (0.87cm);
	\draw[orange, thick, dashed] (1.52,0.87) circle (0.87cm);
	%\draw[orange, thick, dashed] (-1.52,-0.87) circle (0.87cm);
	
	% draw worst case distance
	\draw[->, thick, red] (-0.0, -0.0) -- (-0.70, -0.55);
	%\draw[->, thick, red] (0.70,0.55) -- (-0.70, -0.55);
	
	\draw [line width = 0.04cm, color=blue,thick,->,>=stealth'] (0.0, 0.0) .. controls (0.3, -0.5) .. (1.70,1.2);
	
	\node at (0.0, 0.0) {$\bullet$};
	\node at (0.2, +0.32) {$x_j(t_k)$};
	\node at (2.4, 0.8) {$x_j(t_k+T)$};
	\node at (-0.55, -0.20) {$\underline{r}_{\scriptscriptstyle h}$};
	\end{tikzpicture}
	\caption{Illustration of agent $j$ occupying region $P(j, k)$, depicted by green, at time $t_k = t_0+k T$ along with the regions $\bar{P}(j,k)$. It is desired for agent $j$ to move to region $\widetilde{P}(j,k,2)$ at precise time $T$. The inscribed circle of regions $P(j, k), \widetilde{P}(j, k, 2)$ are depicted with dashed orange color. The radius of the inscribed circle of the depicted hexagons is given by $\underline{r}_{\scriptscriptstyle h} = \frac{\sqrt{3}}{2}R$. By taking into consideration that each agent is moving at most to one neighboring region, according to the constraint set $X_j$, the following holds: $\sup \{\|x-y\|: x \in P(j,k), y \in \bar{P}(j,k)\} = 4\underline{r}_{\scriptscriptstyle h} = 2 \sqrt{3} R.$}
	\label{fig:largest_distance}
\end{figure}
\begin{property}
For the given hexagonal regions with side length $R$, the radius of the inscribed circle is given by $\underline{r}_{\scriptscriptstyle h} = \frac{\sqrt{3}}{2}R$ (two inscribed circles for the given regions are depicted with orange in Fig. \ref{fig:largest_distance}). Thus, according to Fig. \ref{fig:largest_distance}, an upper bound of the norm of differences between the actual position $x_j$ and the estimated position $\hat{x}_j$ of the agent's $i$ neighbors states, is given by:
\begin{equation} \label{eq:bound_geitonon}
\|x_j-\hat{x}_j\| \le 4 \underline{r}_{\scriptscriptstyle h} = 2 \sqrt{3} R, j \in \mathcal{N}_i,
\end{equation}
due to the fact that each agent can transit at most to a neighboring region, according to the constraint set $X_i$.
\end{property}

\begin{lemma} \label{lemma:e-hat_e}
	In view of Assumptions \ref{ass:dynamic_control_bounds}, \ref{ass:lipsitch_f_x_bar_x}, the difference between the actual measurement $e_i(s) = e_i(s; u_i(s; e_i(t_{\scriptscriptstyle k_z})), e_i(t_{\scriptscriptstyle k_z}))$ at time $s \in [t_{\scriptscriptstyle k_z}, t_{\scriptscriptstyle k_z}+T_{\scriptscriptstyle z}]$ and the predicted state $\hat{e}_i(s; u_i(s; e_i(t_{\scriptscriptstyle k_z})), e_i(t_{\scriptscriptstyle k_z}))$ at the same time under the same control law $u_i(s; e_i(t_{\scriptscriptstyle k_z}))$, starting at the same initial state $e_i(t_{\scriptscriptstyle k_z})$, is upper bounded by:
	\begin{align}
	&\|e_i(s) - \hat{e}_i(s; u_i(s; e_i(t_{\scriptscriptstyle k_z})), e_i(t_{\scriptscriptstyle k_z}))\| \le \rho_i(s-t_{\scriptscriptstyle k_z}),
	\end{align}
	where $\rho_i: \mathbb{R}_{\ge 0} \to \mathbb{R}$, with:
	\begin{align}
	&\rho_i(y) = \min \Big\{ \widetilde{\rho}_i \left[ e^{L_i y}- 1 \right],  \notag \\ 
	&\hspace{15mm} 2 \|e_i(t_{\scriptscriptstyle k_z})\|+ 2 y(M+u_{\max}) \Big\},
	\end{align}
	and
	\begin{equation}
	\widetilde{\rho}_i = \frac{2 \sqrt{3} R \bar{L}_i N_i}{L_i}.
	\end{equation}
\end{lemma}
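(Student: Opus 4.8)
The plan is to compare the two trajectories directly, exploiting the fact that $e_i(\cdot)$ and $\hat{e}_i(\cdot)$ obey the same error dynamics \eqref{eq:error_dynamics} under the same control signal and from the same initial value $\hat{e}_i(t_{\scriptscriptstyle k_z})=e_i(t_{\scriptscriptstyle k_z})$, the only difference being the neighbor argument: $e_i$ is driven by the true neighbor positions $\bar{x}_i(\cdot)$, whereas the predicted $\hat{e}_i$ is driven by the internal estimate $\hat{\bar{x}}_i(\cdot)$. Since the input is identical in both equations, it cancels upon subtraction and the whole discrepancy is generated by the mismatch of the neighbor arguments of $f_i$. Concretely, for $s \in [t_{\scriptscriptstyle k_z}, t_{\scriptscriptstyle k_z}+T_{\scriptscriptstyle z}]$ I would write
\begin{align*}
e_i(s) - \hat{e}_i(s) = \int_{t_{\scriptscriptstyle k_z}}^{s} \Big[ & f_i(e_i(\tau)+x_{i,k,\widetilde{\ell},\text{des}},\bar{x}_i(\tau)) \\
& - f_i(\hat{e}_i(\tau)+x_{i,k,\widetilde{\ell},\text{des}},\hat{\bar{x}}_i(\tau)) \Big]\,d\tau,
\end{align*}
and then add and subtract $f_i(\hat{e}_i(\tau)+x_{i,k,\widetilde{\ell},\text{des}},\bar{x}_i(\tau))$ inside the integrand so that the two Lipschitz estimates \eqref{eq:lip_1} and \eqref{eq:lip_2} apply separately to the state mismatch and the neighbor mismatch, respectively.

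With $\Delta_i(s) \triangleq \|e_i(s)-\hat{e}_i(s)\|$, this produces the integral inequality
\begin{equation*}
\Delta_i(s) \le \int_{t_{\scriptscriptstyle k_z}}^{s} \Big[ L_i \Delta_i(\tau) + \bar{L}_i \|\bar{x}_i(\tau)-\hat{\bar{x}}_i(\tau)\| \Big]\,d\tau .
\end{equation*}
The load-bearing step is then a uniform bound on the neighbor mismatch: using the subadditivity $\|\bar{x}_i-\hat{\bar{x}}_i\| \le \sum_{j \in \mathcal{N}_i}\|x_j-\hat{x}_j\|$ (which holds because $\sqrt{\sum a_j^2}\le \sum a_j$ for the block structure of $\bar{x}_i$) together with the per-neighbor geometric bound \eqref{eq:bound_geitonon}, each of the $N_i$ terms is at most $2\sqrt{3}R$, so $\|\bar{x}_i(\tau)-\hat{\bar{x}}_i(\tau)\| \le 2\sqrt{3}R\,N_i$ for every $\tau$. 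This is where the workspace geometry enters, and it is the only genuinely nonroutine ingredient.

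Substituting the constant forcing term reduces the estimate to $\dot{\Delta}_i \le L_i \Delta_i + 2\sqrt{3}R\,\bar{L}_i N_i$ (a.e.) with $\Delta_i(t_{\scriptscriptstyle k_z})=0$. Applying the comparison lemma to the scalar equation $\dot{\delta}=L_i\delta+2\sqrt{3}R\,\bar{L}_i N_i$, $\delta(t_{\scriptscriptstyle k_z})=0$, whose solution is $\delta(s)=\tfrac{2\sqrt{3}R\,\bar{L}_i N_i}{L_i}\big(e^{L_i(s-t_{\scriptscriptstyle k_z})}-1\big)$, yields exactly $\Delta_i(s)\le \widetilde{\rho}_i\big(e^{L_i(s-t_{\scriptscriptstyle k_z})}-1\big)$ with $\widetilde{\rho}_i=\tfrac{2\sqrt{3}R\,\bar{L}_i N_i}{L_i}$, the first term of the claimed minimum.

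For the second term I would discard the coupling altogether and use $\Delta_i(s)\le \|e_i(s)\|+\|\hat{e}_i(s)\|$. Because the predicted trajectory is generated by the same dynamics \eqref{eq:error_dynamics} with an admissible input ($\|u_i\|\le u_{\max}$, $\|f_i\|\le M$), Property \ref{property:error_bound} applies verbatim to both $e_i(\cdot)$ and $\hat{e}_i(\cdot)$; adding the two copies of \eqref{eq:error_bound} and using the common initial value gives $\Delta_i(s)\le 2\|e_i(t_{\scriptscriptstyle k_z})\|+2(s-t_{\scriptscriptstyle k_z})(M+u_{\max})$. Since both derived upper bounds hold simultaneously, their minimum is again an upper bound, which is precisely $\rho_i(s-t_{\scriptscriptstyle k_z})$. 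The only delicate point is the regularity needed to pass from the integral to the differential inequality and to invoke the comparison lemma; this is covered by the absolute continuity of the solutions, guaranteed since the applied input is piecewise continuous (Definition \ref{def:admissible_control_input}), after which the remaining computations are routine.
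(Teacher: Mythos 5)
Your proposal is correct and takes essentially the same route as the paper's own proof: the same integral representation with cancellation of the common input, the same add-and-subtract trick enabling the two Lipschitz bounds \eqref{eq:lip_1}--\eqref{eq:lip_2}, the same geometric bound $\|\bar{x}_i-\hat{\bar{x}}_i\|\le 2\sqrt{3}RN_i$ on the neighbor mismatch, and the same application of Property 1 to both trajectories for the second term of the minimum. The only (cosmetic) difference is that the paper closes the first estimate by applying the Gronwall-Bellman inequality directly to the integral inequality, whereas you pass to the differential inequality $\dot{\Delta}_i \le L_i\Delta_i + 2\sqrt{3}R\bar{L}_iN_i$ (which indeed holds a.e.\ directly from the dynamics, since the derivative of the norm is bounded by the norm of the derivative) and invoke the comparison lemma -- two interchangeable tools yielding the identical bound $\widetilde{\rho}_i[e^{L_i(s-t_{\scriptscriptstyle k_z})}-1]$.
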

\begin{proof}
	The proof can be found in Appendix \ref{app:lemma_e_ehat}.
\end{proof}

The satisfaction of the constraint on the state along the prediction horizon depends on the future evolution of the neighboring agents trajectories. Under Assumptions \eqref{ass:lipsitch_f_x_bar_x}, \eqref{ass:dynamic_control_bounds} of Lipschitz continuity and bounds of the nominal model, respectively, it is possible to compute a bound on the future effect of the disturbance on the system as is given by Lemma \ref{lemma:e-hat_e}. Then, by considering this effect on the state constraint on the nominal prediction, it is possible to guarantee that the evolution of the real state of the system will be admissible all the time. In view of latter, the state constraint set $E$ of the standard NMPC formulation, is being replaced by a restricted constrained set $E_{s-t_{\scriptscriptstyle k_z}}^{i} \subseteq E_i$ in \eqref{eq:mpc_constrained_set}. This state constraint's tightening for the nominal system \eqref{eq:error_dynamics} is a key ingredient of the robust NMPC controller and guarantees that the evolution of the real system will be admissible. Authors in \cite{camacho_2002_input_to_state, alina_ecc_2011} has considered such a Robust NMPC formulation. The restricted constrained set is then defined as $E_{s-t_{\scriptscriptstyle k_z}}^{i}= E_i \sim B^i_{s-t_{\scriptscriptstyle k_{z}}}$, where:
\begin{align}
&B^i_{s-t_{\scriptscriptstyle k_{z}}} = \notag \\
&\left\{e_i \in \mathbb{R}^{2}: \|e_i(s)\| \le \rho_i(s-t_i)\right\}, s \in [t_{\scriptscriptstyle k_z}, t_{\scriptscriptstyle k_z}+T_{\scriptscriptstyle z}]. \notag
\end{align}

\begin{property}
	For every $s \in [t_{\scriptscriptstyle k_z}, t_{\scriptscriptstyle k_z}+T_{\scriptscriptstyle z}]$, we have that if $$\hat{e}_i(s; u_i(s; e(t_{\scriptscriptstyle k_z})), e_i(t_{\scriptscriptstyle k_z})) \in E^i_{s-t_{\scriptscriptstyle k_{z}}} = E_i \sim B^i_{s-t_{\scriptscriptstyle k_{z}}} \subseteq E_i,$$ then the real state satisfies the constraint $E_i$,  i.e.,  $e_i(s) \in E_i.$
\end{property}
\begin{proof}
The proof can be found in Appendix \ref{app:property_tighthened_sets}.
\end{proof}
For the feasibility and convergence proofs of the ROCP the following assumptions are required.
\begin{assumption} \label{ass:admissible_u_f}
	Assume that there exists a \emph{local stabilizing controller} $u_{f,i} = \kappa_i(e_i) \in \mathcal{U}_i$ satisfying:
	\begin{equation} \label{eq:admiss_controller}
	\displaystyle \frac{\partial V_i}{\partial{e_i}} \left[g_i(e_i, \bar{x}_i, \kappa_i(e_i))\right] + F_i(e_i, \kappa_i(e_i)) \le 0, \forall \ e_i \in \Phi_i,
	\end{equation}
	where $\Phi_i$ is a set given by:
	\begin{equation*}
	\Phi_i \triangleq \{e_i \in \mathbb{R}^{2}: V_i(e_i) \le \alpha_{1,i}\}, \alpha_{1,i} > 0,
	\end{equation*}
	such that:
	\begin{equation*}
	\Phi_i \subseteq \mathbb{E}_i \triangleq  \{e_i \in E^i_{\scriptscriptstyle T_z}: \kappa_i(e_i) \in \mathcal{U}_i\},
	\end{equation*}
	where $E^i_{\scriptscriptstyle T_z} = E_i \sim B^i_{\scriptscriptstyle T_z}.$
\end{assumption}

\begin{lemma} \label{eq:lemma_L_v}
	The terminal penalty function $V_i(\cdot)$ is Lipschitz in $\Phi_i$, with Lipschitz constant $L_{\scriptscriptstyle V, i} = 2 \sigma_{\max}(P_i) \sqrt{ \frac{\alpha_{1,i}}{\lambda_{\min}(P_i)}}$, for all $e_i(t) \in \Phi_i$.
\end{lemma}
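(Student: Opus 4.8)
The plan is to exploit the quadratic structure of $V_i$. Since $V_i(e_i) = e_i^\top P_i e_i$ is continuously differentiable with gradient $\nabla V_i(e_i) = 2 P_i e_i$, I would first establish a uniform bound on $\|\nabla V_i(\cdot)\|$ over the set $\Phi_i$ and then convert this into a Lipschitz estimate via a mean-value argument along line segments. The constant $L_{\scriptscriptstyle V,i}$ in the statement is then recovered exactly as the claimed gradient bound.

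First I would note that $\Phi_i = \{e_i \in \mathbb{R}^2 : e_i^\top P_i e_i \le \alpha_{1,i}\}$ is convex, being a sublevel set of the convex quadratic form $V_i$ (recall $P_i = \text{diag}\{p_{i_1}, p_{i_2}\}$ with $p_{i_\zeta} > 0$, hence positive definite). Consequently, for any two points $e_i, e_i' \in \Phi_i$, the entire segment $e_i' + t(e_i - e_i')$, $t \in [0,1]$, remains inside $\Phi_i$, which is what allows the subsequent gradient bound to be applied along the whole path.

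Second, I would bound the gradient on $\Phi_i$. For every $\xi \in \Phi_i$, I use $\|\nabla V_i(\xi)\| = \|2 P_i \xi\| \le 2\,\sigma_{\max}(P_i)\,\|\xi\|$ together with the standard quadratic-form inequality $\lambda_{\min}(P_i)\,\|\xi\|^2 \le \xi^\top P_i \xi \le \alpha_{1,i}$, which yields $\|\xi\| \le \sqrt{\alpha_{1,i}/\lambda_{\min}(P_i)}$. Combining the two gives
\begin{equation*}
\|\nabla V_i(\xi)\| \le 2\,\sigma_{\max}(P_i)\sqrt{\frac{\alpha_{1,i}}{\lambda_{\min}(P_i)}} = L_{\scriptscriptstyle V,i}, \quad \forall\, \xi \in \Phi_i.
\end{equation*}

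Finally, I would combine these ingredients through the fundamental theorem of calculus: writing $V_i(e_i) - V_i(e_i') = \int_0^1 \nabla V_i\big(e_i' + t(e_i - e_i')\big)^\top (e_i - e_i')\, dt$ and applying the Cauchy--Schwarz inequality together with the uniform gradient bound gives $|V_i(e_i) - V_i(e_i')| \le L_{\scriptscriptstyle V,i}\,\|e_i - e_i'\|$ for all $e_i, e_i' \in \Phi_i$, which is precisely the claim. I do not anticipate a serious obstacle here; the only step that genuinely needs care is guaranteeing that the connecting segment stays within $\Phi_i$, so that the gradient estimate is valid along the entire path of integration, and this is exactly why the convexity of the sublevel set $\Phi_i$ is invoked in the second step.
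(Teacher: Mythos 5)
Your proof is correct and recovers exactly the paper's constant, but it takes a genuinely different route. The paper's argument is purely algebraic: after deriving the same norm bound $\|e_i\| \le \sqrt{\alpha_{1,i}/\lambda_{\min}(P_i)}$ on $\Phi_i$, it adds and subtracts the cross term $e_1^\top P_i e_2$ to write the difference of quadratic forms as a sum of two bilinear terms in $(e_1 - e_2)$, and bounds each one via the inequality $|x^\top A y| \le \sigma_{\max}(A)\|x\|\|y\|$; this yields the Lipschitz estimate directly, with no reference to gradients or to the geometry of $\Phi_i$. Your mean-value/FTC argument instead requires the connecting segment to remain in the region where the gradient bound holds, which is why you must invoke convexity of the sublevel set $\Phi_i$ --- a valid step here since $P_i$ is positive definite, but an extra structural hypothesis the paper never needs: the algebraic computation works for any pair of points satisfying the norm bound, whether or not the set containing them is convex. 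What your route buys in exchange is generality, since it applies verbatim to any $C^1$ terminal cost with a uniformly bounded gradient on a convex set, whereas the paper's manipulation is specific to quadratic forms. As a side remark, your reliance on convexity of $\Phi_i$ itself is not essential even within your own approach: the gradient bound only uses $\|\xi\| \le \sqrt{\alpha_{1,i}/\lambda_{\min}(P_i)}$, so you could integrate along the segment inside the Euclidean ball of that radius, which is convex regardless of $\Phi_i$.
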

\begin{proof}
	The proof can be found in Appendix \ref{app:lemma_lip_L_v}.
\end{proof}

Once the set $\Phi_i$ is computed, the terminal constraint set $\mathcal{E}_i$ is given by the following. Supposing that Assumption \ref{ass:admissible_u_f} holds. Then, by choosing: $\mathcal{E}_i = \{e_i \in \mathbb{R}^2: \|e_i\| \le \sqrt{\frac{\alpha_{2,i}}{\lambda_{\min}(P_i)}} < r_{\scriptscriptstyle h}\}$, with $\alpha_{2,i} \in (0, \alpha_{1,i})$, we guarantee the following: $1)$ $\mathcal{E}_i \subseteq \widetilde{P}(i, k, \widetilde{\ell})$, i.e. the terminal set is a subset of the desired neighboring region; $2)$ for all $e_i \in \Phi_i$ it holds that $g_i(e_i, \kappa_i(e_i)) \in \mathcal{E}_i$.

%\begin{assumption}
%There exist $\alpha_{2,i} \in (0, \alpha_{1,i})$ such that $\mathcal{E}_i \triangleq \{e_i \in %\mathbb{R}^{2}: V_i(e_i) \le \alpha_{2,i}\} \subseteq \widetilde{P}(i, k, \widetilde{\ell}), %\widetilde{\ell} \in \mathbb{L}$ and for all $e_i \in \Phi_i \Rightarrow g_i(e_i, \kappa_i(e_i)) %\in \mathcal{E}_i$.
%\end{assumption}

The following two lemmas are required in order to prove the basic Theorem or this paper.
\begin{lemma}\label{lemma:x_y_proof}
	Let $s \ge t_{\scriptscriptstyle k_{z+1}}$, $x \in E^i_{s-t_{\scriptscriptstyle k_{z}}}$ and $y \in \mathbb{R}^{2}$ such that: $\|x-y\| \le \rho_i(t_{\scriptscriptstyle k_{z+1}}-t_{\scriptscriptstyle k_{z}}) = \rho_i(h)$, as $\rho_i$ is given in Lemma \ref{lemma:e-hat_e}. Then, it holds that $y \in E^i_{s-t_{\scriptscriptstyle k_{z+1}}}$.
\end{lemma}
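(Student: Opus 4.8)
The plan is to translate the claim into set algebra and then into a single scalar inequality on $\rho_i$. Recall that $E^i_{s-t_{k_z}} = E_i \sim B^i_{s-t_{k_z}}$, where $B^i_{\tau}$ is the origin-centred closed ball of radius $\rho_i(\tau)$. By the definition of the Pontryagin difference, $x \in E^i_{s-t_{k_z}}$ is equivalent to $\{x\}\oplus B^i_{s-t_{k_z}} \subseteq E_i$; likewise the hypothesis $\|x-y\|\le\rho_i(h)$ says $y\in\{x\}\oplus B^i_{h}$, and the conclusion $y\in E^i_{s-t_{k_{z+1}}}$ is equivalent to $\{y\}\oplus B^i_{s-t_{k_{z+1}}}\subseteq E_i$. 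First I would chain these: $\{y\}\oplus B^i_{s-t_{k_{z+1}}} \subseteq \{x\}\oplus B^i_{h}\oplus B^i_{s-t_{k_{z+1}}}$, and since the Minkowski sum of two origin-centred balls is again a ball whose radius is the sum of the radii, the right-hand side equals $\{x\}\oplus\{e_i\in\mathbb{R}^2:\|e_i\|\le \rho_i(h)+\rho_i(s-t_{k_{z+1}})\}$.

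The whole statement therefore reduces to the inclusion $B^i_{h}\oplus B^i_{s-t_{k_{z+1}}}\subseteq B^i_{s-t_{k_z}}$, i.e. to the scalar inequality $\rho_i(h)+\rho_i(s-t_{k_{z+1}})\le\rho_i(s-t_{k_z})$: once this holds, $\{y\}\oplus B^i_{s-t_{k_{z+1}}}\subseteq\{x\}\oplus B^i_{s-t_{k_z}}\subseteq E_i$, which is exactly $y\in E^i_{s-t_{k_{z+1}}}$. Since $t_{k_{z+1}}=t_{k_z}+h$, writing $a=h$ and $b=s-t_{k_{z+1}}$ (the hypothesis $s\ge t_{k_{z+1}}$ guarantees $b\ge 0$) turns the requirement into the superadditivity property $\rho_i(a)+\rho_i(b)\le\rho_i(a+b)$ for $a,b\ge 0$, with $\rho_i$ as in Lemma \ref{lemma:e-hat_e}.

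To establish superadditivity I would split $\rho_i=\min\{\rho_{i,1},\rho_{i,2}\}$, where $\rho_{i,1}(\tau)=\widetilde{\rho}_i(e^{L_i\tau}-1)$ and $\rho_{i,2}(\tau)=2\|e_i(t_{k_z})\|+2\tau(M+u_{\max})$. For the Gr\"onwall branch the claim is immediate: $\rho_{i,1}$ is convex with $\rho_{i,1}(0)=0$, and any convex function vanishing at the origin is superadditive on $[0,\infty)$, since convexity gives $\rho_{i,1}(a)\le\tfrac{a}{a+b}\rho_{i,1}(a+b)$ and $\rho_{i,1}(b)\le\tfrac{b}{a+b}\rho_{i,1}(a+b)$, and adding yields $\rho_{i,1}(a)+\rho_{i,1}(b)\le\rho_{i,1}(a+b)$. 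Using $\rho_i\le\rho_{i,1}$ this already delivers $\rho_i(a)+\rho_i(b)\le\rho_{i,1}(a+b)$, so the only remaining task is to dominate the affine branch $\rho_{i,2}$ evaluated at $a+b$.

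I expect this last point to be the main obstacle, precisely because $\rho_{i,2}$ carries the positive intercept $2\|e_i(t_{k_z})\|$ and is therefore \emph{not} superadditive on its own; the crude bound $\rho_i(a)+\rho_i(b)\le\rho_{i,2}(a)+\rho_{i,2}(b)$ overshoots $\rho_{i,2}(a+b)$ by exactly that intercept. To close the gap I would exploit that one argument is the single sampling step $h$ and that the two branches cross only once, because $\rho_{i,1}-\rho_{i,2}$ is convex and non-positive at $0$, so $\{\rho_{i,1}\le\rho_{i,2}\}=[0,\tau^\ast]$ for some $\tau^\ast$. On $[0,\tau^\ast]$ both arguments lie in the Gr\"onwall regime and superadditivity of $\rho_{i,1}$ finishes the proof; in the affine regime the inequality collapses, after cancelling the common terms, to $\rho_i(h)\le 2h(M+u_{\max})$, which I would secure from the smallness of the sampling period, equivalently from $\widetilde{\rho}_i(e^{L_i h}-1)\le 2h(M+u_{\max})$, recording it as the standing requirement that makes the constraint tightening self-consistent. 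Monotonicity of $\rho_i$ is used throughout to compare the two prediction ages $s-t_{k_{z+1}}<s-t_{k_z}$.
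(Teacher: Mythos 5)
Your reduction of the lemma to the scalar inequality $\rho_i(h)+\rho_i(s-t_{\scriptscriptstyle k_{z+1}})\le\rho_i(s-t_{\scriptscriptstyle k_{z}})$ is exactly the paper's strategy: this is its inequality \eqref{eq:gamma_inequality}, and the paper's closing step---picking $\phi\in B^i_{s-t_{\scriptscriptstyle k_{z+1}}}$, forming $z=x-y+\phi$, and unwinding the Pontryagin difference elementwise---is precisely your Minkowski-sum chain written pointwise. So on the set-algebra side the two arguments coincide.

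The genuine divergence is in how that scalar inequality is proved, and here you have put your finger on a real defect in the paper's own proof. The paper splits $\rho_i$ into the same two branches as you and claims \emph{both} are superadditive: for the affine branch it asserts the ``equivalence'' \eqref{eq:inequaliti_min_rwho_2}, i.e.\ $B_1+B_2\le B_3$ in its notation, which is false, since $(t_{\scriptscriptstyle k_{z+1}}-t_{\scriptscriptstyle k_{z}})+(s-t_{\scriptscriptstyle k_{z+1}})=s-t_{\scriptscriptstyle k_{z}}$ holds with equality and hence $B_1+B_2=B_3+2\|e_i(t_{\scriptscriptstyle k_z})\|$---an overshoot by exactly the intercept you identified; consequently the next step, $\min\{A_1+A_2,B_1+B_2\}\le\min\{A_3,B_3\}$, also does not follow (it needs $B_1+B_2\le B_3$ whenever $B_3<A_1+A_2$). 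Your proposal concedes that the affine branch is not superadditive and restores the inequality under the standing requirement $\rho_i(h)\le 2h(M+u_{\max})$, and with that requirement the argument does close cleanly in two cases: if the minimum defining $\rho_i(h+b)$ is attained by the exponential branch, use convexity-superadditivity of $\rho_{i,1}$ together with $\rho_i\le\rho_{i,1}$; if by the affine branch, use $\rho_i(b)\le\rho_{i,2}(b)$ plus the requirement. (Your detour through the crossing point $\tau^\ast$ is unnecessary and, as phrased, slightly off---$a,b\le\tau^\ast$ does not prevent $a+b>\tau^\ast$---but since both of the needed bounds appear in your write-up, nothing is lost.) Two caveats: first, the requirement cannot be secured by ``smallness of the sampling period'' alone, because $e^{L_ih}-1\ge L_ih$ forces $\widetilde{\rho}_iL_i\le 2(M+u_{\max})$, i.e.\ $\sqrt{3}R\bar{L}_iN_i\le M+u_{\max}$, a genuine restriction on the problem data; second, and more importantly, this means you have proved a \emph{conditional} version of Lemma \ref{lemma:x_y_proof}, with a hypothesis the paper never states, whereas the paper ``proves'' the unconditional statement only by way of an invalid step. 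Your route is the sound one, and the condition you record is what the paper would need to add for the lemma (and the feasibility argument built on it) to stand.
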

\begin{proof}
	The proof can be found in Appendix \ref{app:y_E_t_i_plus_1}.
\end{proof}

\begin{lemma} \label{lemma:bounded_trajectories}
	Let $s \ge t_{\scriptscriptstyle k_{z}}$. The difference between two estimated trajectories $\hat{e}_i(s; u_i(\cdot), e_i(t_{\scriptscriptstyle k_{z+1}})), \hat{e}_i(s; u_i(\cdot), e_i(t_{\scriptscriptstyle k_{z}}))$ at time $s$, starting from from initial points $t_{\scriptscriptstyle k_{z+1}}$, $t_{\scriptscriptstyle k_{z}}$, respectively, under the same control input $u_i(\cdot)$, is upper bounded by:
	\begin{align}
	&\|\hat{e}_i(s; u_i(\cdot), e_i(t_{\scriptscriptstyle k_{z+1}}))- \hat{e}_i(s; u_i(\cdot), e_i(t_{\scriptscriptstyle k_{z}}))\| \le \notag \\
	&\hspace{40mm}\rho_i(t_{\scriptscriptstyle k_{z+1}}-t_{\scriptscriptstyle k_{z}}) = \rho_i(h).
	\end{align}
\end{lemma}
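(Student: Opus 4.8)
The plan is to reduce the comparison of the two predicted trajectories to a single bound at their common time $t_{\scriptscriptstyle k_{z+1}}$ and then to control how that bound propagates forward. First I would note that both $\hat{e}_i(s; u_i(\cdot), e_i(t_{\scriptscriptstyle k_{z+1}}))$ and $\hat{e}_i(s; u_i(\cdot), e_i(t_{\scriptscriptstyle k_z}))$ are solutions of the \emph{same} nominal system \eqref{eq:error_dynamics}, driven by the \emph{same} control $u_i(\cdot)$ and the \emph{same} predicted neighbour signal $\hat{\bar{x}}_i$; they differ only in their initialization, namely the state $e_i(t_{\scriptscriptstyle k_z})$ at time $t_{\scriptscriptstyle k_z}$ versus the state $e_i(t_{\scriptscriptstyle k_{z+1}})$ at time $t_{\scriptscriptstyle k_{z+1}}$. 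By uniqueness of solutions of \eqref{eq:error_dynamics}, for every $s \ge t_{\scriptscriptstyle k_{z+1}}$ the first trajectory equals the nominal flow restarted at $t_{\scriptscriptstyle k_{z+1}}$ from the \emph{predicted} value $\hat{e}_i(t_{\scriptscriptstyle k_{z+1}}; u_i(\cdot), e_i(t_{\scriptscriptstyle k_z}))$, while the second is the nominal flow restarted at $t_{\scriptscriptstyle k_{z+1}}$ from the \emph{actual} state $e_i(t_{\scriptscriptstyle k_{z+1}})$. Hence the entire difference is generated by the gap between these two seeds at $t_{\scriptscriptstyle k_{z+1}}$.

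The next step is to bound that seeding gap, and here I would invoke Lemma \ref{lemma:e-hat_e} over the single sampling step $[t_{\scriptscriptstyle k_z}, t_{\scriptscriptstyle k_{z+1}}]$: since the actual and the predicted trajectories share the initial state $e_i(t_{\scriptscriptstyle k_z})$ and the control $u_i(\cdot)$, evaluating the lemma at $s = t_{\scriptscriptstyle k_{z+1}}$ gives at once
\begin{equation*}
\big\| e_i(t_{\scriptscriptstyle k_{z+1}}) - \hat{e}_i(t_{\scriptscriptstyle k_{z+1}}; u_i(\cdot), e_i(t_{\scriptscriptstyle k_z})) \big\| \le \rho_i(t_{\scriptscriptstyle k_{z+1}} - t_{\scriptscriptstyle k_z}) = \rho_i(h).
\end{equation*}
To carry this to $s \ge t_{\scriptscriptstyle k_{z+1}}$ I would subtract the integral representations of the two flows: because $u_i(\cdot)$ and $\hat{\bar{x}}_i$ are common to both, the neighbour (disturbance) contribution cancels and only the coupling term survives, so the Lipschitz bound \eqref{eq:lip_1} of Assumption \ref{ass:lipsitch_f_x_bar_x} yields a Grönwall estimate for the norm of the difference seeded by $\rho_i(h)$. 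In parallel, bounding the magnitude of each trajectory separately through Property \ref{property:error_bound} and applying the triangle inequality reproduces the linear branch $2\|e_i(t_{\scriptscriptstyle k_z})\| + 2y(M+u_{\max})$ of $\rho_i$. Taking the minimum of the two estimates is meant to recover the form of $\rho_i(h)$, which is exactly the quantity that Lemma \ref{lemma:x_y_proof} then consumes to pass from $E^i_{s-t_{\scriptscriptstyle k_z}}$ to $E^i_{s-t_{\scriptscriptstyle k_{z+1}}}$ in the recursive feasibility argument.

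The hard part will be the propagation step, and this is where I expect the real obstacle to sit. At $s = t_{\scriptscriptstyle k_{z+1}}$ the gap is exactly $\rho_i(h)$ by construction, but a naive Grönwall bound over $[t_{\scriptscriptstyle k_{z+1}}, s]$ inflates it by the factor $e^{L_i(s-t_{\scriptscriptstyle k_{z+1}})}$, which would destroy the uniform-in-$s$ constant $\rho_i(h)$ that the feasibility proof requires. The point that must be exploited is that the two predicted trajectories are driven by \emph{identical} data, so their difference obeys a homogeneous equation rather than a disturbance-forced one; the estimate therefore has to be organized as the accumulation of a mismatch over the \emph{single} step $[t_{\scriptscriptstyle k_z}, t_{\scriptscriptstyle k_{z+1}}]$ of length $h$ — in the same two-branch shape that defines $\rho_i$ in Lemma \ref{lemma:e-hat_e} — rather than as an initial mismatch transported over the whole remaining horizon. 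Reconciling the exponential branch $\widetilde{\rho}_i[e^{L_i y}-1]$ with this constant-in-$s$ target, and checking that the chosen branch of $\rho_i$ is the one that matches the inclusion used in Lemma \ref{lemma:x_y_proof}, is where essentially all the care is needed; the remaining Lipschitz/Grönwall manipulations are identical in flavour to those already carried out for Lemma \ref{lemma:e-hat_e}.
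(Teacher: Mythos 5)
Your reduction coincides with the paper's proof up to exactly the point you yourself flag as unresolved, and that flagged point is the entire content of the lemma, so the proposal as it stands has a genuine gap. Concretely: like the paper, you view both predictions as flows of \eqref{eq:error_dynamics} restarted at $t_{\scriptscriptstyle k_{z+1}}$, one from the measured state $e_i(t_{\scriptscriptstyle k_{z+1}})$ and one from the predicted state $\hat{e}_i(t_{\scriptscriptstyle k_{z+1}}; u_i(\cdot), e_i(t_{\scriptscriptstyle k_z}))$, and like the paper you bound this seed gap by $\rho_i(h)$ by evaluating Lemma \ref{lemma:e-hat_e} at $s = t_{\scriptscriptstyle k_{z+1}}$. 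But the lemma claims $\|\hat{e}_i(s; u_i(\cdot), e_i(t_{\scriptscriptstyle k_{z+1}})) - \hat{e}_i(s; u_i(\cdot), e_i(t_{\scriptscriptstyle k_z}))\| \le \rho_i(h)$ \emph{uniformly in} $s$, and your argument never establishes this: the Gr\"onwall estimate you propose yields only $\rho_i(h)\, e^{L_i (s - t_{\scriptscriptstyle k_{z+1}})}$, which, as you correctly observe, is not the stated bound, and your closing remark that the estimate ``has to be organized as the accumulation of a mismatch over the single step'' is a statement of the target, not a derivation of it. A reader cannot reconstruct the missing step from what you wrote.

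What the paper actually does at this point is different from (and stronger than) any Gr\"onwall argument: it writes both predicted trajectories as integral equations and cancels the two integrals over the common interval $[t_{\scriptscriptstyle k_{z+1}}, s]$, so that the difference at every $s$ collapses \emph{exactly} to the seed gap, $\|\hat{e}_i(s; u_i(\cdot), e_i(t_{\scriptscriptstyle k_{z+1}})) - \hat{e}_i(s; u_i(\cdot), e_i(t_{\scriptscriptstyle k_z}))\| = \|e_i(t_{\scriptscriptstyle k_{z+1}}) - \hat{e}_i(t_{\scriptscriptstyle k_{z+1}}; u_i(\cdot), e_i(t_{\scriptscriptstyle k_z}))\|$, after which Lemma \ref{lemma:e-hat_e} finishes the proof with no dependence on $s$ at all. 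Note that this cancellation silently treats the integrand $g_i(\hat{e}_i(s'), \hat{\bar{x}}_i(s'), u_i(\cdot))$ as taking identical values along the two (distinct) predicted trajectories on $[t_{\scriptscriptstyle k_{z+1}}, s]$; for a state-dependent $f_i$ this is precisely the issue your ``hard part'' paragraph points at, so your diagnosis of where the difficulty sits is sound, and it in fact exposes a step the paper asserts rather than justifies. Still, as a blind proof attempt yours is incomplete exactly there: to close it you would either have to justify that cancellation (i.e., argue that the difference of the two predicted flows is constant in time under the paper's conventions for the predicted signals) or produce some other bound independent of $s$; neither is done in the proposal.
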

\begin{proof}
	The proof can be found in Appendix \ref{app:bounded_trajectories}.
\end{proof}

\begin{theorem}
	Suppose that Assumptions \ref{ass:measurement_assumption}-\ref{ass:admissible_u_f} hold. If the ROCP is feasible at time $t_k$, then, the closed loop system \eqref{eq:error_dynamics} of agent $i$, under the control input \eqref{eq:control_input_star}, starting its motion at time $t_k = t_0 + k T$ from region $P(i,k)$, is Input to State Stable (ISS) (for ISS see \cite{sontag_2008_ISS}) and its trajectory converges to the admissible positively invariant terminal set $\mathcal{E}_i$ exactly at time $t_k+T$, if it holds that $\rho_i(T_{\scriptscriptstyle z}) \le \bar{\rho}_i \triangleq \frac{\alpha_{1,1}-\alpha_{2,i}}{L_{\scriptscriptstyle V_i}}$.
\end{theorem}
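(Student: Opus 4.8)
The plan is to prove the two assertions—recursive feasibility (which, through the shrinking horizon, yields exact convergence to $\mathcal{E}_i$ at $t_k+T$) and ISS—by treating the unknown neighbor trajectories $\hat{\bar{x}}_i$ as a bounded disturbance to the nominal error dynamics \eqref{eq:error_dynamics}, and by using the optimal value function $J_i^\star$ of \eqref{eq:J_star} as an ISS-Lyapunov function in the sense of Definition \ref{def:ISS_lyapunov}. Throughout, the quantitative disturbance estimate is supplied by $\rho_i$ from Lemma \ref{lemma:e-hat_e}.

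First I would establish recursive feasibility by induction on the sampling index $z\in\mathbb{M}$: assuming the ROCP is feasible at $t_{\scriptscriptstyle k_z}$ with optimizer $\hat{u}_i^\star(\cdot\,;e_i(t_{\scriptscriptstyle k_z}))$, I construct the candidate input at $t_{\scriptscriptstyle k_{z+1}}$ as the tail $\bar{u}_i(s)=\hat{u}_i^\star(s;e_i(t_{\scriptscriptstyle k_z}))$ for $s\in[t_{\scriptscriptstyle k_{z+1}},t_k+T]$. The decreasing-horizon policy \eqref{eq:decr_horizon} fixes the terminal time, since $t_{\scriptscriptstyle k_{z+1}}+T_{z+1}=t_k+T=t_{\scriptscriptstyle k_z}+T_z$, so no horizon extension is required. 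I would then verify the four conditions of Definition \ref{def:admissible_control_input}: items (1)--(2) are immediate because $\bar{u}_i$ is a restriction of an admissible input; for the state constraint (item 3) the actual measurement $e_i(t_{\scriptscriptstyle k_{z+1}})$ deviates from the nominal prediction by at most $\rho_i(h)$ (Lemma \ref{lemma:e-hat_e}), Lemma \ref{lemma:bounded_trajectories} propagates this gap of size $\rho_i(h)$ along the whole candidate trajectory, and Lemma \ref{lemma:x_y_proof} then upgrades membership in the tighter set $E^i_{s-t_{\scriptscriptstyle k_z}}$ to membership in the relaxed set $E^i_{s-t_{\scriptscriptstyle k_{z+1}}}$.

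The delicate ingredient is the terminal constraint (item 4). The step-$z$ nominal endpoint lies in $\mathcal{E}_i$, hence satisfies $V_i\le\alpha_{2,i}$; by Lemma \ref{lemma:bounded_trajectories} the candidate endpoint differs from it by at most $\rho_i(h)\le\rho_i(T_z)$, so the Lipschitz estimate of Lemma \ref{eq:lemma_L_v} gives $V_i(\text{candidate endpoint})\le\alpha_{2,i}+L_{\scriptscriptstyle V,i}\,\rho_i(T_z)$. The hypothesis $\rho_i(T_z)\le\bar{\rho}_i=(\alpha_{1,i}-\alpha_{2,i})/L_{\scriptscriptstyle V,i}$ is precisely what forces the endpoint into $\Phi_i=\{V_i\le\alpha_{1,i}\}$, where the local controller $\kappa_i$ of Assumption \ref{ass:admissible_u_f} satisfies the descent inequality \eqref{eq:admiss_controller} and renders $\mathcal{E}_i$ invariant; appending $\kappa_i$ over the remaining horizon then restores the terminal constraint \eqref{eq:mpc_terminal_set} and closes the induction. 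Iterating to $z=m$, where $T_m=0$ collapses the prediction so that $\hat{e}_i(t_k+T)=e_i(t_k+T)$, the terminal constraint directly yields $e_i(t_k+T)\in\mathcal{E}_i$, i.e.\ convergence exactly at $t_k+T$.

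Finally, for ISS I would show that $J_i^\star$ is an ISS-Lyapunov function for the closed loop. The quadratic bounds \eqref{eq:F_lower_bound} together with the terminal penalty and the just-proven feasibility sandwich $J_i^\star$ between class-$\mathcal{K}_\infty$ functions of $\|e_i\|$. Comparing the value at $t_{\scriptscriptstyle k_{z+1}}$ evaluated along the tail candidate with the optimum at $t_{\scriptscriptstyle k_z}$, and absorbing the mismatch between actual and predicted state through the running-cost Lipschitz estimate and the bound $\rho_i$ of Lemma \ref{lemma:e-hat_e}, I would derive a decrease of the form $J_i^\star(e_i(t_{\scriptscriptstyle k_{z+1}}))-J_i^\star(e_i(t_{\scriptscriptstyle k_z}))\le-\int_{t_{\scriptscriptstyle k_z}}^{t_{\scriptscriptstyle k_{z+1}}}F_i\,ds+\sigma_i(\|\hat{\bar{x}}_i\|)$, which is the discrete analogue of \eqref{eq:lyapunov_iss}; the ISS characterization theorem then certifies ISS. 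I expect the terminal-constraint recovery in the feasibility step to be the main obstacle, as it is the only place where the quantitative margin $\bar{\rho}_i$ and the interplay among $\mathcal{E}_i$, $\Phi_i$ and $\kappa_i$ must be balanced exactly.
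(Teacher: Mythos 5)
Your overall strategy is the same as the paper's: a recursive-feasibility induction built on the tail of the previous optimizer together with Lemmas \ref{lemma:e-hat_e}, \ref{lemma:x_y_proof}, \ref{lemma:bounded_trajectories}, the margin $\bar{\rho}_i$ balancing $\Phi_i$ against $\mathcal{E}_i$, and then an ISS argument using $J_i^\star$ of \eqref{eq:J_star} as an ISS-Lyapunov function; your second part is essentially the paper's convergence analysis and needs no comment. The gap is in the feasibility step, precisely at the terminal constraint, and it is not cosmetic. You define the candidate at $t_{k_{z+1}}$ as the \emph{pure tail} of $\hat{u}_i^\star(\cdot\,;e_i(t_{k_z}))$ over the whole new horizon $[t_{k_{z+1}},t_k+T]$, correctly observing that both horizons end at $t_k+T$ so that ``no horizon extension is required.'' Your endpoint argument (Lemma \ref{lemma:bounded_trajectories} at the common time $t_k+T$, the Lipschitz estimate of Lemma \ref{eq:lemma_L_v}, the old endpoint lying in $\mathcal{E}_i$, and $\rho_i(T_z)\le\bar{\rho}_i$) then places the candidate endpoint only in $\Phi_i=\{V_i\le\alpha_{1,i}\}$. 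But the terminal constraint \eqref{eq:mpc_terminal_set} demands membership in $\mathcal{E}_i$, the strictly smaller set (level $\alpha_{2,i}<\alpha_{1,i}$), so item 4 of Definition \ref{def:admissible_control_input} is not verified. Your proposed remedy --- ``appending $\kappa_i$ over the remaining horizon'' --- is vacuous under your own construction: the pure tail already occupies the entire interval up to $t_k+T$, so there is no remaining horizon on which $\kappa_i$ could act, and $\Phi_i$-membership at the final time cannot be upgraded to $\mathcal{E}_i$-membership. The induction therefore does not close, and with it your ``exactly at $t_k+T$'' conclusion at $z=m$ (which you correctly reduce to terminal feasibility of the last, zero-horizon problem) also falls.

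The paper avoids this by building the switch into the candidate itself: in \eqref{eq:u_bar_feas} the tail of the old optimizer is used only on $[t_{k_{z+1}},\,t_{k_z}+T_{z+1}]=[t_{k_{z+1}},\,t_k+T-h]$, and the local controller $u_{f,i}=\kappa_i$ of Assumption \ref{ass:admissible_u_f} is applied on the final sampling interval $[t_k+T-h,\,t_k+T]$. The $\Phi_i$ argument (Statement 2 of Appendix \ref{app:feasibility_convergence}) is applied at the switch time $t_k+T-h$, which leaves a window of length $h$ during which $\kappa_i$ --- the controller that renders $\mathcal{E}_i$ invariant and steers $\Phi_i$ into $\mathcal{E}_i$ --- restores the terminal constraint by $t_k+T$. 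To repair your proof you must adopt this two-piece candidate and move the closeness comparison to the switch time; note that this is exactly where the bookkeeping is delicate, since the candidate and the old optimal input agree only up to $t_k+T-h$, so Lemma \ref{lemma:bounded_trajectories} can be invoked only on that subinterval (your comparison at $t_k+T$, where the old terminal constraint genuinely applies, is in that one respect cleaner than the paper's, but it is precisely what leaves you stranded in $\Phi_i$ with no time remaining).
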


\begin{proof}
	The proof consists of two parts: in the first part it is established that initial feasibility implies feasibility afterwards. Based on this result it is then shown that the error $e_i(t)$ converges to the terminal set $\mathcal{E}_i$. The \emph{feasibility analysis} as well as the \emph{convergence analysis} can be found in Appendix \ref{app:feasibility_convergence}.
\end{proof}

Assumption \ref{ass:admissible_u_f} is common in the NMPC literature. Many methodologies on how to compute $\Phi_i$ and controllers $u_{f,i} = \kappa_i(e_i)$, if they exist, have been proposed. We refer the reader to \cite{frank_1998_quasi_infinite, mayne_2000_nmpc}. Regarding the initial feasibility, numerical tools (e.g. \cite{grune_2011_nonlinear_mpc}) can be utilized in order to solve the ROCP and check if the problem is feasible at time $t_k = t_{\scriptscriptstyle k_{z}}$.

\begin{remark}
The term $\bar{\rho}_i, i \in \mathcal{V}$ gives an upper bound on the deviation of the trajectories of the neighboring agents of agent $i$ from their real values. If this bound is satisfied, agent $i$ can transit between the corresponding two neighboring regions, provided the ROCP is feasible at $t_{\scriptscriptstyle k_z}$.
\end{remark}

\begin{remark}
	It should be noted that, due to the nonlinear coupling terms $f_i(x_i, \bar{x}_i)$, the desired connectivity specifications and the bounds of Assumption \ref{ass:dynamic_control_bounds}, some of the ROCPs for $k \in N$ might not have a feasible solution. Let $i' \in \mathcal{V}, k' \in \mathbb{N}, \widetilde{\ell}' \in \mathbb{L}$ represent an agent $i'$ that at time step $t_{k'} = t_0 + k'T$ is desired to transit from region $P(i', k')$ to region $\widetilde{P}(i', k', \widetilde{\ell}')$. If the ROCP $\mathcal{O}(k', x_{i'}(t), \bar{x}_{i'}(t), P(i',k'), \widetilde{\ell}', x_{i', k', \widetilde{\ell}', \text{des}}), t \in [t_{\scriptscriptstyle k'_{z}}, t_{\scriptscriptstyle k'_{z}}+T_{z}],$ has no solution, then there does not exist admissible controller that can drive agent $i'$ from $P(i', k')$ to region $\widetilde{P}(i', k', \widetilde{\ell}')$. Our goal, through the proposed approach, is to seek all the possible solutions of the ROCP, which implies to seek for all possible transitions that will form later the individual WTS $\mathcal{T}_i$ of each agent. In this way, the resulting WTS $\mathcal{T}_i$ will capture the coupling dynamics \eqref{eq:system} and the transition possibilities of agent $i$ in the best possible way.
\end{remark}

\subsubsection{Generating the WTSs} 

Each agent $i \in \mathcal{V}$ solves the ROCP \ref{eq:mpc_minimazation}-\ref{eq:mpc_terminal_set} for every time interval $[t_{\scriptscriptstyle k_z}, t_{\scriptscriptstyle k_z}+T_{\scriptscriptstyle z}], k \in \mathbb{N}$, for all the desired neighboring regions $\widetilde{P}(i,k,\widetilde{\ell}), \widetilde{\ell} \in \mathbb{L}$. This procedure is performed by off-line simulation, i.e., at each sampling time $t_{\scriptscriptstyle k_z}, z \in \mathbb{M}$, each agent exchanges information about its new state with its neighbors and simulates the dynamics \eqref{eq:error_dynamics}. Between the sampling times the estimation $\hat{\bar{x}}_i$ is considered to be a disturbance, as discussed earlier.

Algorithm 1 provides the off-line procedure in order to generate the transition relation for each agent. At time $t_0$ each agent $i$ calls the algorithm in order to compute all possible admissible controllers to all possible neighboring regions of the workspace. The term $\text{Transit}$, which is the output of the algorithm, is a matrix of control input sequences for all pairs of neighboring regions in the workspace, initialized at sequences of zeros. The function $\text{Point}2\text{Region}(\cdot)$ maps the point $x_i(t_k)$ to the corresponding region of the workspace.
The function $\text{Sampling}(\cdot)$ takes as input the interval $[t_k, t_k+T]$ and returns the $m+1$ samples of this interval. The notation $(u_i^\star)_{\scriptstyle k_z}$ stands for the $z$-th element of the vector $(u_i^\star)$. The function $\text{OptSolve}(k,x_i(t),\bar{x}_i(t),p,\widetilde{\ell})$ (i) solves the ROCP and the function $\text{UpdateStates}(x_i,\bar{x}_i)$ updates the states of agent $i$ and its neighbors after every sampling time.  If the OptSolve function does not return a solution, then there does not exist an admissible control input that can drive agent $i$ to the desired neighboring region. After utilizing Algorithm 1, the WTS of each agent is defined as follows:

%Define by $\text{Config}: \mathbb{N} \times W^{N_i+1} \times \mathbb{R}^2 \to W^{N_i+1},$ a function that maps at every step $k$, every current position of agent $i$ and its neighbors, and every control input $u_i$ to the regions that the agents belong at time $t_k + T$ (the time after each transition). For the given parameters, a ROCP is solved. If the problem has no solution then the algorithm searches the next desired neighboring region. If the problem has solution, then there exist an admissible control input that can drive agent $i$ to the desired neighboring region. Then, the transition $(P(i,k), u_i, \widetilde{P}(i,k,\widetilde{\ell})) \in \longrightarrow_i$ is enabled and the function $\text{Config}(\cdot)$ is updated with the new information for the control input of agent $i$ and the positions of agent $i$ and its neighbors at time $t_k+T$. 

\begin{algorithm}[t!]
	\caption{CreateTransitionRelation($\cdot$)}
	\begin{algorithmic}[1]
		\State \textbf{Input:} $i, x_i(t_0), \bar{x}_i(t_0);$
		%\State \textbf{Output1:} $(P(i,k), u_i, \widetilde{P}(i,k,\widetilde{\ell})) \in \longrightarrow_i;$
		\State \textbf{Output:} $\text{Transit}$; \Comment Matrix with regions$\backslash$control inputs;\\
		\State $\text{Transit} \leftarrow \text{zeros}(|\mathbb{I}|, 6)$; k = 0; Flag = False;
		\State $\text{List} \leftarrow \{\text{Point}2\text{Region}(x_i(t_0))\}$; \Comment Initialize
		\While {$\text{List} \neq \emptyset$}
		\For {$p \in \text{List}$} \Comment p is a region of the List;
		\For {$\widetilde{\ell} \in \mathbb{L}$}
		\State $t \leftarrow \text{Sampling} (t_k, t_k+T)$;
		\For {$t_{\scriptscriptstyle k_z} \in t, z \in \mathbb{M}$}
		\State $(u^\star_i)_{\scriptscriptstyle k_z} \leftarrow \text{OptSolve}(k, x_i(t), \bar{x}_i(t), p, \widetilde{\ell});$
		\State UpdateStates($x_i, \bar{x}_{i}$);
		\If {$(u^\star_i)_{\scriptscriptstyle k_z} = \emptyset$}
		\Comment $\nexists$ controller;
		\State Flag = True; \Comment search next region;
		\State break;
		\EndIf
		\EndFor
		\If {Flag = False}
		\State $u^\star_i \leftarrow \{(u^\star_i)_{\scriptscriptstyle k_z} \}_{z \in \mathbb{M}}$ \Comment $u_i$ found
		\State $\text{Transit}(p, \widetilde{\ell})\leftarrow u^\star_i$;
		\State $\text{List} \leftarrow \text{List} \cup \widetilde{P}(i,k,\widetilde{\ell})$
		\Else
		\State Flag = False;
		\EndIf
		\EndFor 
		\State $\text{List} \leftarrow \text{List} \backslash p$;
		\State $k = k +1;$
		\EndFor
		\EndWhile 
	\end{algorithmic}
	\label{alg:basic_algorithm}
\end{algorithm}

\begin{definition} \label{def: indiv_WTS}
	The motion of each agent $i \in \mathcal{V}$ in the workspace is modeled by the WTS $\ \mathcal{T}_i$ $= (S_i$, $S_i^{\text{init}}$, $Act_i$, $\longrightarrow_i$, $d_i$, $\Sigma_i$, $L_i)$ where: $S_i = \{D_\ell\}_{\ell \in \mathbb{I}}$ is the set of states of each agent; $S_i^{\text{init}} = P(i,0) \subseteq S_i$ is a set of initial states defined by the agents' initial positions $x_i(t_0) \in P(i,0)$ in the workspace; $Act_i$ is the set of actions containing the union of all the admissible control inputs $u_i \in \mathcal{U}_i$ that are a feasible solution to the ROCP and can drive agent $i$ between neighboring regions; $\longrightarrow_i \subseteq S_i \times Act_i \times S_i$ is the transition relation. We say that $(P(i,k), u_i, \widetilde{P}(i,k,\widetilde{\ell})) \in \longrightarrow_i$, $k \in \mathbb{N}, \widetilde{\ell} \in \mathbb{L}$ if there exist an admissible controller $u_i \in Act_i$ which at step $k$ drives the agent $i$ from the region $P(i,k)$ to the desired region $\widetilde{P}(i,k,\widetilde{\ell})$. Algorithm 1 gives the steps how the transition relation can be constructed. $d_i: \longrightarrow_i \rightarrow \mathbb{R}_{\ge 0}$, is a map that assigns a positive weight (duration) to each transition. The duration of each transition is exactly equal to $T$; $\Sigma_i$, is the set of atomic propositions; $L_i: S_i \rightarrow 2^{\Sigma_i}$, is the labeling function.
\end{definition}

The individual WTSs of the agents will allow us to work directly in the discrete level and design sequences of controllers that solve Problem \ref{problem: basic_prob}. Every WTS $\mathcal{T}_i, i \in \mathcal{V}$ generates timed runs and timed words of the form $r_i^t = (r_i(0), \tau_i(0))$ $(r_i(1), \tau_i(1))\ldots$, $w_i^t$ $= (w_i(0), \tau_i(0))$ $(w_i(1), \tau_i(1))\ldots$, respectively, over the set $2^{\Sigma_i}$ with $w_i(\mu) = L_i(r_i(\mu)), \tau_i(\mu) = \mu T, \forall \ \mu \ge 0$. The transition relation $\longrightarrow_i$ along with the output of the Algorithm 1, i.e, $\text{Transit}(\cdot)$, allows each agent to have all the necessary information in order to be able to make a decentralized plan in the discrete level that is presented hereafter. The relation between the timed words that are generated by the WTSs $\mathcal{T}_i, i \in \mathcal{V}$ with the timed service words produced by the trajectories $x_i(t), i \in \mathcal{V}, t \ge 0$ is provided through the following remark:

\begin{remark} \label{lemma:compliant_WTS_runs_with_trajectories}
	By construction, each timed word produced by the WTS $\mathcal{T}_i$ is a \emph{relaxed timed word} associated with the trajectory $x_i(t)$ of the system \eqref{eq:system}. Hence, if we find a timed word of $\mathcal{T}_i$ satisfying a formula $\varphi_i$ given in MITL, we also find for each agent $i$ a desired timed word of the original system, and hence trajectories $x_i(t)$ that are a solution to the Problem \ref{problem: basic_prob}. Therefore, the produced timed words of $\mathcal{T}_i$ are compliant with the relaxed timed words of the trajectories $x_i(t)$.
\end{remark}

\subsection{Controller Synthesis} \label{sec: synthesis}

\begin{figure*}[t!]
	\centering
	\begin{tikzpicture}[scale = 0.97] 
	
	% convert formulas phi_i to buchis A_i
	\draw[blue!70, line width=.04cm] (-15.6, 6.0) rectangle +(2.3, 0.9);
	\node at (-14.45, 6.45) {$\text{MITL2TBA}$};
	\draw[blue!70, line width=.04cm] (-15.6, 0.80) rectangle +(2.3, 0.9);
	\node at (-14.45, 1.25) {$\text{MITL2TBA}$};
	
	\draw[-latex, draw=black, line width = 1.0] (-14.5,6.0) -- (-14.5,4.6);
	\draw[-latex, draw=black, line width = 1.0] (-14.5,7.6) -- (-14.5,6.9);
	\draw[-latex, draw=black, line width = 1.0] (-14.5,0.1) -- (-14.5,0.8);
	\draw[-latex, draw=black, line width = 1.0] (-14.5,1.7) -- (-14.5,3.1);
	
	\node at (-14.5, 7.80) {$\varphi_1$};
	\node at (-14.5, -0.1) {$\varphi_N$};
	
	\node at (-14.0, 5.35) {$\mathcal{A}_1$};
	\node at (-14.0, 2.40) {$\mathcal{A}_N$};
	
	\node at (-14.5, 3.25) {$\otimes$};
	\node at (-14.5, 4.45) {$\otimes$};
	
	\node at (-14.5, 3.95) {$\vdots$};
	
	% create \tilde T_i
	\draw[-latex, draw=black, line width = 1.0] (-14.30,4.45) -- (-12.6,4.45);
	\draw[-latex, draw=black, line width = 1.0] (-14.30,3.25) -- (-12.6,3.25);
	
	\node at (-12.3, 4.47) {$\widetilde{\mathcal{T}}_1$};
	\node at (-12.3, 3.95) {$\vdots$};
	\node at (-12.3, 3.26) {$\widetilde{\mathcal{T}}_N$};
	
	% create T_i
	\draw[-latex, draw=black, line width = 1.0] (-16.35,4.45) -- (-14.7,4.45);
	\draw[-latex, draw=black, line width = 1.0] (-16.35,3.25) -- (-14.7,3.25);
	
	\node at (-16.7, 4.47) {$\mathcal{T}_1$};
	\node at (-16.7, 3.95) {$\vdots$};
	\node at (-16.7, 3.26) {$\mathcal{T}_N$};
	
	% create box algorithm
	\draw[-latex, draw=black, line width = 1.0] (-12.00,4.45) -- (-11.3,4.45);
	\draw[-latex, draw=black, line width = 1.0] (-12.00,3.25) -- (-11.3,3.25);
	
	%\draw[red!70, line width=.04cm] (-11.2, 2.80) rectangle +(1.8, 2.2);
	%\node at (-10.30, 4.20) {$\text{graph}$};
	%\node at (-10.30, 3.85) {$\text{search}$};
	%\node at (-10.30, 3.40) {$\text{algorithm}$};
	
	\draw[red!70, line width=.04cm] (-11.3, 2.85) rectangle +(1.85, 0.70);
	\draw[red!70, line width=.04cm] (-11.3, 4.15) rectangle +(1.85, 0.70);
	\node at (-10.32, 4.50) {$\text{synthesis}$};
	\node at (-10.32, 3.20) {$\text{synthesis}$};
	
	% create runs
	\draw[-latex, draw=black, line width = 1.0] (-9.35,4.45) -- (-8.65,4.45);
	\draw[-latex, draw=black, line width = 1.0] (-9.35,3.25) -- (-8.65,3.25);
	
	\node at (-8.35, 4.47) {$\widetilde{r}_1^t$};
	\node at (-8.35, 3.95) {$\vdots$};
	\node at (-8.35, 3.26) {$\widetilde{r}_N^t$};
	
	% create abstractions box
	\draw[-latex, draw=black, line width = 1.0] (-17.70,4.47) -- (-17.00,4.47);
	\draw[-latex, draw=black, line width = 1.0] (-17.70,3.26) -- (-17.00,3.26);
	
	\draw[orange!70, line width=.04cm] (-19.6, 2.90) rectangle +(1.8, 0.7);
	\draw[orange!70, line width=.04cm] (-19.6, 4.20) rectangle +(1.8, 0.7);
	\node at (-18.70, 4.60) {$\text{abstraction}$};
	\node at (-18.70, 3.30) {$\text{abstraction}$};
	
	\draw[-latex, draw=black, line width = 1.0] (-20.35,4.47) -- (-19.65,4.47);
	\draw[-latex, draw=black, line width = 1.0] (-20.35,3.26) -- (-19.65,3.26);
	
	% create the dynamics
	
	\node at (-22.70, 4.47) {$\displaystyle \dot{x}_1 = f_1(x_1, \bar{x}_1)+u_{1}$};
	\node at (-22.70, 3.95) {$\vdots$};
	\node at (-22.70, 3.26) {$\displaystyle \dot{x}_N = f_N(x_N, \bar{x}_N)+u_{N}$};
	
	% create control inputs v_1,...,v_N
	\draw [black, line width = 0.030cm] (-8.35, 4.80) -- (-8.35, 8.50);
	\draw [black, line width = 0.030cm] (-8.35, 8.50) -- (-18.00, 8.50);
	\draw [black, line width = 0.030cm] (-19.00, 8.50) -- (-23.00, 8.50);
	\draw[-latex, draw=black, line width = 1.0] (-23.00, 8.50) -- (-23.00, 5.1);
	
	\draw [black, line width = 0.030cm] (-8.35, 2.90) -- (-8.35, -0.80);
	\draw [black, line width = 0.030cm] (-8.35, -0.80) -- (-18.00, -0.80);
	\draw [black, line width = 0.030cm] (-19.00, -0.8) -- (-23.00, -0.8);
	\draw[-latex, draw=black, line width = 1.0] (-23.00, -0.8) -- (-23.00, 2.6);
	
	\draw[green!70, line width=.04cm] (-19.0, 7.95) rectangle +(1.0, 1.0);
	\node at (-18.5, 8.40) {$u_1(t)$};
	\draw[green!70, line width=.04cm] (-19.0, -1.25) rectangle +(1.0, 1.0);
	\node at (-18.5, -0.80) {$u_N(t)$};
	\end{tikzpicture}
	\centering
	\caption{A graphic illustration of the proposed framework.}
	\label{fig:solution_scheme}
\end{figure*}
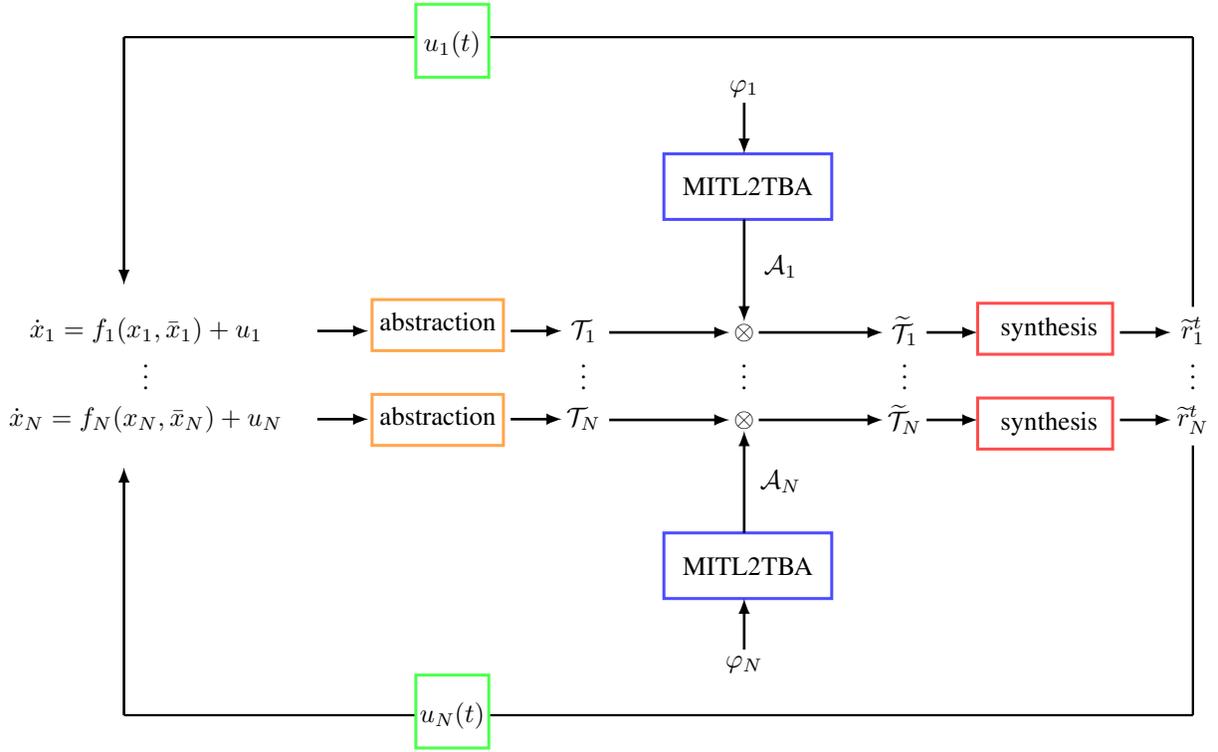

The proposed controller synthesis procedure is described with the following steps:

\begin{enumerate}
	\item $N$ TBAs $\mathcal{A}_i, \ i \in \mathcal{V}$ that accept all the timed runs satisfying the corresponding specification formulas $\varphi_i, i \in \mathcal{V}$ are constructed.
	\item A B\"uchi WTS $\widetilde{\mathcal T}_i = \mathcal{T}_i \otimes \mathcal{A}_i$ (see Def. \ref{def: buchi_WTS} below) is constructed for every $i \in \mathcal{V}$. The accepting runs of $\widetilde{\mathcal T}_i$ are the individual runs of $\mathcal{T}_i$ that satisfy the corresponding MITL formula $\varphi_i, \ i \in \mathcal{V}$.
	\item The abstraction procedure allows to find an explicit feedback law for each
	transition in $\mathcal T_i$. Therefore,
	an accepting run $\widetilde{r}^t_i$ in $\mathcal T_i$ that takes the form of a sequence of transitions is realized in the system in \eqref{eq:system} via the corresponding sequence of feedback laws.
\end{enumerate}

\begin{definition} \label{def: buchi_WTS}
	Given a WTS $\mathcal{T}_i =(S_i, S_{i}^{\text{init}}, Act_i, \longrightarrow_i, d_i, \Sigma_i, L_i)$, and a TBA $\mathcal{A}_i = (Q_i,  Q^\text{init}_i, C_i, Inv_i, E_i, F_i, \\ \Sigma_i, \mathcal{L}_i)$ with $|C_i|$ clocks and let $C^{\max}_i$ be the largest constant appearing in $\mathcal{A}_i$. Then, we define their \textit{B\"uchi WTS} $\widetilde{\mathcal{T}}_i = \mathcal{T}_i \otimes \mathcal{A}_i = (\widetilde{S}_i, \widetilde{S}_{i}^{\text{init}}, \widetilde{Act}_i, {\rightsquigarrow}_{i}, \widetilde{d}_i, \widetilde{F}_i, \Sigma_i, \widetilde{L}_i)$ as follows: 
	\begin{itemize}
		\item $\widetilde{S}_i \subseteq \{(s_i, q_i) \in S_i \times Q_i : L_i(s_i) = \mathcal{L}_i(q_i)\} \times \mathbb{T}_\infty^{|C_i|}$.
		\item $\widetilde{S}_{i}^{\text{init}} = S_i^{\text{init}} \times Q_i^{\text{init}} \times \{0\}^{|C_i|}$.
		\item $\widetilde{Act}_i = Act_i$.
		\item $(\widetilde{q}, act_i, \widetilde{q} ') \in {\rightsquigarrow}_i$ iff
		\begin{itemize}
			\item[$\circ$] $\widetilde{q} = (s, q, \nu_1, \ldots, \nu_{|C_i|}) \in \widetilde{S}_i$, \\ $\widetilde{q} ' = (s', q', \nu_1', \ldots, \nu_{|C_i|}') \in \widetilde{S}_i$,
			\item[$\circ$] $act_i \in Act_i$,
			\item[$\circ$] $(s, act_i, s') \in \longrightarrow_i$, and
			\item[$\circ$] there exists $\gamma, R$, such that $(q, \gamma, R, q') \in E_i$, $\nu_1,\ldots,\nu_{|C_i|} \models \gamma$, $\nu_1',\ldots,\nu_{|C_i|}' \models Inv_i(q')$, and for all $i \in \{1,\ldots, |C_i|\}$
			\begin{equation*}
			\nu_i' =
			\begin{cases}
			0,      & \text{if } c_i \in R \\
			\nu_i + d_i(s, s'), &  \text{if }  c_i \not \in R \text{ and } \\ & \nu_i + d_i(s, s') \leq C^{\mathit{max}}_i \\			\infty, & \text{otherwise}.
			\end{cases}
			\end{equation*}
		\end{itemize}
		Then, $\widetilde{d}_i(\widetilde{q}, \widetilde{q}') = d_i(s, s')$.
		\item $\widetilde{F}_i = \{(s_i, q_i,\nu_1,\ldots,\nu_{|C_i|}) \in Q_i : q_i \in F_i\}$.
		\item $\widetilde{L}_i(s_i, q_i, \nu_1, \ldots, \nu_{|C_i|}) = L_i(s_i)$.
	\end{itemize}
\end{definition}

Each B\"uchi WTS $\widetilde{\mathcal{T}}_i, i \in \mathcal{V}$ is in fact a WTS with a B\"uchi acceptance condition $\widetilde{F}_i$. A timed run of $\widetilde T_i$ can be written as $\widetilde{r}_i^t = (q_i(0), \tau_i(0))(q_i(1), \tau_i(1)) \ldots$ using the terminology of Def. \ref{run_of_WTS}. It is \textit{accepting} if $q_i(\mu) \in \widetilde F_i$ for infinitely many $j \geq 0$. An accepting timed run of  $\widetilde{\mathcal{T}}_i$ projects onto a timed run of $\mathcal{T}_i$ that satisfies the local specification formula $\varphi_i$ by construction. Formally, the following lemma, whose proof follows directly from the construction and and the principles of automata-based LTL model checking (see, e.g., \cite{katoen}), holds:

\begin{lemma} \label{eq: lemma_1}
Consider an accepting timed run $\widetilde{r}_i^t = (q_i(0), \tau_i(0))(q_i(1), \tau_i(1)) \ldots$ of the B\"uchi WTS $\widetilde T_i$ defined above, where $q_i(\mu) = (r_i(\mu), s_i(\mu), \nu_{i, 1}, \ldots, \nu_{i, |C_i|})$ denotes a state of $\mathcal{\widetilde T}_i$, for all $\mu \geq 0$. The timed run $\widetilde{r}_i^t$ projects onto the timed run $r_i^t = (r_i(0), \tau_i(0))(r_i(1), \tau_i(1)) \ldots$ of the WTS $\mathcal{T}_i$ that produces the timed word $w(r_i^t) = ({L}_i(r_i(0)), \tau_i(0))({L}_i(r_i(1)), \tau_i(1)) \ldots$ accepted by the TBA $\mathcal{A}_i$ via its run $\chi_i = s_i(0)s_i(1) \ldots$. Vice versa, if there exists a timed run $r_i^t = (r_i(0),\tau_i(0))(r_i(1),\tau_i(1))\ldots$ of the WTS $T_i$ that produces a timed word $w(r_i^t) = (L_i(r_i(0)), \tau_i(0))(L_i(r_i(1)), \tau_i(1)) \ldots$ accepted by the TBA $A_i$ via its run $\chi_i = s_i(0)s_i(1)\ldots$ then there exist the accepting timed run $\widetilde{r}_i^t = (q_i(0), \tau_i(0))(q_i(1), \tau_i(1)) \ldots$ of $\widetilde{T}_i$, where $q_i(z)$ denotes $(r_i(z), s_i(z), \nu_{i,1}, \ldots, \nu_{i,|C_i|})$ in $\widetilde{T}_i$.
\end{lemma}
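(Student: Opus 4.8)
The plan is to prove both implications by directly unwinding the synchronous product construction of Definition~\ref{def: buchi_WTS}, exactly as in the classical automaton-intersection argument underlying automata-based model checking. The entire content of the lemma is that $\widetilde{\mathcal{T}}_i$ executes in lockstep a run of $\mathcal{T}_i$ and a run of $\mathcal{A}_i$ over the \emph{same} timed word, so the two directions reduce to (i) splitting a product run into its two coordinate projections, and (ii) zipping two matching runs back into a single product run. Throughout, I would use that $\widetilde{d}_i(\widetilde{q},\widetilde{q}')=d_i(s,s')$, so that the time stamps $\tau_i(\mu)$, being defined inductively from the transition weights in Definition~\ref{run_of_WTS}, coincide along a product run and along its $\mathcal{T}_i$-projection. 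Hence timing is never re-derived, only transported, and the matching of time stamps in the statement is automatic.

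For the forward direction, I would take the accepting run $\widetilde{r}_i^t=(q_i(0),\tau_i(0))(q_i(1),\tau_i(1))\ldots$ with $q_i(\mu)=(r_i(\mu),s_i(\mu),\nu_{i,1},\ldots,\nu_{i,|C_i|})$ and project onto the first and second coordinates to obtain $r_i^t$ and $\chi_i=s_i(0)s_i(1)\ldots$. By the definition of ${\rightsquigarrow}_i$, every product step carries a witness $(s,act_i,s')\in\longrightarrow_i$, so $r_i^t$ is a legal timed run of $\mathcal{T}_i$; likewise every step carries a witness edge $(q,\gamma,R,q')\in E_i$ with $\nu\models\gamma$, $\nu'\models Inv_i(q')$, and the reset/increment update of the clock vector, so $\chi_i$ is a legal run of $\mathcal{A}_i$. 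The pairing constraint $L_i(s)=\mathcal{L}_i(q)$ built into $\widetilde{S}_i$ forces the timed word read along $\chi_i$ to be exactly $w(r_i^t)=(L_i(r_i(0)),\tau_i(0))(L_i(r_i(1)),\tau_i(1))\ldots$. Finally, since $\widetilde{r}_i^t$ is accepting it meets $\widetilde{F}_i$ infinitely often, i.e.\ $s_i(\mu)\in F_i$ for infinitely many $\mu$, which is precisely the B\"uchi acceptance of $\chi_i$; thus $w(r_i^t)$ is accepted by $\mathcal{A}_i$.

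For the converse, given a timed run $r_i^t$ of $\mathcal{T}_i$ whose timed word is accepted by $\mathcal{A}_i$ via $\chi_i=s_i(0)s_i(1)\ldots$, I would build the product run inductively, setting $q_i(\mu)=(r_i(\mu),s_i(\mu),\nu_i(\mu))$ where the clock vector $\nu_i(\mu)$ is computed from $\{0\}^{|C_i|}$ by the very reset/increment rule of Definition~\ref{def: buchi_WTS} along the edges of $\chi_i$ and the durations $d_i$. Each step then satisfies every defining clause of ${\rightsquigarrow}_i$: the $\mathcal{T}_i$-transition comes from $r_i^t$, the TBA edge from $\chi_i$, the guard and invariant hold because $\chi_i$ is a valid TBA run, and label compatibility holds because $w(r_i^t)$ is the word read along $\chi_i$; the initial triple lies in $\widetilde{S}_i^{\text{init}}$. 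Infinitely many $s_i(\mu)\in F_i$ then yield infinitely many $q_i(\mu)\in\widetilde{F}_i$, so $\widetilde{r}_i^t$ is accepting, which is the claim.

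The one genuinely delicate point, and the step I expect to require the most care, is the clock bookkeeping — specifically the truncation at $C^{\max}_i$ and the collapse to $\infty$ in the update rule of Definition~\ref{def: buchi_WTS}. I would establish its soundness by observing that every clock constant $\psi$ appearing in a guard or invariant satisfies $\psi\le C^{\max}_i$, so for any atomic constraint $c\bowtie\psi$ the truth value of $\nu\models(c\bowtie\psi)$ is already fixed once $\nu(c)>C^{\max}_i$; assigning the symbol $\infty$ in place of the (now irrelevant) exact value therefore preserves satisfaction of every guard and invariant. This is exactly what renders the abstracted clock space $\mathbb{T}_\infty^{|C_i|}$ finitely usable while leaving the accepted timed language unchanged, and it is what guarantees that the two projections agree not merely on transitions and labels, but also on the validity of the clock constraints in both directions.
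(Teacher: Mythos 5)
Your proof is correct and follows essentially the same route as the paper: the paper gives no detailed argument, stating only that the lemma ``follows directly from the construction and the principles of automata-based LTL model checking,'' and your projection/zipping argument (including the soundness of the clock truncation at $C^{\max}_i$) is precisely the detailed unwinding of that standard product-automaton reasoning.
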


The proposed framework is depicted in Fig. \ref{fig:solution_scheme}. The dynamics \eqref{eq:system} of each agent $i$ is abstracted into a WTS $\mathcal{T}_i$ (orange rectangles). Then the product between each WTS $\mathcal{T}_i$ and the $TBA$ $\mathcal{A}_i$ is computed according to Def. \ref{def: buchi_WTS}. The TBA $\mathcal{A}_i$ accepts all the words that satisfy the formula $\varphi_i$ (blue rectangles). For every B\"uchi WTS $\widetilde{\mathcal{T}}_i$ the controller synthesis procedure that was described in this Section (red rectangles) is performed and a sequence of accepted runs $\{\widetilde{r}_1^t, \dots, \widetilde{r}_N^t\}$ is designed. Every accepted run $\widetilde{r}_i^t$ maps into a decentralized controller $u_i(t)$ which is a solution to Problem \ref{problem: basic_prob}.

\begin{proposition}
The solution that we obtain from Steps 1-5, if one found, gives a sequence of controllers $u_1, \ldots, u_N$ that guarantees the satisfaction of the formulas formulas $\varphi_1, \ldots, \varphi_N$ of the agents $1, \ldots, N$ respectively, governed by dynamics as in \eqref{eq:system}. Thus, we solved Problem \ref{problem: basic_prob}.
\end{proposition}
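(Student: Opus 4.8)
The plan is to prove this as a \emph{soundness} statement: the construction produces, for each agent $i$, a sequence of feedback controllers whose concatenation realizes a continuous trajectory $x_i(t)$ that both maintains connectivity with every neighbor for all times and satisfies the MITL formula $\varphi_i$. Since $\Sigma_i \cap \Sigma_j = \emptyset$ and the specifications are individual, it suffices to establish the two requirements agent-by-agent and then observe that the decentralized abstraction permits simultaneous execution. I would therefore trace the chain of correctness results established earlier, checking that each construction step preserves the two guarantees demanded by Problem~\ref{problem: basic_prob}.

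For the logical-satisfaction part I would argue top-down. By construction each TBA $\mathcal{A}_i$ accepts exactly the timed words over $2^{\Sigma_i}$ that satisfy $\varphi_i$ under the semantics of Definition~\ref{def:mitl_semantics}. An accepting run $\widetilde{r}_i^t$ of the B\"uchi WTS $\widetilde{\mathcal{T}}_i = \mathcal{T}_i \otimes \mathcal{A}_i$ (Definition~\ref{def: buchi_WTS}), as returned by the synthesis procedure, projects by Lemma~\ref{eq: lemma_1} onto a timed run $r_i^t$ of $\mathcal{T}_i$ whose timed word $w(r_i^t)$ is accepted by $\mathcal{A}_i$ and hence satisfies $\varphi_i$. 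Finally, Remark~\ref{lemma:compliant_WTS_runs_with_trajectories} identifies every timed word produced by $\mathcal{T}_i$ with a \emph{relaxed timed word} of an actual trajectory of \eqref{eq:system}, so the trajectory realizing $r_i^t$ satisfies $x_i(t) \models \varphi_i$.

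For the connectivity and continuous-realization part I would invoke the main abstraction Theorem (its feasibility/convergence result). Each transition $(P(i,k), u_i, \widetilde{P}(i,k,\widetilde{\ell})) \in \longrightarrow_i$ of $\mathcal{T}_i$ (Definition~\ref{def: indiv_WTS}) exists only when the associated ROCP is feasible, in which case the closed-loop controller \eqref{eq:control_input_star} keeps the state inside the admissible set $X_i$ for all $t \in [t_k, t_k+T]$ and delivers the agent to the desired region \emph{exactly} at $t_k+T$, so the continuous crossing times coincide with the discrete weights $d_i = T$ used to stamp the timed word. Because the connectivity constraint $\|x_i - x_j\| < \underline{r}$, $j \in \mathcal{N}_i$, is built into $X_i$, connectivity is maintained throughout each transition and therefore for all $t \ge 0$ along the whole run. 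Concatenating the feedback laws indexed by the transitions of $\widetilde{r}_i^t$ gives $u_i(t)$; performing this for every agent yields $u_1,\dots,u_N$, which establishes the claim and solves Problem~\ref{problem: basic_prob}.

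The part I expect to be most delicate is \emph{compositionality across agents}: each $u_i$ is computed by treating the neighbors' future positions $\hat{\bar{x}}_i$ as a disturbance to the nominal error dynamics, so I must ensure that, when all agents execute their runs simultaneously, the per-agent guarantees are not invalidated by the real neighbor motion. This is exactly what the robust tightening of the constraint set and the condition $\rho_i(T_z) \le \bar{\rho}_i$ secure: the bound of Lemma~\ref{lemma:e-hat_e} certifies that any admissible neighbor transition keeps the real state within the nominal-prediction tolerance, so agent $i$'s transition completes regardless of the neighbors' choices. I would also flag the conditional ``if one found'': it covers the cases where some ROCP is infeasible or the coupled and timed constraints make $\varphi_i$ unsatisfiable, so the proposition asserts correctness of any returned solution rather than its existence.
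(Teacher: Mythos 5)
Your proposal is correct and takes essentially the same approach the paper intends: the paper states this proposition without an explicit proof, relying exactly on the chain you trace---the TBA accepting precisely the timed words satisfying $\varphi_i$, the projection Lemma~\ref{eq: lemma_1} from accepting runs of $\widetilde{\mathcal{T}}_i$ to timed runs of $\mathcal{T}_i$, Remark~\ref{lemma:compliant_WTS_runs_with_trajectories} identifying the timed words of $\mathcal{T}_i$ with relaxed timed words of trajectories of \eqref{eq:system}, and the feasibility/ISS Theorem (with the connectivity constraint $\|x_i-x_j\|<\underline{r}$ encoded in $X_i$) guaranteeing that each discrete transition is realized by the controller \eqref{eq:control_input_star} in exactly time $T$. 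Your closing observations on cross-agent compositionality via the tightened constraint sets and the condition $\rho_i(T_z) \le \bar{\rho}_i$, and on the ``if one found'' caveat meaning soundness rather than existence, match the paper's own remarks and introduce no divergence.
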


\subsection{Complexity} \label{sec:complexity}

In the proposed abstraction technique $6^N$ MPC optimization problems are solved for every time interval $t \in [t_k, t_k + T]$. Assume that the desired horizon for the system to run is $M$ steps i.e. the timed sequence $\mathcal{S}$ is written as: $\mathcal{S} = \{t_0, t_1 = t_0+T, \dots, t_M = t_0+M T \}$. Then the complexity of the abstraction is $M 6^N.$ As for the controller synthesis framework now we have the following. Denote by $|\varphi|$ the length of an MITL formula $\varphi$. A TBA $\mathcal{A}_i, i \in \mathcal{V}$ can be constructed in space and time $2^{\mathbb{O}(|\varphi_i)|}, i \in \mathcal{V}$ ($\mathbb{O}$ stands for the ``big O" from complexity theory). Let $\varphi_{\text{max}} = \text{max} \{ |\varphi_i\}, i \in \mathcal{V}$ be the MITL formula with the longest length. Then, the complexity of Step 1 is $2^{\mathbb{O}(|\varphi_{\text{max}})|}$. The model checking of Step 2 costs $\mathbb{O}(|\mathcal{T}_i| 2^{|\varphi_i|}), i \in \mathcal{V}$ where $|\mathcal{T}_i|$ is the length of the WTS $\mathcal{T}_i$ i.e., the number of its states. Thus, $\mathbb{O}(|\mathcal{T}_i| 2^{|\varphi_i|}) = \mathbb{O}(|S_i| 2^{|\varphi_i|}) = \mathbb{O}(|{\mathbb{I}}| 2^{|\varphi_i|})$, where $|\mathbb{I}|$ is the number of cells of the cell decomposition $D$. The worst case of Step 2 costs $\mathbb{O}(|{\mathbb{I}}| 2^{|\varphi_{\text{max}}|})$ due to the fact that all WTSs $\mathcal{T}_i, i \in \mathcal{I}$ have the same number of states. Therefore, the complexity of the total framework is $\mathbb{O}(M |{\mathbb{I}}| 6^N 2^{|\varphi_{\text{max}}|})$.

\begin{figure}[t!]
	\centering
	\includegraphics[scale = 0.40]{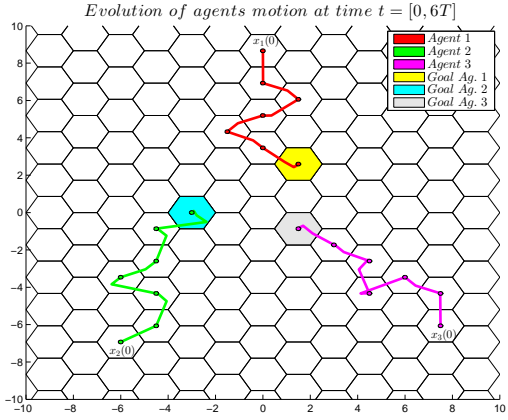}
	\caption{Evolution of the agent's trajectories up to time $6T$ in the workspace $W$. Each point-to-point transition has time duration $T = 3$. The depicted timed runs with $\rm red, \rm green$ and $\rm magenta$, of agents $1,2$ and $3$, satisfy the formulas $\varphi_1$, $\varphi_2$ and $\varphi_3$, respectively, while the agents remain connected.} 
	\label{fig:simulation}
\end{figure}

\section{Simulation Results} \label{sec: simulation_results}

For a simulation example, a system of three agents with $x_i \in \mathbb{R}^2, \ i \in \mathcal{V} = \{1, 2, 3\}$, $\mathcal{N}_1 = \{2,3\}$ $\mathcal{N}_2 = \{1,3\}$, $\mathcal{N}_3$ $= \{1, 1\}$ is considered. The workspace $W = [-10, 10] \times [-10, 10] \subseteq \mathbb{R}^2$ is decomposed into hexagonal regions with $R = 1, r_{\scriptscriptstyle h} = \frac{\sqrt{3}}{2}$, which are depicted in Fig. \ref{fig:simulation}. The initial agents' positions are set to $x_1(0) = (0,10r_{\scriptscriptstyle h}), x_2(0) = (-6,-8r_{\scriptscriptstyle h})$ and $x_3(0)=(7.5,-7r_{\scriptscriptstyle h})$. The sensing radius is $\underline{r} = 18$. The dynamics are set to: $\dot{x}_1 = -2x_1+x_2+x_3-\sin^2(x_1-x_2)+u_1$, $\dot{x}_2 = -2x_2+x_1+x_3-\sin^2(x_2-x_1)+u_2$ and $\dot{x}_3 = -2x_3+x_1+x_2+u_3$. The time step is $T = 3$. The specification formulas are set to $\varphi_1 = \Diamond_{[15, 27]} \{\rm red\}, \varphi_2 = \Diamond_{[7.5, 22]} \{\rm green\}, \varphi_3 = \Diamond_{[0, 19]} \{grey\}$ respectively. We set: $Q_i, P_i, R_i = I_2, \forall i \in \mathcal{V}$. Fig. \ref{fig:simulation} shows a sequence of transitions for agents $1,2$ and $3$ which form the accepting timed words $\widetilde{r}_1^t$, $\widetilde{r}_2^t$ and $\widetilde{r}_3^t$, respectively. Every timed word maps to a sequence of admissible control inputs for each agent, which is the outcome of solving the ROCPs. The agents remain connected for all $t \in [0, 6T]$. The simulations were carried out in MATLAB Environment by using the NMPC toolbox \cite{grune_2011_nonlinear_mpc}, on a desktop with 8 cores, 3.60GHz CPU and 16GB of RAM.

\section{Conclusions and Future Work} \label{sec: conclusions}
A systematic method of both decentralized abstractions and controller synthesis of a general class of coupled multi-agent systems has been proposed in which timed temporal specifications are imposed to the system. The solution involves a repetitive solving of an ROCP for every agent and for every desired region in order to build decentralized Transition Systems that are then used in the derivation of the controllers that satisfy the timed temporal formulas. Future work includes further computational improvement of the proposed decentralized abstraction method.

%%%%%%%%%%%%%%%%%%%%%%%%%%%%%%%%%%%%%%%%%%%%%%%%%%%%%%%%%%%%%%%%%%%%%%%%%%%%%%%%
\appendices
%\section*{Appendix}

\section{Proof of Property 1} \label{app:proof_of_property_1}
\begin{proof}
By integrating \eqref{eq:system} in the time interval $s \in [t_k, t_k+T]$ and taking the norms, we get:
\begin{align}
\|e_i(t)\| &= \left\|e_i(t_k) + \int_{t_k}^{t} \left[ g_i(x_i(s), \bar{x}_i(s), u_i(s))\right]ds \right\| \notag \\
&\le \left\|e_i(t_k) \right\| + \left\| \int_{t_k}^{t} \left[ f_i(e_i(s), \bar{x}_i(s)) + u_i(s)\right]ds \right\| \notag \\
&\le \left\|e_i(t_k) \right\| + \int_{t_k}^{t} \left\| f_i(e_i(s), \bar{x}_i(s) + u_i(s) \right\|ds \notag \\
&\le \left\|e_i(t_k) \right\| + \int_{t_k}^{t} \Big\{  \left\| f_i(e_i(s), \bar{x}_i(s) \right\| + \left\| u_i(s) \right\| \Big\} ds \notag \\
&\le \left\|e_i(t_k) \right\| +\int_{t_k}^{t} (M+u_{\max}) ds \notag \\
&= \left\|e_i(t_k) \right\| +(t-t_k) (M+u_{\max}), \notag
\end{align}
which concludes the proof.
\end{proof}

\section{Proof of Lemma 1} \label{app:lemma_1}

\begin{proof}
	\noindent For every $e_1,e_2 \in E_i, u \in \mathcal{U}_i, i \in \mathcal{V}$, the following holds:
	\begin{align}
	|F_i(e_1, u)-F_i(e_2, u) | &= \notag \\ 
	&\hspace{-12mm}\|e_1^\top Q_i e_1+u^\top R_i u - e_2^\top Q_i e_2 - u^\top R_i u | \notag
    \end{align}
    \begin{align}
	&\hspace{-12mm}= |e_1^\top Q_i e_1 - e_2^\top Q_i e_2 | \notag \\
	&\hspace{-12mm}= |e_1^\top Q_i e_1 +e_1^\top Q_i e_2 -e_1^\top Q_i e_2 - e_2^\top Q_i e_2 | \notag \\
	&\hspace{-12mm}= |e_1^\top Q_i (e_1-e_2) - e_2^\top Q_i ( e_1 - e_2) | \notag \\
	&\hspace{-12mm}\le |e_1^\top Q_i (e_1-e_2)| + |e_2^\top Q_i ( e_1 - e_2) |. \label{eq:lemma_1_proof_step_11}
	\end{align}
	By employing the property that:
	\begin{equation} \label{eq:sigma_max}
	|x^\top A y| \le \sigma_{\max}(A) \|x\| \|y\|, \forall \ x,y \in \mathbb{R}^n, A \in \mathbb{R}^{n \times n},
	\end{equation}
	\eqref{eq:lemma_1_proof_step_11} is written as:
	\begin{align*}
	&|F_i(e_1, u)-F_i(e_2, u) | \le \sigma_{\max}(Q_i) \|e_1\| \|e_1-e_2\| \notag \\ 
	&\hspace{35mm} +\sigma_{\max}(Q_i) \|e_2\| \|e_1-e_2\| \notag \\
	&\hspace{4mm}= \sigma_{\max}(Q_i) (\|e_1\| + \|e_2\|) \|e_1-e_2\| \notag \\
	&\hspace{4mm}= \sigma_{\max}(Q_i) \left[\sup_{e_1, e_2 \in E_i} \{\|e_1\| + \|e_2\|\}\right] \|e_1-e_2\| \notag \\
	&\hspace{4mm}= 2 \sigma_{\max}(Q_i) \left[\sup_{e_i \in E_i} \{\|e\|\}\right] \|e_1-e_2\| \notag \\
	&\hspace{4mm}= \left[ 2 \bar{\varepsilon}_i \sigma_{\max}(Q_i) \right] \|e_1-e_2\|.
	\end{align*}
	which completes the proof.
\end{proof}

\section{Proof of Lemma 2} \label{app:lemma_e_ehat}

\begin{proof}
Let us denote by:
\begin{align*}
u_i(\cdot) \triangleq u_i(s; e(t_{\scriptscriptstyle k_z})), \notag \\
e_i(s) \triangleq e_i(s; u_i(\cdot), e_i(t_{\scriptscriptstyle k_z})). \notag
\end{align*}
the control input and real trajectory of the system \eqref{eq:system} for $s \in [t_{\scriptscriptstyle k_z}, t_{\scriptscriptstyle k_z} + T_{\scriptscriptstyle z}]$. Also, denote for sake of simplicity:
\begin{equation*}
\hat{e}_i(s) \triangleq \hat{e}_i(s; u_i(\cdot), e_i(t_{\scriptscriptstyle k_z})).
\end{equation*}
the corresponding estimated trajectory. By integrating \eqref{eq:system}, \eqref{eq:diff_mpc} for the time interval $[t_{\scriptscriptstyle k_z}, t_{\scriptscriptstyle k_z}+s]$ we have the following:
\begin{align*}
e_i(s) &= e_i(t_{\scriptscriptstyle k_z}) +  \int_{t_{\scriptscriptstyle k_z}}^{s} \left[ g_i(e_i(s'), \bar{x}_i(s'), u_i(\cdot)) \right] ds', \notag \\
\hat{e}_i(s) &= e_i(t_{\scriptscriptstyle k_z}) +  \int_{t_{\scriptscriptstyle k_z}}^{s} \left[ g_i(\hat{e}_i(s'), \hat{\bar{x}}_i(s'), u_i(\cdot)) \right] ds', \notag
\end{align*}
respectively. Then, we have that:
\begin{align}
&\|e_i(s)-\hat{e}_i(s)\| \notag \\
&= \Bigg\|\int_{t_{\scriptscriptstyle k_z}}^{s} \left[ g(e_i(s'), \bar{x}_i(s'), u_i(\cdot)) \right] ds'- \notag \\
&\hspace{35mm} \int_{t_{\scriptscriptstyle k_z}}^{s} \left[ g(\hat{e}_i(s'), \hat{\bar{x}}_i(s'), u_i(\cdot)) \right] ds' \Bigg\| \notag \\
&= \Bigg\|\int_{t_{\scriptscriptstyle k_z}}^{s} \big[   f(e_i(s'), \bar{x}_i(s'))+u_{i}(s') \notag \\ 
&\hspace{35mm}-f(\hat{e}_i(s'), \hat{\bar{x}}_i(s'))-u_{i}(s')\big] ds' \Bigg\| \notag \\
&= \Bigg\|\int_{t_{\scriptscriptstyle k_z}}^{s} \big[  f(e_i(s'), \bar{x}_i(s'))- f(\hat{e}_i(s'), \hat{\bar{x}}_i(s'))\big] ds' \Bigg\| \notag \\
&\le \int_{t_{\scriptscriptstyle k_z}}^{s} \Big\| f(e_i(s'), \bar{x}_i(s'))- f(\hat{e}_i(s'), \hat{\bar{x}}_i(s')) \Big\| ds'  \notag
\end{align}
\begin{align}
&= \int_{t_{\scriptscriptstyle k_z}}^{s} \Big\| f(e_i(s'), \bar{x}_i(s')) -f(\hat{e}_i(s'), \bar{x}_i(s')) \notag \\
&\hspace{20mm}+ f(\hat{e}_i(s'), \bar{x}_i(s'))- f(\hat{e}_i(s'), \hat{\bar{x}}_i(s'))\Big\| ds'  \notag \\
&\le \int_{t_{\scriptscriptstyle k_z}}^{s} \Big\| f(e_i(s'), \bar{x}_i(s')) -f(\hat{e}_i(s'), \bar{x}_i(s')) \big\|ds' \notag \\ 
&\hspace{15mm}+ \int_{t_{\scriptscriptstyle k_z}}^{s} \Big\| f(\hat{e}_i(s'), \bar{x}_i(s'))- f(\hat{e}_i(s'), \hat{\bar{x}}_i(s'))\Big\| ds'.  \notag
\end{align}
By using the bounds of \eqref{eq:lip_1}-\eqref{eq:lip_2} we obtain:
\begin{align}
&\|e_i(s)-\hat{e}_i(s)\| \notag \\
&\le \int_{t_{\scriptscriptstyle k_z}}^{s} L_i \big\| e_i(s') -\hat{e}_i(s') \big\|ds' \notag \\
&\hspace{30mm}+ \int_{t_{\scriptscriptstyle k_z}}^{s} \bar{L}_i \big\| \bar{x}_i(s')- \hat{\bar{x}}_i(s') \big\| ds'.  \label{eq:before_bellman}
\end{align}
The following property holds:
\begin{equation*}
\|\bar{x}_i-\hat{\bar{x}}_i\| \le \sum_{j \in \mathcal{N}_i} \|x_j-\hat{x}_j\|, \forall i \in \mathcal{V}, j \in \mathcal{N}_i.
\end{equation*}
Then, by combining the last inequality with \eqref{eq:bound_geitonon} from Property 2, we have that:
\begin{equation*}
\|\bar{x}_i-\hat{\bar{x}}_i\| \le \sum_{j \in \mathcal{N}_i} 2 \sqrt{3} R = 2 \sqrt{3} R N_i, \forall i \in \mathcal{V}, j \in \mathcal{N}_i.
\end{equation*}
By combining the last result with \eqref{eq:before_bellman} we get:
\begin{align}
&\|e_i(s)-\hat{e}_i(s)\| \le  \notag \\
&\int_{t_{\scriptscriptstyle k_z}}^{s} L_i \big\| e_i(s') -\hat{e}_i(s') \big\|ds' + \int_{t_{\scriptscriptstyle k_z}}^{s} \bar{L}_i 2 \sqrt{3} R N_i ds' \notag \\
&= \int_{t_{\scriptscriptstyle k_z}}^{s} L_i \big\| e_i(s') -\hat{e}_i(s') \big\|ds' + 2 \sqrt{3} R \bar{L}_i N_i (s-t_{\scriptscriptstyle k_z}). \label{eq:bbefore_bellman}
\end{align}
By employing the Gronwall-Bellman inequality from \cite{khalil_nonlinear_systems}, \eqref{eq:bbefore_bellman} becomes:
\begin{align}
&\|e_i(s)-\hat{e}_i(s)\| \notag \\
&\le 2 \sqrt{3} R L_i \bar{L}_i N_i \int_{t_{\scriptscriptstyle k_z}}^{s}   (s'-t_{\scriptscriptstyle k_z}) \exp\left[\int_{s'}^{s} L_i ds'' \right] ds'  \notag \\
&\hspace{20mm} + 2 \sqrt{3} R \bar{L}_i N_i (s-t_{\scriptscriptstyle k_z}) \notag \\
&= 2 \sqrt{3} R L_i \bar{L}_i N_i \int_{t_{\scriptscriptstyle k_z}}^{s}   (s'-t_{\scriptscriptstyle k_z}) \exp\left[ L_i(-s'+s) \right] ds'  \notag \\
&\hspace{20mm} + 2 \sqrt{3} R \bar{L}_i N_i (s-t_{\scriptscriptstyle k_z}) \notag \\
&= -2 \sqrt{3} R \bar{L}_i N_i(s-t_{\scriptscriptstyle k_z}) +2 \sqrt{3} R \bar{L}_i N_i (s-t_{\scriptscriptstyle k_z}) \notag \\ 
&\hspace{10mm} + 2 \sqrt{3} R \bar{L}_i N_i\int_{t_{\scriptscriptstyle k_z}}^{s}  \exp\left[ L_i(-s'+s) \right] ds'  \notag \\
&= 2 \sqrt{3} R \bar{L}_i N_i\int_{t_{\scriptscriptstyle k_z}}^{s}  \exp\left[ L_i(-s'+s) \right] ds' \notag \\
&= -\frac{2 \sqrt{3} R \bar{L}_i N_i}{L_i} \left[ 1- e^{L_i(s-t_{\scriptscriptstyle k_z})} \right] \notag \\
&= \frac{2 \sqrt{3} R \bar{L}_i N_i}{L_i} \left[ e^{L_i(s-t_{\scriptscriptstyle k_z})}- 1 \right]. \label{eq:e_e_hat_bound_1}
\end{align}
By employing \eqref{eq:error_bound} of Property \ref{property:error_bound} for the terms $e(s), \hat{e}(s)$ we have that:
\begin{align}
&\|e_i(s)-\hat{e}_i(s)\| \le \|e_i(s)\| + \|\hat{e}_i(s)\| \notag \\
&\le \|e_i(t_{\scriptscriptstyle k_z}; u_i(\cdot), e_i(t_{\scriptscriptstyle k_z}))\|+ (s-t_{\scriptscriptstyle k_z})(M+u_{\max}) + \notag \\ 
&\hspace{15mm} \|\hat{e}_i(t_{\scriptscriptstyle k_z}; u_i(\cdot), e_i(t_{\scriptscriptstyle k_z}))\|+ (s-t_{\scriptscriptstyle k_z})(M+u_{\max}) \notag \\
&\le 2 \|e_i(t_{\scriptscriptstyle k_z}; u_i(\cdot), e_i(t_{\scriptscriptstyle k_z}))\|+ 2 (s-t_{\scriptscriptstyle k_z})(M+u_{\max})\notag \\ 
&= 2 \|e_i(t_{\scriptscriptstyle k_z})\|+ 2 (s-t_{\scriptscriptstyle k_z})(M+u_{\max}). \label{eq:e_e_hat_bound_2}
\end{align}
By combining \eqref{eq:e_e_hat_bound_1}, \eqref{eq:e_e_hat_bound_2} we get:
\begin{align*}
&\|e_i(s)-\hat{e}_i(s)\| \le \min \big\{ \frac{2 \sqrt{3} R \bar{L}_i N_i}{L_i} \left[ e^{L_i(s-t_{\scriptscriptstyle k_z})}- 1 \right],  \notag \\ 
&\hspace{10mm} 2 \|e_i(t_{\scriptscriptstyle k_z})\|+ 2 (s-t_{\scriptscriptstyle k_z})(M+u_{\max}) \big\}.
\end{align*}
which leads to the conclusion of the proof.
\end{proof}

\section{Proof of Property 3} \label{app:property_tighthened_sets}
\begin{proof}
	Let $s \in [t_{\scriptscriptstyle k_z}, t_{\scriptscriptstyle k_z}+T_{\scriptscriptstyle z}]$. Let us also define:
	\begin{equation*}
	z_i(s) \triangleq e_i(s) -\hat{e}_i(s; u_i(s; e(t_{\scriptscriptstyle k_z})), e_i(t_{\scriptscriptstyle k_z})).
	\end{equation*}
	Then, according to Lemma \ref{lemma:e-hat_e}, for $s \in [t_{\scriptscriptstyle k_z}, t_{\scriptscriptstyle k_z}+T_{\scriptscriptstyle z}]$, we get:
	\begin{align*}
	\|z_i(s)\| &= \| e_i(s) -\hat{e}_i(s; u_i(s; e_i(t_{\scriptscriptstyle k_z})), e_i(t_{\scriptscriptstyle k_z})) \| \notag \\
	&\le \rho_i(s-t_{\scriptscriptstyle k_z}).
	\end{align*}
	Hence, $z_i \in B^i_{s-t_{\scriptscriptstyle k_z}}$, which implies that: $-z_i \in B^i_{s-t_{\scriptscriptstyle k_z}}$. The following implications hold:
	\begin{align*}
	\hat{e}_i(s; u_i(s; e_i(t_{\scriptscriptstyle k_z})), e_i(t_{\scriptscriptstyle k_z})) &\in E_i \sim B^i_{s-t_{\scriptscriptstyle k_z}} \notag \\
	\Rightarrow e_i(s) - z_i &\in E \sim B^i_{s-t_{\scriptscriptstyle k_z}} \notag \\
	\Rightarrow e_i(s) +  (-z_i) &\in E \sim B^i_{s-t_{\scriptscriptstyle k_z}} \notag \\
	\Rightarrow e_i(s) &\in E_i, \forall \ s \in [t_{\scriptscriptstyle k_z}, t_{\scriptscriptstyle k_z}+T_{\scriptscriptstyle z}],
	\end{align*}
	which concludes the proof.
\end{proof}

\section{Proof of Lemma 3} \label{app:lemma_lip_L_v}

\begin{proof}
	For every $e_i \in \Phi_i$ we have that:
	\begin{align}
	V_i(e_i) \le \alpha_{1,i} &\Rightarrow e_i^\top P_i e_i \le \alpha_{1,i} \notag \\
	&\Rightarrow \lambda_{\min}(P_i) \|e_i\|^2 \le e_i^\top P_i e_i \le \alpha_{1,i} \notag \\
	&\Rightarrow  \|e_i\| \le \sqrt{ \frac{\alpha_{1,i}}{\lambda_{\min}(P_i)}}. \label{eq:lamda_min_P}
	\end{align}	
	
	\noindent For every $e_1,e_2 \in \Phi_i$, it also holds:
	\begin{align}
	|V_i(e_{1})-V_i(e_{2}) | &= |e_{1}^\top P_i e_{1} - e_{2}^\top P_i e_{2} | \notag \\
	&\hspace{-2mm}= |e_1^\top P_i e_1 +e_1^\top P_i e_2 -e_1^\top P_i e_2 - e_2^\top P_i e_2 | \notag \\
	&\hspace{-2mm}= |e_1^\top P_i (e_1-e_2) - e_2^\top P_i ( e_1 - e_2) | \notag \\
	&\hspace{-2mm}\le |e_1^\top P_i (e_1-e_2)| + |e_2^\top P_i ( e_1 - e_2) |, \notag
	\end{align}
	which by using \eqref{eq:sigma_max} leads to:
	\begin{align*}
	|V_i(e_1)-V_i(e_2) | &\le \sigma_{\max}(P_i) \|e_1\| \|e_1-e_2\| \notag \\ 
	&\hspace{5mm} +\sigma_{\max}(P_i) \|e_2\| \|e_1-e_2\| \notag \\
	&= \sigma_{\max}(P_i) (\|e_1\| + \|e_2\|) \|e_1-e_2\|, \notag
	\end{align*}
	which by employing \eqref{eq:lamda_min_P}, becomes:
	\begin{align*}
	|V_i(e_1)-V_i(e_2) | &\le \sigma_{\max}(P_i) \|e_1\| \|e_1-e_2\| \notag \\ 
	&\hspace{5mm} +\sigma_{\max}(P_i) \|e_2\| \|e_1-e_2\| \notag \\
	&= \sigma_{\max}(P_i) (\|e_1\| + \|e_2\|) \|e_1-e_2\| \notag \\
	&\le 2 \sigma_{\max}(P_i) \sqrt{ \frac{\alpha_1}{\lambda_{\min}(P_i)}} \|e_1-e_2\|, \notag
	\end{align*}
	which completes the proof.
\end{proof}

\section{Proof of Lemma 4} \label{app:y_E_t_i_plus_1}
\begin{proof}
	For every $s \ge t_{\scriptscriptstyle k_{z+1}}, L_g > 0$ the following inequality holds:
	\begin{equation*}
	\left[e^{L_g(t_{\scriptscriptstyle k_{z+1}}-t_{\scriptscriptstyle k_z})}-1\right]+\left[e^{L_g(s-t_{\scriptscriptstyle k_{z+1}})}-1\right] \le \left[e^{L_g(s-t_{\scriptscriptstyle k_z})}-1\right],
	\end{equation*}
	which implies that:
	\begin{align}
	&\widetilde{\rho}_i \left[e^{L_g(t_{\scriptscriptstyle k_{z+1}}-t_{\scriptscriptstyle k_z})}-1\right]+\widetilde{\rho}_i \left[e^{L_g(s-t_{\scriptscriptstyle k_{z+1}})}-1\right] \notag \\ 
	&\hspace{40mm}\le \widetilde{\rho}_i \left[e^{L_g(s-t_{\scriptscriptstyle k_z})}-1\right]. \label{eq:inequaliti_min_rwho_1}
	\end{align}
	It holds also that:
	\begin{align}
	& t_{\scriptscriptstyle k_{z+1}}-t_{\scriptscriptstyle k_z}+s-t_{\scriptscriptstyle k_{z+1}} \le s-t_{\scriptscriptstyle k_z} \notag \\
	\Leftrightarrow \ & 2 \|e_i(t_{\scriptscriptstyle k_z})\|+ 2 (t_{\scriptscriptstyle k_{z+1}}-t_{\scriptscriptstyle k_z})(M+u_{\max}) \notag \\ 
	&+ 2 \|e_i(t_{\scriptscriptstyle k_z})\|+ 2 (s-t_{\scriptscriptstyle k_{z+1}})(M+u_{\max}) \le \notag \\
	& 2 \|e_i(t_{\scriptscriptstyle k_z})\|+ 2 (s-t_{\scriptscriptstyle k_z})(M+u_{\max}). \label{eq:inequaliti_min_rwho_2}
	\end{align}
   By setting:
   \begin{align}
   A_1 &= \widetilde{\rho}_i \left[e^{L_g(t_{\scriptscriptstyle k_{z+1}}-t_{\scriptscriptstyle k_z})}-1\right], \notag \\
   A_2 &= \widetilde{\rho}_i \left[e^{L_g(s-t_{\scriptscriptstyle k_{z+1}})}-1\right], \notag \\
   A_3 &= \widetilde{\rho}_i \left[e^{L_g(s-t_{\scriptscriptstyle k_z})}-1\right], \notag \\
   B_1 &= 2 \|e_i(t_{\scriptscriptstyle k_z})\|+ 2 (t_{\scriptscriptstyle k_{z+1}}-t_{\scriptscriptstyle k_z})(M+u_{\max}), \notag \\
   B_2 &= 2 \|e_i(t_{\scriptscriptstyle k_z})\|+ 2 (s-t_{\scriptscriptstyle k_{z+1}})(M+u_{\max}), \notag \\
   B_3 &= 2 \|e_i(t_{\scriptscriptstyle k_z})\|+ 2 (s-t_{\scriptscriptstyle k_z})(M+u_{\max}), \notag
   \end{align}
   and taking account \eqref{eq:inequaliti_min_rwho_1}, \eqref{eq:inequaliti_min_rwho_2} we get:
   \begin{align}
   &\rho_i(t_{\scriptscriptstyle k_{z+1}}-t_{\scriptscriptstyle k_z})+\rho_i(s-t_{\scriptscriptstyle k_{z+1}}) \notag \\
   &\hspace{15mm} \le \min\{A_1, B_1\} + \min\{B_1, B_2\} \notag \\
   &\hspace{15mm} \le \min\{A_1+A_2, B_1+B_2\} \notag \\
   &\hspace{15mm} \le \min\{A_3, B_3\} \notag \\
      &\hspace{15mm} = \rho_i(s-t_{\scriptscriptstyle k_z}), \notag 
   \end{align}
   or:
	\begin{equation} \label{eq:gamma_inequality}
	\rho_i(t_{\scriptscriptstyle k_{z+1}}-t_{\scriptscriptstyle k_z})+\rho_i(s-t_{\scriptscriptstyle k_{z+1}}) \le \rho_i(s-t_{\scriptscriptstyle k_z}).
	\end{equation}
	Let us consider $\phi \in B^i_{s-t_{\scriptscriptstyle k_{z+1}}}$. Then, it holds $\|\phi\| \le \rho_i(s-t_{\scriptscriptstyle k_{z+1}})$. Let us denote $z = x-y+\phi$. It is clear that:
	\begin{align}
	\|z\| &\le \|x-y+\phi \| \notag \\
	&\le \|x-y\| + \|\phi\| \notag \\
	&\le \rho_i(t_{\scriptscriptstyle k_{z+1}}-t_{\scriptscriptstyle k_z})+\rho_i(s-t_{\scriptscriptstyle k_{z+1}}). \label{eq:z_inequlity}
	\end{align}
	By employing \eqref{eq:gamma_inequality}, \eqref{eq:z_inequlity} becomes:
	\begin{equation*}
	\|z\| \le \rho_i(s-t_{\scriptscriptstyle k_z}),
	\end{equation*}
	which implies that $z \in B^i_{s-t_{\scriptscriptstyle k_z}}$. We have that:
	\begin{align}
	x + (-z) &= y + (-\phi), \notag \\
	x &\in E_{s-t_{\scriptscriptstyle k_z}} = E \sim B_{s-t_{\scriptscriptstyle k_z}}, \notag \\ 
	-z &\in B^i_{s-t_{\scriptscriptstyle k_z}}, \notag \\
	-\rho &\in B^i_{s-t_{\scriptscriptstyle k_{z+1}}}, \notag
	\end{align}
	which implies that $y \in E_{s-t_{\scriptscriptstyle k_{z+1}}} = E \sim B_{s-t_{\scriptscriptstyle k_{z+1}}}$.
\end{proof}

\section{Proof of Lemma 5} \label{app:bounded_trajectories}

\begin{proof}
	Let $s \ge t_{\scriptscriptstyle k_z}$. The following equalities hold:
	\begin{align}
	&\| \hat{e}_i(s; u_i(\cdot), e_i(t_{\scriptscriptstyle k_{z+1}})) - \hat{e}_i(s; u_i(\cdot), e_i(t_{\scriptscriptstyle k_z})) \| \notag \\
	&= \Bigg\| \hat{e}_i(s; u_i(\cdot), e_i(t_{\scriptscriptstyle k_{z+1}})) + \int_{t_{\scriptscriptstyle k_{z+1}}}^{s} g_i(\hat{e}_i(s'), \hat{\bar{x}}_i(s'), u_i(\cdot))ds' \notag \\
	&-\hat{e}_i(t_{\scriptscriptstyle k_{z}}; u_i(\cdot), e_i(t_{\scriptscriptstyle k_z})) - \int_{t_{\scriptscriptstyle k_z}}^{s} g_i(\hat{e}_i(s'), \hat{\bar{x}}_i(s'), u_i(\cdot)))ds \Bigg\| \notag \\
	&=\Bigg\| e_i(t_{\scriptscriptstyle k_{z+1}})  -e_i(t_{\scriptscriptstyle k_z}) - \int_{t_{\scriptscriptstyle k_z}}^{s} g_i(\hat{e}_i(s'), \hat{\bar{x}}_i(s'), u_i(\cdot)) ds'\notag
			\end{align}
			\begin{align}
	&\hspace{30mm}- \int_{s}^{t_{\scriptscriptstyle k_{z+1}}} g_i(\hat{e}_i(s'), \hat{\bar{x}}_i(s'), u_i(\cdot)) ds'\Bigg\| \notag \\
	&= \left\| e_i(t_{\scriptscriptstyle k_{z+1}})  -e_i(t_{\scriptscriptstyle k_z}) - \int_{t_{\scriptscriptstyle k_z}}^{t_{\scriptscriptstyle k_{z+1}}} g_i(\hat{e}_i(s'), \hat{\bar{x}}_i(s'), u_i(\cdot)) ds'\right\| \notag \\
	&= \left\| e_i(t_{\scriptscriptstyle k_{z+1}})  -e_i(t_{\scriptscriptstyle k_z}) - \int_{t_{\scriptscriptstyle k_z}}^{t_{\scriptscriptstyle k_{z+1}}} \frac{d}{dt} \left[ \hat{e}_i(s'; u_i(\cdot), e_i(t_{\scriptscriptstyle k_z})\right] ds'\right\| \notag \\
	&= \big\| e_i(t_{\scriptscriptstyle k_{z+1}})  -e_i(t_{\scriptscriptstyle k_z}) - \hat{e}(t_{\scriptscriptstyle k_{z+1}}; u(\cdot), e_i(t_{\scriptscriptstyle k_z})) \notag \\ &\hspace{45mm} +\hat{e}_i(t_{\scriptscriptstyle k_z}; u_i(\cdot), e_i(t_{\scriptscriptstyle k_z}))\big\| \notag \\
	&= \left\| e_i(t_{\scriptscriptstyle k_{z+1}})  -e_i(t_{\scriptscriptstyle k_z}) - \hat{e}(t_{\scriptscriptstyle k_{z+1}}; u(\cdot), e_i(t_{\scriptscriptstyle k_z}))+e_i(t_{\scriptscriptstyle k_z})\right\| \notag \\
	&= \left\| e_i(t_{\scriptscriptstyle k_{z+1}}) - \hat{e}_i(t_{\scriptscriptstyle k_{z+1}}; u_i(\cdot), e_i(t_{\scriptscriptstyle k_z}))\right\|, \notag
	\end{align}
	which, by employing Lemma \ref{lemma:e-hat_e} for $s = t_{\scriptscriptstyle k_{z+1}}$, becomes:
	\begin{align}
	&\| \hat{e}_i(s; u(\cdot), e_i(t_{\scriptscriptstyle k_{z+1}})) - \hat{e}_i(s; u_i(\cdot), e_i(t_{\scriptscriptstyle k_z})) \| \le \notag \\
	&\hspace{30mm} \rho_i(t_{\scriptscriptstyle k_{z+1}}-t_{\scriptscriptstyle k_z}) = \rho_i(h), \notag
	\end{align}
	since $t_{\scriptscriptstyle k_{z+1}}-t_{\scriptscriptstyle k_z} = h$, which concludes the proof.
\end{proof}

\section{Feasibility and Convergence} \label{app:feasibility_convergence}

\begin{proof}
	The proof consists of two parts: in the first part it is established that initial feasibility implies feasibility afterwards. Based on this result it is then shown that the error $e_i(t)$ converges to the terminal set $\mathcal{E}_i$.
	
	\emph{Feasibility Analysis}: Consider any sampling time instant for which a solution exists, say $t_{\scriptscriptstyle k_z}$. In between $t_{\scriptscriptstyle k_z}$ and $t_{\scriptscriptstyle k_{z+1}}$, the optimal control input $\hat{u}_i^\star (s; e_i(t_{\scriptscriptstyle k_z})), s \in [t_{\scriptscriptstyle k_z}, t_{\scriptscriptstyle k_{z+1}})$ is implemented. The remaining part of the optimal control input $\hat{u}_i^\star (s; e_i(t_{\scriptscriptstyle k_z})), s \in [t_{\scriptscriptstyle k_{z+1}}, t_{\scriptscriptstyle k_z}+T_{\scriptscriptstyle z}],$ satisfies the state and input constraints $E_i, \mathcal{U}_i$, respectively. Furthermore, since the problem is feasible at time $t_{\scriptscriptstyle k_z}$, it holds that:
	\begin{subequations}
		\begin{align}
		\hat{e}_i(s; \hat{u}^\star (s; e_i(t_{\scriptscriptstyle k_z})), e_i(t_{\scriptscriptstyle k_z})) &\in E^i_{s-t_{\scriptscriptstyle k_z}}, \label{eq:e_s_hat_in_E_s_t_i}\\
		\hat{e}_i(t_{\scriptscriptstyle k_z}+T; \hat{u}_i^\star (s; e_i(t_{\scriptscriptstyle k_z})), e_i(t_{\scriptscriptstyle k_z})) &\in \mathcal{E}_i, \label{eg:t_k_z_plus_T_in_E_i}
		\end{align}
	\end{subequations}
	for $s \in [t_{\scriptscriptstyle k_z}, t_{\scriptscriptstyle k_z}+T_{\scriptscriptstyle z}]$. By using Property 1, \eqref{eq:e_s_hat_in_E_s_t_i} implies also that $e_i(s; \hat{u}_i^\star (s; e_i(t_{\scriptscriptstyle k_z})), e_i(t_{\scriptscriptstyle k_z})) \in E_i$. We know also from Assumption \ref{ass:admissible_u_f} that for all $e_i \in \mathcal{E}_i$, there exists at least one control input $u_{f,i}(\cdot)$ that renders the set $\mathcal{E}_i$ invariant over $h$. Picking any such input, a feasible control input $\bar{u}_i(\cdot; e_i(t_{\scriptscriptstyle k_{z+1}}))$, at time instant $t_{\scriptscriptstyle k_{z+1}}$, may be the following:
	\begin{align} \label{eq:u_bar_feas}
	&\bar{u}_i(s; e(t_{\scriptscriptstyle k_{z+1}})) = \notag \\
	&\hspace{-2mm} \begin{cases}
	\hat{u}_i^\star (s; e_i(t_{\scriptscriptstyle k_z})), \ & s \in [t_{\scriptscriptstyle k_{z+1}}, t_{\scriptscriptstyle k_{z}}+ T_{\scriptscriptstyle z+1}], \\
	u_{f,i} (t_{\scriptscriptstyle k_{z}}+ T_{\scriptscriptstyle z+1}; \hat{u}^\star(\cdot), e(t_i))), & s \in [t_{\scriptscriptstyle k_{z}}+ T_{\scriptscriptstyle z+1}, t_{\scriptscriptstyle k_{z}}+ T_{\scriptscriptstyle z}]. 
	\end{cases}
	\end{align}
	For the time intervals it holds that (see Fig. \ref{fig:time_sequence}):
	\begin{align}
	&t_{\scriptscriptstyle k_{z}}+ T_{\scriptscriptstyle z+1} = t_{\scriptscriptstyle k_{z}} + T_{\scriptscriptstyle z} - h = t_{\scriptscriptstyle k_{z}} + T - h.\notag
	\end{align}
    For the feasibility of the ROCP, we have to prove the following three statements for every $s \in [t_{\scriptscriptstyle k_{z+1}}, t_{\scriptscriptstyle k_{z}}+T_{\scriptscriptstyle z}]$:
	\begin{enumerate}
		\item $\bar{u}_i(s; e(t_{\scriptscriptstyle k_{z+1}})) \in \mathcal{U}_i$.
		\item $\hat{e}_i(t_{\scriptscriptstyle k_{z}}+T_{\scriptscriptstyle z}; \bar{u}(s; e(t_{\scriptscriptstyle k_{z+1}})), e(t_{\scriptscriptstyle k_{z+1}})) \in \mathcal{E}_i$.
		\item $\hat{e}_i(s; \bar{u}_i(s; e(t_{\scriptscriptstyle k_{z+1}})), e(t_{\scriptscriptstyle k_{z+1}})) \in E^i_{s-t_{\scriptscriptstyle k_{z+1}}}$.
	\end{enumerate}
	Statement 1: From the feasibility of $\hat{u}_i^\star(s, e(t_{\scriptscriptstyle k_{z}}))$ and the fact that $u_{f,i}(e_i(\cdot)) \in \mathcal{U}_i$, for all $e_i(\cdot) \in \Phi_i$, it follows that:
	\begin{equation*}
	\bar{u}_i(s; e_i(t_{\scriptscriptstyle k_{z+1}})) \in \mathcal{U}_i, \forall \ s \in [t_{\scriptscriptstyle k_{z+1}}, t_{\scriptscriptstyle k_{z}}+T_{\scriptscriptstyle z}].
	\end{equation*}
	Statement 2: We need to prove in this step that for every $s \in [t_{\scriptscriptstyle k_{z+1}}, t_{\scriptscriptstyle k_{z}}+T_{\scriptscriptstyle z}]$ it holds that $\hat{e}_i(t_{\scriptscriptstyle k_{z}}+T_{\scriptscriptstyle z}; \bar{u}_i(s; e_i(t_{\scriptscriptstyle k_{z+1}}))), e_i(t_{\scriptscriptstyle k_{z+1}})) \in \mathcal{E}_i$.	Since $V_i(\cdot)$ is Lipschitz continuous, we get:
	\begin{align}
	&\hspace{-2mm} V_i(\hat{e}_i(t_{\scriptscriptstyle k_{z}}+T_{\scriptscriptstyle z+1}; \bar{u}_i(\cdot), e_i(t_{\scriptscriptstyle k_{z+1}}))) - \notag \\ 
	&\hspace{20mm} V_i(\hat{e}_i(t_{\scriptscriptstyle k_{z}}+T_{\scriptscriptstyle z+1}; \bar{u}_i(\cdot), e_i(t_{\scriptscriptstyle k_{z}}))) \le  \notag \\
	&\hspace{-2mm} L_{\scriptscriptstyle V_i} \| \hat{e}_i(t_{\scriptscriptstyle k_{z}}+T_{\scriptscriptstyle z+1}; \bar{u}_i(\cdot), e_i(t_{\scriptscriptstyle k_{z+1}})) \notag \\ 
	&\hspace{20mm}- \hat{e}(t_{\scriptscriptstyle k_{z}}+T_{\scriptscriptstyle z+1}; \bar{u}_i(\cdot), e(t_{\scriptscriptstyle k_{z}})) \|. \label{eq:feasib_lipsitch}
	\end{align}
	for the same control input $\bar{u}_i(\cdot) = u_i^\star(s; e_i(t_{\scriptscriptstyle k_{z}}))$. By employing Lemma \ref{lemma:bounded_trajectories} for $\alpha = t_{\scriptscriptstyle k_{z}}+T_{\scriptscriptstyle z+1}$ and $u(\cdot) = \bar{u}_i(\cdot) = u_i^\star(s; e_i(t_{\scriptscriptstyle k_{z}}))$, we have that:
	\begin{align}
    &\hspace{-2mm}\| \hat{e}_i(t_{\scriptscriptstyle k_{z}}+T_{\scriptscriptstyle z}; \bar{u}_i(\cdot), 
    e_i(t_{\scriptscriptstyle k_{z+1}})) \notag \\ 
    &\hspace{-1mm}- \hat{e}_i(t_{\scriptscriptstyle k_{z}}+T_{\scriptscriptstyle z+1}; 
    \bar{u}_i(\cdot), e(t_{\scriptscriptstyle k_{z}})) \| \le \rho_i(t_{\scriptscriptstyle k_{z+1}}-t_{\scriptscriptstyle k_{z}}) = \rho_i(h). \label{eq:bound_gamma_h}
	\end{align}
	Note also that for the function $\rho_i(\cdot)$ the following implication holds: $$h \le T_{\scriptscriptstyle z} \Rightarrow \rho_i(h) \le \rho_i(T_{\scriptscriptstyle z}).$$ By employing the latter result, \eqref{eq:bound_gamma_h} becomes:
	\begin{align}
    &\hspace{-2mm}\| \hat{e}_i(t_{\scriptscriptstyle k_{z}}+T_{\scriptscriptstyle z}; \bar{u}_i(\cdot), 
    e_i(t_{i+1})) \notag \\ 
    &\hspace{-1mm}- \hat{e}_i(t_{\scriptscriptstyle k_{z}}+T_{\scriptscriptstyle z+1}; 
    \bar{u}_i(\cdot), e(t_{\scriptscriptstyle k_{z}})) \| \le \rho_i(h) \le \rho_i(T_{\scriptscriptstyle z}). \label{eq:bound_gamma_h2}
	\end{align}
	By combining \eqref{eq:bound_gamma_h2} and \eqref{eq:feasib_lipsitch} we get:
	\begin{align}
	&\hspace{-2mm} V_i(\hat{e}_i(t_{\scriptscriptstyle k_{z}}+T_{\scriptscriptstyle z+1}; \bar{u}_i(\cdot), e_i(t_{\scriptscriptstyle k_{z+1}}))) - \notag \\ 
		&\hspace{10mm} V_i(\hat{e}_i(t_{\scriptscriptstyle k_{z}}+T_{\scriptscriptstyle z+1}; \bar{u}_i(\cdot), e_i(t_{\scriptscriptstyle k_{z}}))) \le L_{\scriptscriptstyle V_i} \rho_i(T_{\scriptscriptstyle z}), \notag
	\end{align}
	or equivalently:
	\begin{align}
	&\hspace{-2mm} V_i(\hat{e}_i(t_{\scriptscriptstyle k_{z}}+T_{\scriptscriptstyle z+1}; \bar{u}_i(\cdot), e_i(t_{\scriptscriptstyle k_{z+1}}))) \le \notag \\ 
	&\hspace{10mm} V_i(\hat{e}_i(t_{\scriptscriptstyle k_{z}}+T_{\scriptscriptstyle z+1}; \bar{u}_i(\cdot), e_i(t_{\scriptscriptstyle k_{z}}))) + L_{\scriptscriptstyle V_i} \rho_i(T_{\scriptscriptstyle z}). \label{eq:feasib_lipsitch_1}
	\end{align}
	By using \eqref{eg:t_k_z_plus_T_in_E_i}, we have that $\hat{e}_i(t_{\scriptscriptstyle k_{z}}+T_{\scriptscriptstyle z+1}; \bar{u}_i(\cdot), e_i(t_{\scriptscriptstyle k_{z}})) \in \mathcal{E}_i$. Then, \eqref{eq:feasib_lipsitch_1} gives:
	\begin{align}
	V_i(\hat{e}_i(t_{\scriptscriptstyle k_{z}}+T_{\scriptscriptstyle z+1}; \bar{u}_i(\cdot), e_i(t_{\scriptscriptstyle k_{z+1}}))) \le \alpha_{2,i} +  L_{\scriptscriptstyle V_i} \rho_i(T_{\scriptscriptstyle z}) \label{eq:feasib_lipsitch_2}
	\end{align}
	From \eqref{eq:rho_bar} of the Theorem 1, we get equivalently:
	\begin{align}
	&  \rho_i(T_{\scriptscriptstyle z}) \le \frac{\alpha_{1,1}-\alpha_{2,i}}{L_{\scriptscriptstyle V_i}} \notag \\ 
	\Leftrightarrow  &   \alpha_{2,i} +  L_{\scriptscriptstyle V_i} \rho_i(T_{\scriptscriptstyle z}) \le  \alpha_{1,i}. \label{eq:alpha_1_L_V}
	\end{align}
	By combining \eqref{eq:feasib_lipsitch_2} and \eqref{eq:alpha_1_L_V}, we get:
	\begin{align*}
	V_i(\hat{e}_i(t_{\scriptscriptstyle k_{z}}+T_{\scriptscriptstyle z+1}; \bar{u}_i(\cdot), e_i(t_{\scriptscriptstyle k_{z+1}}))) \le \alpha_{1,i},
	\end{align*}
	which, from Assumption \ref{ass:admissible_u_f}, implies that:
	\begin{equation} \label{eq:e_hat_in_phi}
	\hat{e}_i(t_{\scriptscriptstyle k_{z}}+T_{\scriptscriptstyle z+1}; \bar{u}_i(\cdot), e_i(t_{\scriptscriptstyle k_{z+1}})) \in \Phi_i.
	\end{equation}
	But since $\bar{u}_i(\cdot)$ is chosen to be local admissible controller from Assumption \ref{ass:admissible_u_f}, according to our choice of terminal set $\mathcal{E}_i$,  \eqref{eq:e_hat_in_phi} leads to:
	\begin{equation*}
	\hat{e}_i(t_{\scriptscriptstyle k_{z}}+T_{\scriptscriptstyle z}; \bar{u}_i(\cdot), e_i(t_{\scriptscriptstyle k_{z+1}})) \in \mathcal{E}_i.
	\end{equation*}
	Thus, statement 2 holds.
	
	\noindent Statement 3: By employing Lemma \ref{lemma:bounded_trajectories} for:
		\begin{align}
		x &= \hat{e}_i(s; \bar{u}_i(s; e(t_{\scriptscriptstyle k_{z}})), e(t_{\scriptscriptstyle k_{z}})) \in E^i_{s-t_i}, \notag \\
		y &= \hat{e}_i(s; \bar{u}_i(s; e(t_{\scriptscriptstyle k_{z+1}})), e(t_{\scriptscriptstyle k_{z+1}})), \notag
		\end{align}
	we get that:
	\begin{align*}
	&\|y-x\| = \|\hat{e}(s; \bar{u}(s; e(t_{i+1})), e(t_{i+1})) \notag \\
	&\hspace{10mm} - \hat{e}(s; \bar{u}(s; e(t_{i})), e(t_{i})) \in E_{s-t_i} \| \le \rho_i(h).
	\end{align*}
	Furthermore, by employing Lemma \ref{lemma:x_y_proof} for $s \in [t_{\scriptscriptstyle k_{z+1}}, t_{\scriptscriptstyle k_{z}}+T_{\scriptscriptstyle z}]$ and the same $x, y$ as previously
	we get that $ y = \hat{e}_i(s; \bar{u}(s; e_i(t_{\scriptscriptstyle k_{z+1}})), e(t_{\scriptscriptstyle k_{z+1}})) \in E^i_{s-t_{\scriptscriptstyle k_{z+1}}}$, which according to Property 1, implies that $e_i(s; \bar{u}_i(s; e_i(t_{\scriptscriptstyle k_{z+1}})), e_i(t_{\scriptscriptstyle k_{z+1}})) \in E_i$.  Thus, Statement 3 holds. Hence, the feasibility at time $t_{\scriptscriptstyle k_{z}}$ implies feasibility at time $t_{\scriptscriptstyle k_{z+1}}$. Therefore, if the ROCP \eqref{eq:mpc_minimazation} - \eqref{eq:mpc_terminal_set} is feasible at time $t_{\scriptscriptstyle k_{z}}$, i.e., it remains feasible for every $t \in [t_k, t_k + T]$.
	
	\emph{Convergence Analysis}: The second part involves proving convergence of the state $e_i$ to the terminal set $\mathcal{E}_i$. In order to prove this, it must be shown that a proper value function is decreasing along the solution trajectories starting at a sampling time $t_i$. Consider the optimal value function $J_i^\star(e_i(t_{\scriptscriptstyle k_z}))$, as is given in \eqref{eq:J_star}, to be a Lyapunov-like function. Consider also the cost of the feasible control input, indicated by:
	\begin{equation} \label{eq:J_bar}
	\bar{J}_i(e_i(t_{\scriptscriptstyle k_{z+1}})) \triangleq \bar{J}_i(e_i(t_{\scriptscriptstyle k_{z+1}}), \bar{u}_i(\cdot; e_i(t_{\scriptscriptstyle k_{z+1}}))),
	\end{equation}
	where $t_{\scriptscriptstyle k_{z+1}} = t_{\scriptscriptstyle k_z} + h$. Define:
	\begin{subequations}
	\begin{align}
	&\hspace{-4mm} \bar{u}_1(s) \triangleq \bar{u}_i(s; e_i(t_{\scriptscriptstyle k_{z+1}})), \label{eq:u_1} \\
	&\hspace{-3mm} \bar{e}_1(s)  \triangleq \bar{e}_i(s; u_1(s), e_i(t_{\scriptscriptstyle k_{z+1}})), s \in [t_{\scriptscriptstyle k_{z+1}}, t_{\scriptscriptstyle k_{z}}+T], \label{eq:e_1}
	\end{align}
	\end{subequations}
	where $\bar{e}_1(s)$ stands for the predicted state $e_i$ at time $s$, based on the measurement of the state $e_i$ at time $t_{\scriptscriptstyle k_{z+1}}$, while using the feasible control input $\bar{u}_i(s; e(t_{\scriptscriptstyle k_{z+1}}))$ from \eqref{eq:u_bar_feas}.
	Let us also define the following terms:
	\begin{align}
	\hat{u}_2(s) &\triangleq \hat{u}_i^\star(s; e_i(t_{\scriptscriptstyle k_z})), \label{eq:u_2}\\
	\hat{e}_2(s)  &\triangleq \hat{e}_i(s; \hat{u}_2(s), e_i(t_{\scriptscriptstyle k_z})), s \in [t_{\scriptscriptstyle k_{z}}, t_{\scriptscriptstyle k_{z}}+T-h]. \notag
	\end{align}
	where $\hat{e}_1(s)$ stands for the predicted state $e_i$ at time $s$, based on the measurement of the state $e_i$ at time $t_{\scriptscriptstyle k_{z}}$, while using the control input $\hat{u}_i(s; e(t_{\scriptscriptstyle k_{z}})), s \in [t_{\scriptscriptstyle k_{z}}, t_{\scriptscriptstyle k_{z}}+T-h]$ from \eqref{eq:u_bar_feas}.
	By employing \eqref{eq:mpc_minimazation}, \eqref{eq:J_star} and \eqref{eq:J_bar}, the difference between the optimal and feasible cost is given by:
	\begin{align}
	&\bar{J}(e_i(t_{\scriptscriptstyle k_{z+1}})) - J^\star(e_i(t_{\scriptscriptstyle k_z})) =  \notag \\
	& V_i(\bar{e}_1(t_{\scriptscriptstyle k_{z}}+T) + \int_{t_{\scriptscriptstyle k_{z+1}}}^{t_{\scriptscriptstyle k_{z}}+T} \Big[ F_i(\bar{e}_1(s), \bar{u}_1(s)) \Big] ds \notag \\
	&\hspace{-1mm}-V_i(\hat{e}_2(t_{\scriptscriptstyle k_z}+T-h) - \int_{t_{\scriptscriptstyle k_z}}^{t_{\scriptscriptstyle k_z}+T-h} \Big[ F_i(\hat{e}_2(s), \hat{u}_2(s)) \Big] ds \notag \\
	&=V_i(\bar{e}_1(t_{\scriptscriptstyle k_{z}}+T)) + \int_{t_{\scriptscriptstyle k_{z+1}}}^{t_{\scriptscriptstyle k_{z+1}}+T-h} \Big[ F_i(\bar{e}_1(s), \bar{u}_1(s)) \Big] ds \notag \\ 
	&+\int_{t_{\scriptscriptstyle k_z}+T-h}^{t_{\scriptscriptstyle k_z}+T} \Big[ F_i(\bar{e}_1(s), \bar{u}_1(s)) \Big] ds -V_i(\hat{e}_2(t_{\scriptscriptstyle k_z}+T - h)) \notag \\
	&\hspace{13mm}-\int_{t_{\scriptscriptstyle k_z}}^{t_{\scriptscriptstyle k_{z+1}}} \Big[ F_i(\hat{e}_2(s), \hat{u}_2(s)) \Big] ds \notag \\
	&\hspace{23mm}-\int_{t_{\scriptscriptstyle k_{z+1}}}^{t_{\scriptscriptstyle k_z}+T - h} \Big[ F_i(\hat{e}_2(s), \hat{u}_2(s)) \Big] ds. \label{eq:lyap1}
	\end{align}
	Note that, from \eqref{eq:u_bar_feas}, the following holds:
	\begin{equation} \label{eq:verify_u_bar}
	\bar{u}_i(s; e_i(t_{\scriptscriptstyle k_{z+1}})) = \hat{u}_i^\star(s; e_i(t_{\scriptscriptstyle k_z})), \forall \ s \in [t_{\scriptscriptstyle k_{z+1}}, t_{\scriptscriptstyle k_z}+T - h].
	\end{equation}
	By combining \eqref{eq:u_1}, \eqref{eq:u_2} and \eqref{eq:verify_u_bar}, we have that:
	\begin{align}
	&\bar{u}_1(s) = \hat{u}_2(s) = \bar{u}_i(s; e_i(t_{\scriptscriptstyle k_{z+1}})) = \hat{u}_i^\star(s; e_i(t_{\scriptscriptstyle k_z})), \notag \\ 
	&\hspace{25mm}\forall \ s \in [t_{\scriptscriptstyle k_{z+1}}, t_{\scriptscriptstyle k_z}+T - h], \label{eq:controllers_equality_convergence}
	\end{align}
	By applying the last result and the fact that $F_i(e,u)$ is Lipschitz, the following holds:
	\begin{align}
	&\int_{t_{\scriptscriptstyle k_{z+1}}}^{t_{\scriptscriptstyle k_z}+T - h} \Big[ F_i(\bar{e}_1(s), \bar{u}_1(s)) \Big] ds \notag \\ 
	&\hspace{30mm}-\int_{t_{\scriptscriptstyle k_{z+1}}}^{t_{\scriptscriptstyle k_z}+T - h} \Big[ F_i(\hat{e}_2(s), \hat{u}_2(s)) \Big] ds \notag \\
	&=\int_{t_{\scriptscriptstyle k_{z+1}}}^{t_{\scriptscriptstyle k_z}+T - h} \Big[ F_i(\bar{e}_1(s), \bar{u}_1(s)) - F_i(\hat{e}_2(s), \hat{u}_2(s)) \Big] ds \notag
	\end{align}
	\begin{align}
	&= \notag \int_{t_{\scriptscriptstyle k_{z+1}}}^{t_{\scriptscriptstyle k_z}+T - h} \Big[ F_i(\bar{e}_1(s), \bar{u}_i(s; e_i(t_{\scriptscriptstyle k_{z+1}}))) \notag \\
	&\hspace{30mm}-F_i(\hat{e}_2(s), \bar{u}_i(s; e_i(t_{\scriptscriptstyle k_{z+1}}))) \Big] ds \notag \\
	&\le \Bigg\| \int_{t_{\scriptscriptstyle k_{z+1}}}^{t_{\scriptscriptstyle k_z}+T - h} \Big[ F_i(\bar{e}_1(s), \bar{u}_i(\cdot)) - F_i(\hat{e}_2(s), \bar{u}_i(\cdot)) \Big] ds \Bigg\| \notag \\
	&\le \int_{t_{\scriptscriptstyle k_{z+1}}}^{t_{\scriptscriptstyle k_z}+T - h} \Big\| F_i(\bar{e}_1(s), \bar{u}_i(\cdot)) - F_i(\hat{e}_2(s), \bar{u}_i(\cdot)) \Big\| ds \notag \\
	&\le L_{\scriptscriptstyle F_i} \int_{t_{\scriptscriptstyle k_{z+1}}}^{t_{\scriptscriptstyle k_z}+T - h} \left\| \bar{e}_1(s) - \hat{e}_2(s) \right\| ds. \label{eq:F_1_F_2}
	\end{align}
	By employing the fact that $\forall s \in [t_{\scriptscriptstyle k_{z+1}}, t_{\scriptscriptstyle k_z}+T - h]$ the following holds:
	\begin{equation} \label{eq:bar_e_equals_hat_e}
	\bar{e}_i(s; \bar{u}_i(\cdot), e_i(t_{\scriptscriptstyle k_{z+1}})) = \hat{e}_i(s; \bar{u}_i(\cdot), e_i(t_{\scriptscriptstyle k_z})), 
	\end{equation}
	the term $\left\| \bar{e}_1(s) - \hat{e}_2(s) \right\|$ can be written as:
	\begin{align}
	&\left\| \bar{e}_1(s) - \hat{e}_2(s) \right\| \notag \\
	&= \|\bar{e}_i(s; \bar{u}_i(\cdot), e_i(t_{\scriptscriptstyle k_{z+1}})) - \hat{e}_i(s; \hat{u}_i(\cdot), e_i(t_{\scriptscriptstyle k_z})) \| \notag \\
	&= \Bigg\| \bar{e}_i(t_{\scriptscriptstyle k_{z+1}}; \bar{u}_i(\cdot), e_i(t_{\scriptscriptstyle k_{z+1}})) \notag \\ 
	&+ \int_{t_{\scriptscriptstyle k_{z+1}}}^{s} g_i(\bar{e}_i(s'), \hat{\bar{x}}_i(s') , \bar{u}_i(\cdot))ds' \notag \\
	&- \hat{e}_i(t_{\scriptscriptstyle k_z}; \hat{u}_i(\cdot), e_i(t_{\scriptscriptstyle k_z})) - \int_{t_{\scriptscriptstyle k_z}}^{t_{\scriptscriptstyle k_{z+1}}} g_i(\hat{e}_i(s),  \hat{\bar{x}}_i(s'), \hat{u}_i(\cdot))ds \notag \\
	&\hspace{25mm}-\int_{t_{\scriptscriptstyle k_{z+1}}}^{s} g_i(\hat{e}_i(s'), \hat{\bar{x}}_i(s'), \bar{u}_i(\cdot))ds' \Bigg\| \notag \\
	&\le \Bigg\| e_i(t_{\scriptscriptstyle k_{z+1}}) -e_i(t_{\scriptscriptstyle k_z}) - \int_{t_{\scriptscriptstyle k_z}}^{t_{\scriptscriptstyle k_{z+1}}} g_i(\hat{e}_i(s'), \hat{\bar{x}}_i(s'), \hat{u}_i(\cdot))ds' \Bigg\| \notag \\ 
	&+\Bigg\| \int_{t_{\scriptscriptstyle k_{z+1}}}^{s} g_i(\bar{e}_i(s'), \hat{\bar{x}}_i(\cdot), \bar{u}_i(\cdot))ds' \notag \\ 
	&\hspace{20mm} -\int_{t_{\scriptscriptstyle k_{z+1}}}^{s} g_i(\hat{e}_i(s'), \hat{\bar{x}}_i(\cdot), \bar{u}_i(\cdot))ds' \Bigg\| \notag \\
	&= \Bigg\| e_i(t_{\scriptscriptstyle k_{z+1}}) -e_i(t_{\scriptscriptstyle k_z}) - \int_{t_{\scriptscriptstyle k_z}}^{t_{\scriptscriptstyle k_{z+1}}} \frac{d}{dt} \left[ \hat{e}_i(s; \hat{u}_i(\cdot), e_i(t_{\scriptscriptstyle k_z}))\right] ds \Bigg\| \notag \\
	&+\Bigg\| \int_{t_{\scriptscriptstyle k_{z+1}}}^{s} \frac{d}{dt} \left[ \bar{e}_i(s'; \bar{u}_i(\cdot), e_i(t_{\scriptscriptstyle k_{z+1}}))\right]ds' \notag \\ 
	&\hspace{25mm} -\int_{t_{\scriptscriptstyle k_{z+1}}}^{s} \frac{d}{dt} \left[ \hat{e}_i(s'; \bar{u}_i(\cdot), e_i(t_{\scriptscriptstyle k_z}))\right]ds' \Bigg\| \notag \\
	&= \big\| e_i(t_{\scriptscriptstyle k_{z+1}}) -e_i(t_{\scriptscriptstyle k_z}) -  \hat{e}_i(t_{\scriptscriptstyle k_{z+1}}; \hat{u}_i(\cdot), e_i(t_{\scriptscriptstyle k_z}))  \notag \\
	&\hspace{40mm}+\hat{e}_i(t_{\scriptscriptstyle k_z}; \hat{u}_i(\cdot), e_i(t_{\scriptscriptstyle k_z})) \big\| \notag \\ 
	&+\big\| \bar{e}_i(s; \bar{u}_i(\cdot), e_i(t_{\scriptscriptstyle k_{z+1}}))-\bar{e}_i(t_{i+1}; \bar{u}_i(\cdot), e_i(t_{\scriptscriptstyle k_{z+1}})) \notag \\
	&-  \hat{e}_i(s; \bar{u}_i(\cdot), e_i(t_{\scriptscriptstyle k_z}))+\hat{e}_i(t_{\scriptscriptstyle k_{z+1}}; \bar{u}_i(\cdot), e_i(t_{\scriptscriptstyle k_z})) \big\| \notag \\
	&= \left\| e_i(t_{\scriptscriptstyle k_{z+1}}) -e_i(t_{\scriptscriptstyle k_z}) -  \hat{e}_i(t_{\scriptscriptstyle k_{z+1}}; \hat{u}_i(\cdot), e_i(t_{\scriptscriptstyle k_z})) +e_i(t_{\scriptscriptstyle k_z}) \right\| \notag \\ 
	&= \left\| e_i(t_{\scriptscriptstyle k_{z+1}})-  \hat{e}_i(t_{\scriptscriptstyle k_{z+1}}; \hat{u}_i(\cdot), e_i(t_{\scriptscriptstyle k_z}))\right\|, \notag
	\end{align}
	which, by employing Lemma \ref{lemma:e-hat_e}, leads to:
	\begin{equation*}
	\left\| \bar{e}_1(s) - \hat{e}_2(s) \right\| \le \rho_i(t_{\scriptscriptstyle k_{z+1}}-t_{\scriptscriptstyle k_z}) = \rho_i(h).
	\end{equation*}
	By combining the last result with \eqref{eq:F_1_F_2} we get:
	\begin{align}
	&\int_{t_{\scriptscriptstyle k_{z+1}}}^{t_{\scriptscriptstyle k_z}+T - h} \Big[ F_i(\bar{e}_1(s), \bar{u}_1(s)) \Big] ds \notag \\ 
	&\hspace{27mm}-\int_{t_{\scriptscriptstyle k_{z+1}}}^{t_{\scriptscriptstyle k_z}+T - h} \Big[ F_i(\hat{e}_2(s), \hat{u}_2(s)) \Big] ds \notag \\
	&\le L_{\scriptscriptstyle F_i} \int_{t_{\scriptscriptstyle k_{z+1}}}^{t_{\scriptscriptstyle k_z}+T - h} \rho_i(h) ds =  (T - 2h) \rho_i(h) L_{\scriptscriptstyle F_i} . \label{eq:F_1_F_23}
	\end{align}
	By combining the last result with \eqref{eq:F_1_F_2}, \eqref{eq:lyap1} becomes:
	\begin{align}
	&\bar{J}(e_i(t_{\scriptscriptstyle k_{z+1}})) - J^\star(e_i(t_{\scriptscriptstyle k_z})) \le (T -2 h) \rho_i(h) L_{\scriptscriptstyle F_i} \notag \\
	&\hspace{13mm}+V_i(\bar{e}_1(t_{\scriptscriptstyle k_{z}}+T)) - V_i(\hat{e}_2(t_{\scriptscriptstyle k_z}+T-h))\notag \\ 
	&\hspace{20mm}+\int_{t_{\scriptscriptstyle k_z}+T-h}^{t_{\scriptscriptstyle k_{z}}+T} \Big[ F_i(\bar{e}_1(s), \bar{u}_1(s)) \Big] ds \notag \\ 
	&\hspace{30mm}-\int_{t_{\scriptscriptstyle k_z}}^{t_{\scriptscriptstyle k_{z+1}}} \Big[ F_i(\hat{e}_2(s), \hat{u}_2(s)) \Big] ds \label{eq:lyap_2}
	\end{align}
	By integrating inequality \eqref{eq:admiss_controller} from $t_{\scriptscriptstyle k_z}+T - h$ to $t_{\scriptscriptstyle k_{z}}+T$ and we get the following:
	\begin{align}
	&\hspace{-3mm} \int_{t_{\scriptscriptstyle k_z}+T - h}^{t_{\scriptscriptstyle k_{z}}+ T} \Big[ \frac{\partial V}{\partial{e}} \cdot g_i(\bar{e}_1(s), \hat{\bar{x}}_i(s), \bar{u}_1(s)) \notag \\ 
	&\hspace{30mm}+ F_i(\bar{e}_1(s), \bar{u}_1(s)) \Big] ds \le 0 \notag \\ 
	\Leftrightarrow & V_i(\bar{e}_1(t_{\scriptscriptstyle k_{z}}+T) - V_i(\bar{e}_1(t_{\scriptscriptstyle k_z}+T-h))   \notag \\
	&\hspace{22mm} + \int_{t_{\scriptscriptstyle k_z}+ T - h}^{t_{\scriptscriptstyle k_{z}}+ T} \Big[F_i(\bar{e}_1(s), \bar{u}_1(s)) \Big] ds \le 0, \notag
	\end{align}
	which by adding and subtracting the term $V_i(\hat{e}_2(t_{\scriptscriptstyle k_z}+T-h))$ becomes:
		\begin{align} 
		& V_i(\bar{e}_1(t_{\scriptscriptstyle k_{z}}+T) - V_i(\hat{e}_2(t_{\scriptscriptstyle k_z}+T-h)) \notag \\
		&\hspace{10mm} + \int_{t_{\scriptscriptstyle k_z}+ T - h}^{t_{\scriptscriptstyle k_{z}}+T} \Big[F_i(\bar{e}_1(s), \bar{u}_1(s)) \Big] ds  \le \notag \\ 
		&\hspace{15mm} V_i(\bar{e}_1(t_{\scriptscriptstyle k_z}+T-h))-V_i(\hat{e}_2(t_{\scriptscriptstyle k_z}+T-h)). \notag 
		\end{align}
	By employing the property $y \le |y|, \forall y \in \mathbb{R}$, we get:
	\begin{align}
		& V_i(\bar{e}_1(t_{\scriptscriptstyle k_{z}}+T) - V_i(\hat{e}_2(t_{\scriptscriptstyle k_z}+T-h)) \notag \\
		&\hspace{4mm} + \int_{t_{\scriptscriptstyle k_z}+ T - h}^{t_{\scriptscriptstyle k_{z}}+T} \Big[F_i(\bar{e}_1(s), \bar{u}_1(s)) \Big] ds  \le \notag \\ 
		&\hspace{4mm} \left|V_i(\bar{e}_1(t_{\scriptscriptstyle k_z}+T-h)) - V_i(\hat{e}_2(t_{\scriptscriptstyle k_z}+T-h)) \right|. \label{eq:V_e_1-V_e_2}
	\end{align}
	By employing Lemma \ref{eq:lemma_L_v}, we have that:
	\begin{align*}
	&\left|V_i(\bar{e}_1(t_{\scriptscriptstyle k_z}+T-h)) - V_i(\hat{e}_2(t_{\scriptscriptstyle k_z}+T-h)) \right| \le \notag \\ 
	&\hspace{18mm} L_{\scriptscriptstyle V_i} \|\bar{e}_1(t_{\scriptscriptstyle k_z}+T-h) - \hat{e}_2(t_{\scriptscriptstyle k_z}+T-h) \|, 
	\end{align*}
	which by employing Lemma \ref{lemma:bounded_trajectories} and \eqref{eq:controllers_equality_convergence}, becomes:
	\begin{align*}
	&\left|V_i(\bar{e}_1(t_{\scriptscriptstyle k_z}+T-h)) - V_i(\hat{e}_2(t_{\scriptscriptstyle k_z}+T-h)) \right| \le \notag \\ 
	&\hspace{18mm} L_{\scriptscriptstyle V_i} \rho_i(t_{\scriptscriptstyle k_{z+1}}-t_{\scriptscriptstyle k_z}) =  \rho_i(h) L_{\scriptscriptstyle V_i}.
	\end{align*}
	By combining the last result with \eqref{eq:V_e_1-V_e_2}, we get:
	\begin{align}
	& V_i(\bar{e}_1(t_{\scriptscriptstyle k_{z}}+T) - V_i(\hat{e}_2(t_{\scriptscriptstyle k_z}+T-h)) \notag \\
	&\hspace{4mm} + \int_{t_{\scriptscriptstyle k_z}+ T - h}^{t_{\scriptscriptstyle k_{z}}+T} \Big[F_i(\bar{e}_1(s), \bar{u}_1(s)) \Big] ds  \le \rho_i(h) L_{\scriptscriptstyle V_i}. \notag 
	\end{align}
	The last inequality along with \eqref{eq:lyap_2} leads to:
	\begin{align}
	&\bar{J}(e(t_{\scriptscriptstyle k_{z+1}})) - J^\star(e(t_{\scriptscriptstyle k_z})) \le (T -2 h) \rho_i(h) L_{\scriptscriptstyle F_i} +\rho_i(h) L_{\scriptscriptstyle V_i} \notag \\ 
	&\hspace{25mm}-\int_{t_{\scriptscriptstyle k_z}}^{t_{\scriptscriptstyle k_{z+1}}} \Big[ F_i(\hat{e}_2(s), \hat{u}_2(s)) \Big] ds. \label{eq:lyap_3}
	\end{align}
	By substituting $e_i = \hat{e}_2(s), u_i = \hat{u}_2(s)$ in \eqref{eq:F_lower_bound} we get $F_i(\hat{e}_2(s), \hat{u}_2(s)) \ge m_i \|\hat{e}_2(s)\|^2$, or equivalently:
	\begin{align*}
	&\int_{t_{\scriptscriptstyle k_z}}^{t_{\scriptscriptstyle k_{z+1}}} \Big[ F_i(\hat{e}_2(s), \hat{u}_2(s)) \Big] ds \ge \underline{m}_i \int_{t_{\scriptscriptstyle k_z}}^{t_{\scriptscriptstyle k_{z+1}}} \|\hat{e}_2(s)\|^2  ds \notag \\
	\Leftrightarrow &-\int_{t_{\scriptscriptstyle k_z}}^{t_{\scriptscriptstyle k_{z+1}}} \Big[ F(\hat{e}_2(s), \hat{u}_2(s)) \Big] ds \le -\underline{m}_i \int_{t_{\scriptscriptstyle k_z}}^{t_{\scriptscriptstyle k_{z+1}}} \|\hat{e}_2(s)\|^2  ds.
	\end{align*}
	By combining the last result with \eqref{eq:lyap_3}, we get:
	\begin{align}
	&\bar{J}_i(e(t_{\scriptscriptstyle k_{z+1}})) - J^\star_i(e(t_{\scriptscriptstyle k_z})) \le (T -2 h) \rho_i(h) L_{\scriptscriptstyle F_i} +\rho_i(h) L_{\scriptscriptstyle V_i} \notag \\ 
	&\hspace{35mm}-\underline{m}_i \int_{t_{\scriptscriptstyle k_z}}^{t_{\scriptscriptstyle k_{z+1}}} \|\hat{e}_2(s)\|^2  ds \label{eq:lyap_4}
	\end{align}
	It is clear that the optimal solution at time $t_{\scriptscriptstyle k_{z+1}}$ i.e., $J^\star(e_i(t_{\scriptscriptstyle k_{z+1}}))$  will not be worse than the feasible one at the same time i.e. $\bar{J}(e_i(t_{\scriptscriptstyle k_{z+1}}))$. Therefore, \eqref{eq:lyap_4} implies:
	\begin{align*}
	&J^\star_i(e_i(t_{\scriptscriptstyle k_{z+1}})) - J^\star_i(e_i(t_{\scriptscriptstyle k_z})) \le (T -2 h) \rho_i(h) L_{\scriptscriptstyle F_i} +\rho_i(h) L_{\scriptscriptstyle V_i} \notag \\ 
	&\hspace{35mm}-\underline{m}_i \int_{t_{\scriptscriptstyle k_z}}^{t_{\scriptscriptstyle k_{z+1}}} \|\hat{e}_2(s)\|^2  ds,
	\end{align*}
	which is equivalent to:
	\begin{align*}
	&J^\star_i(e_i(t_{\scriptscriptstyle k_{z+1}})) - J^\star_i(e_i(t_{\scriptscriptstyle k_z})) \le \notag \\
	&\hspace{10mm}-m_i \int_{t_{\scriptscriptstyle k_z}}^{t_{\scriptscriptstyle k_{z+1}}} \|\hat{e}_i(s; \hat{u}_i^\star(s; e_i(t_{\scriptscriptstyle k_z})), e_i(t_{\scriptscriptstyle k_z}))\|^2  ds \notag \\
	&\hspace{20mm}+(T -2 h) \rho_i(h) L_{\scriptscriptstyle F_i} +\rho_i(h) L_{\scriptscriptstyle V_i}.
	\end{align*}
	which, according to \eqref{eq:lyapunov_iss}, is in the form:
	\begin{align}
	&J^\star_i(e_i(t_{\scriptscriptstyle k_{z+1}})) - J^\star_i(e_i(t_{\scriptscriptstyle k_z})) \le -\alpha(\|e_i\|)+\sigma(\|\bar{x}_i\|) \label{eq:lyap_7}
	\end{align}
	Thus, the optimal cost $J$ has been proven to be decreasing, and according to Definition \ref{def:ISS} and Theorem \ref{def:ISS_lyapunov}, the closed loop system is ISS stable. Therefore, the closed loop trajectories converges to the closed set $\mathcal{E}_{i}$.
\end{proof}

%%%%%%%%%%%%%%%%%%%%%%%%%%%%%%%%%%%%%%%%%%%%%%%%%%%%%%%%%%%%%%%%%%%%%%%%%%%%%%%%
\bibliography{references}

\begin{thebibliography}{10}

\bibitem{loizou_2004}
S.~Loizou and K.~Kyriakopoulos, ``{A}utomatic {S}ynthesis of {M}ulti-{A}gent
  {M}otion {T}asks {B}ased on {LTL} {S}pecifications,'' {\em 43rd IEEE
  Conference on Decision and Control (CDC)}, vol.~1, pp.~153--158, 2004.

\bibitem{muray_2010_receding}
T.~Wongpiromsarn, U.~Topcu, and R.~Murray, ``{R}eceding {H}orizon {C}ontrol for
  {T}emporal {L}ogic {S}pecifications,'' {\em 13th ACM International Conference
  on Hybrid Systems: Computation and Control (HSCC)}, pp.~101--110, 2010.

\bibitem{guo_2015_reconfiguration}
M.~Guo and D.~Dimarogonas, ``{M}ulti-{A}gent {P}lan {R}econfiguration {U}nder
  {L}ocal {LTL} {S}pecifications,'' {\em The International Journal of Robotics
  Research}, vol.~34, no.~2, pp.~218--235, 2015.

\bibitem{frazzoli_vehicle_routing}
S.~Karaman and E.~Frazzoli, ``{L}inear {T}emporal {L}ogic {V}ehicle {R}outing
  with {A}pplications to {M}ulti-{UAV} {M}ission {P}lanning,'' {\em
  International Journal of Robust and Nonlinear Control}, vol.~21, no.~12,
  pp.~1372--1395, 2011.

\bibitem{zavlanos_2016_multi-agent_LTL}
Y.~Kantaros and M.~Zavlanos, ``{A} {D}istributed {LTL}-{B}ased {A}pproach for
  {I}ntermittent {C}ommunication in {M}obile {R}obot {N}etworks,'' {\em
  American Control Conference (ACC)}, pp.~5557--5562, 2016.

\bibitem{fainekos_girard_2009_temporal}
G.~Fainekos, A.~Girard, K.~Hadas, and G.~Pappas, ``Temporal logic motion
  planning for {D}ynamic {R}obots,'' {\em Automatica}, vol.~45, no.~2,
  pp.~343--352, 2009.

\bibitem{belta_2010_product_system}
M.~Kloetzer and C.~Belta, ``{A}utomatic {D}eployment of {D}istributed {T}eams
  of {R}obots {F}rom {T}emporal {M}otion {S}pecifications,'' {\em IEEE
  Transactions on Robotics}, vol.~26, no.~1, pp.~48--61, 2010.

\bibitem{belta_cdc_reduced_communication}
M.~Kloetzer, X.~C. Ding, and C.~Belta, ``{M}ulti-{R}obot {D}eployment from
  {LTL} {S}pecifications with {R}educed {C}ommunication,'' {\em 50th IEEE
  Conference on Decision and Control (CDC)}, pp.~4867--4872, 2011.

\bibitem{liu_MTL}
J.~Liu and P.~Prabhakar, ``{S}witching {C}ontrol of {D}ynamical {S}ystems from
  {M}etric {T}emporal {L}ogic {S}pecifications,'' {\em IEEE International
  Conference on Robotics and Automation (ICRA)}, pp.~5333--5338, 2014.

\bibitem{murray_2015_stl}
V.~Raman, A.~Donz{\'e}, D.~Sadigh, R.~Murray, and S.~Seshia, ``{R}eactive
  {S}ynthesis from {S}ignal {T}emporal {L}ogic {S}pecifications,'' {\em 18th
  International Conference on Hybrid Systems: Computation and Control (HSCC)},
  pp.~239--248, 2015.

\bibitem{topcu_2015}
J.~Fu and U.~Topcu, ``{C}omputational {M}ethods for {S}tochastic {C}ontrol with
  {M}etric {I}nterval {T}emporal {L}ogic {S}pecifications,'' {\em 54th IEEE
  Conference on Decision and Control (CDC)}, pp.~7440--7447, 2015.

\bibitem{fainekos_mtl_2015_robot}
B.~Hoxha and G.~Fainekos, ``{P}lanning in {D}ynamic {E}nvironments {T}hrough
  {T}emporal {L}ogic {M}onitoring,'' 2016.

\bibitem{baras_MTL_2016_new}
Y.~Zhou, D.~Maity, and J.~S. Baras, ``{T}imed {A}utomata {A}pproach for
  {M}otion {P}lanning {U}sing {M}etric {I}nterval {T}emporal {L}ogic,'' {\em
  European Control Conference (ECC)}, 2016.

\bibitem{frazzoli_MTL}
S.~Karaman and E.~Frazzoli, ``{V}ehicle {R}outing {P}roblem with {M}etric
  {T}emporal {L}ogic {S}pecifications,'' {\em 47th IEEE Conference on Decision
  and Control (CDC 2008)}, pp.~3953--3958, 2008.

\bibitem{alex_2016_acc}
A.~Nikou, J.~Tumova, and D.~Dimarogonas, ``{C}ooperative {T}ask {P}lanning of
  {M}ulti-{A}gent {S}ystems {U}nder {T}imed {T}emporal {S}pecifications,'' {\em
  American Control Conference (ACC)}, pp.~13--19, 2016.

\bibitem{alur_2000_discrete_abstractions}
R.~Alur, T.~Henzinger, G.~Lafferriere, and G.~Pappas, ``{D}iscrete
  {A}bstractions of {H}ybrid {S}ystems,'' {\em Proceedings of the IEEE},
  vol.~88, no.~7, pp.~971--984, 2000.

\bibitem{belta_2004_abstraction}
C.~Belta and V.~Kumar, ``{A}bstraction and {C}ontrol for {G}roups of
  {R}obots,'' {\em IEEE Transactions on Robotics}, vol.~20, no.~5,
  pp.~865--875, 2004.

\bibitem{helwa2014block}
M.~Helwa and P.~Caines, ``{I}n-{B}lock {C}ontrollability of {A}ffine {S}ystems
  on {P}olytopes,'' {\em 53rd IEEE Conference on Decision and Control (CDC)},
  pp.~3936--3942, 2014.

\bibitem{habets_2006_reachability}
L.~Habets, P.~Collins, and V.~Schuppen, ``{R}eachability and {C}ontrol
  {S}ynthesis for {P}iecewise-{A}ffine {H}ybrid {S}ystems on {S}implices,''
  {\em IEEE Transactions on Automatic Control}, vol.~51, no.~6, pp.~938--948,
  2006.

\bibitem{zamani_2012_symbolic}
M.~Zamani, G.~Pola, M.~Mazo, and P.~Tabuada, ``{S}ymbolic {M}odels for
  {N}onlinear {C}ontrol {S}ystems without {S}tability {A}ssumptions,'' {\em
  IEEE Transactions on Automatic Control}, vol.~57, no.~7, 2012.

\bibitem{liu_2016_abstraction}
J.~Liu and N.~Ozay, ``{F}inite {A}bstractions {W}ith {R}obustness {M}argins for
  {T}emporal {L}ogic-{B}ased {C}ontrol {S}ynthesis,'' {\em Nonlinear Analysis:
  Hybrid Systems}, vol.~22, pp.~1--15, 2016.

\bibitem{abate_2014_finite_abstractions}
M.~Zamani, M.~Mazo, and A.~Abate, ``{F}inite {A}bstractions of {N}etworked
  {C}ontrol {S}ystems,'' {\em 53rd IEEE Conference on Decision and Control
  (CDC)}, pp.~95--100, 2014.

\bibitem{pola_2016_symbolic}
G.~Pola, P.~Pepe, and M.~D.~D. Benedetto, ``{S}ymbolic {M}odels for {N}etworks
  of {C}ontrol {S}ystems,'' {\em IEEE Transactions on Automatic Control}, 2016.

\bibitem{boskos_cdc_2015}
D.~Boskos and D.~Dimarogonas, ``{D}ecentralized {A}bstractions {F}or
  {M}ulti-{A}gent {S}ystems {U}nder {C}oupled {C}onstraints,'' {\em 54th IEEE
  Conference on Decision and Control (CDC)}, pp.~282--287, 2015.

\bibitem{alex_acc_8_pages}
A.~Nikou, D.~Boskos, J.~Tumova, and D.~V. Dimarogonas, ``{C}ooperative
  {P}lanning for {C}oupled {M}ulti-{A}gent {S}ystems under {T}imed {T}emporal
  {S}pecifications,'' {\em Americal Control Conference 2017}.

\bibitem{khalil_nonlinear_systems}
H.~Khalil, {\em {N}oninear {S}ystems}.
\newblock Prentice-Hall, New Jersey, 1996.

\bibitem{sontag_2008_ISS}
E.~D. Sontag, ``{I}nput to {S}tate {S}tability: {B}asic {C}oncepts and
  {R}esults,'' pp.~163--220, 2008.

\bibitem{sontag_1995_ISS_proofs}
E.~D. Sontag and Y.~Wang, ``{O}n {C}haracterizations of the {I}nput-to-{S}tate
  {S}tability {P}roperty,'' {\em Systems and Control Letters}, vol.~24, no.~5,
  pp.~351--359, 1995.

\bibitem{alur1994}
R.~Alur and D.~Dill, ``{A} {T}heory of {T}imed {A}utomata,'' {\em Theoretical
  Computer Science}, vol.~126, no.~2, pp.~183--235, 1994.

\bibitem{pavithra_expressiveness}
D.~D. Souza and P.~Prabhakar, ``{O}n the {E}xpressiveness of {MTL} in the
  {P}ointwise and {C}ontinuous {S}emantics,'' {\em International Journal on
  Software Tools for Technology Transfer}, vol.~9, no.~1, pp.~1--4, 2007.

\bibitem{quaknine_decidability}
J.~Ouaknine and J.~Worrell, ``{O}n the {D}ecidability of {M}etric {T}emporal
  {L}ogic,'' {\em 20th Annual IEEE Symposium on Logic in Computer Science
  (LICS)}, pp.~188--197, 2005.

\bibitem{alur_mitl}
R.~Alur, T.~Feder, and T.~A. Henzinger, ``{T}he {B}enefits of {R}elaxing
  {P}unctuality,'' {\em Journal of the ACM (JACM)}, vol.~43, no.~1,
  pp.~116--146, 1996.

\bibitem{reynold}
M.~Reynolds, ``{M}etric {T}emporal {L}ogics and {D}eterministic {T}imed
  {A}utomata,'' 2010.

\bibitem{bouyer_phd}
P.~Bouyer, ``{F}rom {Q}ualitative to {Q}uantitative {A}nalysis of {T}imed
  {S}ystems,'' {\em M{\'e}moire d’habilitation, Universit{\'e} Paris},
  vol.~7, pp.~135--175, 2009.

\bibitem{alex_acc_2017}
A.~Nikou, D.~Boskos, J.~Tumova, and D.~V. Dimarogonas, ``{C}ooperative
  {P}lanning for {C}oupled {M}ulti-{A}gent {S}ystems under {T}imed {T}emporal
  {S}pecifications,'' {\em http://arxiv.org/pdf/1603.05097v2.pdf}.

\bibitem{tripakis_tba}
S.~Tripakis, ``{C}hecking {T}imed {B}uchi {A}utomata {E}mptiness on
  {S}imulation {G}raphs,'' {\em ACM Transactions on Computational Logic
  (TOCL)}, vol.~10, no.~3, 2009.

\bibitem{maler_MITL_TA}
O.~Maler, D.~Nickovic, and A.~Pnueli, ``{F}rom {MITL} to {T}imed {A}utomata,''
  {\em International Conference on Formal Modeling and Analysis of Timed
  Systems}, pp.~274--289, 2006.

\bibitem{nickovic_timed_aut}
D.~Ni{\v{c}}kovi{\'c} and N.~Piterman, ``{F}rom {MTL} to {D}eterministic
  {T}imed {A}utomata,'' {\em Formal Modeling and Analysis of Timed Systems},
  2010.

\bibitem{mesbahi_2010_graph_theory}
M.~Mesbahi and M.~Egerstedt, {\em {G}raph {T}heoretic {M}ethods in {M}ultiagent
  {N}etworks}.
\newblock Princeton University Press, 2010.

\bibitem{andreasson_2014_smart_building}
M.~Andreasson, D.~V. Dimarogonas, H.~Sandberg, and K.~H. Johansson,
  ``{D}istributed {C}ontrol of {N}etworked {D}ynamical {S}ystems: {S}tatic
  {F}eedback, {I}ntegral {A}ction and {C}onsensus,'' {\em IEEE Transactions on
  Automatic Control}, vol.~59, no.~7, pp.~1750--1764, 2014.

\bibitem{morrari_npmpc}
K.~S. de~Oliveira and M.~Morari, ``{C}ontractive {M}odel {P}redictive {C}ontrol
  for {C}onstrained {N}onlinear {S}ystems,'' {\em IEEE Transactions on
  Automatic Control}, vol.~45, no.~6, pp.~1053--1071, 2000.

\bibitem{cannon_2001_nmpc}
B.~Kouvaritakis and M.~Cannon, {\em {N}onlinear {P}redictive {C}ontrol:
  {T}heory and {P}ractice}.
\newblock No.~61, Iet, 2001.

\bibitem{frank_2003_nmpc_bible}
R.~Findeisen, L.~Imsland, F.~Allgower, and B.~A. Foss, ``{S}tate and {O}utput
  {F}eedback {N}onlinear {M}odel {P}redictive {C}ontrol: {A}n {O}verview,''
  {\em European Journal of Control}, vol.~9, no.~2-3, pp.~190--206, 2003.

\bibitem{frank_1998_quasi_infinite}
H.~Chen and F.~Allg{\"o}wer, ``{A} {Q}uasi-{I}nfinite {H}orizon {N}onlinear
  {M}odel {P}redictive {C}ontrol {S}cheme with {G}uaranteed {S}tability,'' {\em
  Automatica}, vol.~34, no.~10, pp.~1205--1217, 1998.

\bibitem{frank_2003_towards_sampled-data-nmpc}
R.~Findeisen, L.~Imsland, F.~Allg{\"o}wer, and B.~Foss, ``{T}owards a
  {S}ampled-{D}ata {T}heory for {N}onlinear {M}odel {P}redictive {C}ontrol,''
  {\em New Trends in Nonlinear Dynamics and Control and their Applications},
  pp.~295--311, 2003.

\bibitem{grune_2011_nonlinear_mpc}
L.~Gr{\"u}ne and J.~Pannek, {\em {N}onlinear {M}odel {P}redictive {C}ontrol}.
\newblock Springer London, 2011.

\bibitem{camacho_2007_nmpc}
E.~Camacho and C.~Bordons, ``{N}onlinear {M}odel {P}redictive {C}ontrol: {A}n
  {I}ntroductory {R}eview,'' pp.~1--16, 2007.

\bibitem{borrelli_2013_nmpc}
J.~Frasch, A.~Gray, M.~Zanon, H.~Ferreau, S.~Sager, F.~Borrelli, and M.~Diehl,
  ``{A}n {A}uto-{G}enerated {N}onlinear {MPC} {A}lgorithm for {R}eal-{T}ime
  {O}bstacle {C}voidance of {G}round {V}ehicles,'' {\em European Control
  Conference (ECC)}, 2013.

\bibitem{fontes_2001_nmpc_stability}
F.~Fontes, ``{A} {G}eneral {F}ramework to {D}esign {S}tabilizing {N}onlinear
  {M}odel {P}redictive {C}ontrollers,'' {\em Systems and Control Letters},
  vol.~42, no.~2, pp.~127--143, 2001.

\bibitem{camacho_2002_input_to_state}
D.~Marruedo, T.~Alamo, and E.~Camacho, ``{I}nput-to-{S}tate {S}table {MPC} for
  {C}onstrained {D}iscrete-{T}ime {N}onlinear {S}ystems with {B}ounded
  {A}dditive {U}ncertainties,'' {\em IEEE Conference on Decision and Control},
  vol.~4, pp.~4619--4624, 2002.

\bibitem{alina_ecc_2011}
A.~Eqtami, D.~Dimarogonas, and K.~Kyriakopoulos, ``{N}ovel {E}vent-{T}riggered
  {S}trategies for {M}odel {P}redictive {C}ontrollers,'' {\em IEEE Conference
  on Decision and Control}, pp.~3392--3397, 2011.

\bibitem{mayne_2000_nmpc}
D.~Mayne, J.~Rawlings, C.~Rao, and P.~Scokaert, ``{C}onstrained {M}odel
  {P}redictive {C}ontrol: {S}tability and {O}ptimality,'' {\em Automatica},
  vol.~36, no.~6, pp.~789--814, 2000.

\bibitem{katoen}
C.~Baier, J.~Katoen, and K.~G. Larsen, {\em Principles of Model Checking}.
\newblock MIT Press, 2008.

\end{thebibliography}
\bibliographystyle{ieeetr}

%%%%%%%%%%%%%%%%%%%%%%%%%%%%%%%%%%%%%%%%%%%% Balance Command %%%%%%%%%%%%%%%%%%%%%%%%%%%%%%%%%%%%%%%%%%%%%%
\addtolength{\textheight}{-12cm}

\end{document}